\newcommand{\Next}{\LTLnext}
\newcommand{\Globally}{\LTLsquare}
\newcommand{\Eventually}{\LTLdiamond}
\newcommand{\Until}{\LTLuntil}
\newcommand{\true}{\mathit{true}}
\newcommand{\ie}{i.e.\@\xspace}
\newcommand{\eg}{e.g.\@\xspace}
\DeclareMathOperator{\pc}{||}
\DeclarePairedDelimiter\myVec{\langle}{\rangle}
\newcommand{\constraintSystem}[3]{\mathcal{C}_{#1,#2,#3}}
\newcommand{\gb}[1]{\psi_{#1}}
\newcommand{\GB}[1]{\Psi_{#1}}
\newcommand{\relGB}[1]{\Psi^\mathcal{R}_{#1}}
\newcommand{\relGBPrime}[1]{\Psi'_{#1,\mathcal{R}}}
\newcommand{\guarTrans}[1]{\mathcal{T}^G_{#1}}
\newcommand{\propositions}[1]{\operatorname{prop}(#1)}
\newcommand{\extend}[2]{\operatorname{extend}(#1,#2)}
\newcommand{\restrict}[2]{\operatorname{restrict}(#1,#2)}
\newcommand{\pref}[2]{#1_{.. #2}}
\newcommand{\simRelStG}[1]{R^{s\rightarrow g}_{#1}}
\newcommand{\simRelGtS}[2]{R^{g\rightarrow s}_{#1,#2}}
\newcommand{\comp}{\mathit{comp}}
\newcommand{\validHistory}[2]{\mathcal{H}^{#1}_{#2}}
\newcommand{\inputs}[1]{{I_{#1}}}
\newcommand{\outputs}[1]{{O_{#1}}}
\newcommand{\variables}[1]{{V_{#1}}}
\newcommand{\inpFunc}{{I}}
\newcommand{\outFunc}{{O}}
\newcommand{\guarOutputs}[1]{{O^G_{#1}}}
\newcommand{\guarVariables}[1]{{V^G_{#1}}}
\newcommand{\allInputs}{\mathit{inp}}
\newcommand{\allOutputs}{\mathit{out}}
\newcommand{\associatedOutputs}[1]{\outputs{\relevantProcesses{#1}}}
\newcommand{\envOutputs}{{O_\mathit{env}}}
\newcommand{\env}{\mathit{env}}
\newcommand{\sysProc}{P^-\!}
\newcommand{\inp}[1]{\boldsymbol{#1}}
\newcommand{\out}[1]{\boldsymbol{#1}}
\newcommand{\relevantProcesses}[1]{\mathcal{R}_{#1}}
\newcommand{\relevantProcessesFunc}{\mathcal{R}}
\newcommand{\satresp}[1]{\models_{#1}}
\newcommand{\simresp}[1]{\preceq}
\newcommand{\statesStrat}[1]{T_{#1}}
\newcommand{\statesSpec}[1]{Q_{#1}}
\newcommand{\statesGuar}[1]{G_{#1}}
\newcommand{\validInput}[3]{\operatorname{valid}_{#1}(#2,#3)}
\newcommand{\transStrat}[4]{\tau^{#1}_{#2,#3,#4}}
\newcommand{\transSpec}[4]{\delta^{#1}_{#2,#3,#4}}
\newcommand{\transSpecSugar}[5]{\delta^{#1}_{#2,#3,#4,#5}}
\newcommand{\transGuar}[4]{{\tau}^{G,#1}_{#2,#3,#4}}
\newcommand{\outputStrat}[4]{o^{#1}_{#2,#4}}
\newcommand{\outputGuar}[4]{{o}^{G,#1}_{#2,#4}}
\newcommand{\reach}[3]{\lambda^{#1,\mathbb{B}}_{#2,#3}}
\newcommand{\bound}[3]{\lambda^{#1,\#}_{#2,#3}}
\newcommand{\greaterBound}[1]{\triangleright_{#1}}
\newcommand{\simStratToGuar}[3]{{\preceq}^{S \rightarrow G,#1}_{#2,#3}}
\newcommand{\simGuarToStrat}[4]{{\preceq}^{G \rightarrow S,#1,#2}_{#3,#4}}
\begin{document}
\title{Compositional Synthesis of Modular Systems (Full Version)\thanks{This work was partially supported by the German Research Foundation~(DFG) as part of the Collaborative Research Center ``Foundations of Perspicuous Software Systems'' (TRR 248, 389792660), and by the European Research Council (ERC) Grant OSARES (No. 683300).}\footnote[3]{This is an extended version of~\cite{FinalVersion}.}
}
\titlerunning{Compositional Synthesis of Modular Systems}
%
\author{Bernd Finkbeiner \and Noemi Passing}
\authorrunning{B.\ Finkbeiner and N.\ Passing}
%
\institute{CISPA Helmholtz Center for Information Security, Saarbrücken, Germany\\
\email{\{finkbeiner,noemi.passing\}@cispa.de}}
\maketitle              
\begin{abstract}
  Given the advances in reactive synthesis, it is a natural next step to consider more complex multi-process systems. Distributed synthesis, however, is not yet scalable. Compositional approaches can be a game changer. Here, the challenge is to decompose a given specification of the global system behavior into requirements on the individual processes. In this paper, we introduce a compositional synthesis algorithm that, for each process, constructs, in addition to the implementation, a certificate that captures the necessary interface between the processes. The certificates then allow for constructing separate requirements for the individual processes. By bounding the size of the certificates, we can bias the synthesis procedure towards solutions that are desirable in the sense that the assumptions between the processes are small. Our experimental results show that our approach is much faster than standard methods for distributed synthesis as long as reasonably small certificates exist.

\end{abstract}

\section{Introduction}

In the last decade, there have been breakthroughs in terms of realistic applications and practical tools for reactive synthesis, demonstrating that concentrating on \emph{what} a system should do instead of \emph{how} it should be done is feasible.
A natural next step is to consider complex multi-process systems. For distributed systems, though, there are no scalable tools that are capable of automatically synthesizing strategies from formal specifications for arbitrary system architectures.

For the scalability of verification algorithms, compositionality, \ie, breaking down the verification of a complex system into several smaller tasks over individual components, has proven to be a key technique~\cite{Compos97}. For synthesis, however, developing compositional approaches is much more challenging: In practice, an individual process can rarely guarantee the satisfaction of the specification alone. Typically, there exist input sequences that prevent a process from satisfying the specification. The other processes in the system then ensure that these sequences are not produced. Thus, a process needs information about the strategies of the other processes to be able to satisfy the specification. Hence, distributed synthesis cannot easily be broken down into tasks over the individual processes.

In this paper, we introduce a compositional synthesis algorithm addressing this problem by synthesizing additional \emph{guarantees on the behavior} of every process. These guarantees, the so-called \emph{certificates}, then provide essential information for the individual synthesis tasks: A strategy is only required to satisfy the specification if the other processes do not deviate from their guaranteed~behavior. This allows for considering a process independent of the other processes' strategies.
Our algorithm is an extension of bounded synthesis~\cite{FinkbeinerS13} that incorporates the search for certificates into the synthesis task for the strategies.

The benefits of synthesizing additional certificates are threefold. 
First, it \emph{guides the synthesis procedure}: Bounded synthesis searches for strategies up to a given size. Beyond that, our algorithm introduces a bound on the size of the certificates. Hence, it bounds the size of the interface between the processes and thus the size of the assumptions made by them. By starting with small bounds and by only increasing them if the specification is unrealizable for the~given bounds, the algorithm restricts synthesis to search for solutions with small interfaces.

Second, the certificates \emph{increase the understandability} of the synthesized solution: It is challenging to recognize the interconnections in a distributed system. The certificates capture which information a process needs about the behavior of the other processes to be able to satisfy the specification, immediately encapsulating the system's interconnections.
Furthermore, the certificates abstract from behavior that is irrelevant for the satisfaction of the specification. This allows for analyzing the strategies locally without considering the whole system's behavior.

Third, synthesizing certificates \emph{enables modularity} of the system: The strategies only depend on the certificates of the other processes, not on their particular strategies. As long as the processes do not deviate from their certificates, the parallel composition of the strategies satisfies the specification. Hence, the certificates form a contract between the processes. After defining the contract, the strategies can be exchanged safely with other ones that respect the contract. Thus, strategies can be adapted flexibly without synthesizing a solution for the whole system again if requirements that do not affect the contract change.

We introduce two representations of certificates, as LTL formulas and as labeled transition systems. We show soundness and completeness of our certifying synthesis algorithm for both of them. 
Furthermore, we present a technique for determining \emph{relevant processes} for each process. This allows us to reduce the number of certificates that a process has to consider to satisfy the specification while maintaining soundness and completeness.
Focusing on the representation of certificates as transition systems, we present an algorithm for synthesizing certificates that is based on a reduction to a SAT constraint system.

We implemented the algorithm and compared it to an extension~\cite{Baumeister17} of the~syn\-the\-sis tool BoSy~\cite{FaymonvilleFT17} to distributed systems and to a compositional synthesis algorithm based on dominant strategies~\cite{DammF14}. 
The results clearly demonstrate the advantage of synthesizing certificates: If solutions with a small interface between the processes exist, our algorithm outperforms the other synthesis tools significantly. Otherwise, the overhead of synthesizing additional guarantees is small.

\noindent
\textbf{Related Work:}
There are several approaches to compositional synthesis for mo\-no\-li\-thic systems~\cite{KupfermanPV06,FiliotJR10,KuglerS09,FinkbeinerP20,FinkbeinerGP21}.
As we are considering distributed systems, we focus on distributed synthesis algorithms. Assume-guarantee synthesis~\cite{ChatterjeeH07} is closest to our approach.
There, each process provides a guarantee on its own behavior and makes an assumption on the behavior of the other processes.
If there is a strategy for each process that satisfies the specification under the hypothesis that the other processes respect the assumption, and if its guarantee implies the assumptions of the other processes, a solution for the whole system is found.
In contrast to our approach, most assume-guarantee synthesis algorithms~\cite{ChatterjeeH07,BrenguierRS17,BloemCJK15,AlurMT15} either rely on the user to provide the assumptions or require that a strategy profile on which the strategies can synchronize is constructed prior to synthesis.

A recent extension of assume-guarantee synthesis~\cite{MajumdarMSZ20} algorithmically synthesizes assume-guarantee contracts for each process.
In contrast to our approach, the guarantees do not necessarily imply the assumptions of the other processes. Thus, the algorithm needs to iteratively refine assumptions and guarantees until a valid contract is found.
This iteration is circumvented in our algorithm since only assumptions that are guaranteed by the other processes are used.

Using a weaker winning condition for synthesis, remorse-free dominance~\cite{DammF11}, avoids the explicit construction of assumptions and guarantees~\cite{DammF14}. The assumptions are implicit, but they do not always suffice. Thus, although a dependency analysis of the specification allows for solutions for further, more interconnected systems and specifications~\cite{FinkbeinerP20}, compositional solutions do not always exist.


\section{Running Example}\label{sec:motivating_example}

In many modern factories, autonomous robots are a crucial component in the production line. 
The correctness of their implementation is essential and therefore they are a natural target for synthesis.
Consider a factory with two robots that carry production parts from one machine to another.
In the factory, there is a crossing that is used by both robots.
The robots are required to prevent a crash: $\varphi_\mathit{safe} := \Globally \neg ( (\mathit{at\_crossing}_1 \land \Next \mathit{go}_1) \land (\mathit{at\_crossing}_2 \land \Next \mathit{go}_2))$, where $\mathit{at\_crossing}_i$ is an input variable denoting that robot $r_i$ arrived at the crossing, and $\mathit{go}_i$ is an output variable of robot $r_i$ denoting that $r_i$ moves ahead.
Moreover, both robots need to cross the intersection at some point in time after arriving there: $\varphi_{\mathit{cross}_i} := \Globally (\mathit{at\_crossing}_i \rightarrow \Next \Eventually \mathit{go}_i)$.
In addition to these requirements, both robots have further objectives $\varphi_{\mathit{add}_i}$ that are specific to their area of application. For instance, they may capture which machines have to be approached.

None of the robots can satisfy $\varphi_\mathit{safe} \land \varphi_{\mathit{cross}_i}$ alone: The crossing needs to be entered eventually by $r_i$ but no matter when it is entered, $r_j$ might enter it at the same time. Thus, strategies cannot be synthesized individually without information on~the other robot's behavior.
Due to~$\varphi_{\mathit{add}_i}$, the parallel composition of the strategies can be large and complex. Hence, understanding why the overall specification is met and recognizing the individual strategies is challenging.

If both robots commit to their behavior at crossings, a robot $r_i$ can satisfy $\varphi_\mathit{safe}\land\varphi_{\mathit{cross}_i}$ individually since it is allowed to assume that the other robot does not deviate from its guaranteed behavior, the so-called certificate.
For instance, if $r_2$ commits to always giving priority to $r_1$, entering the crossing regardless of~$r_2$ satisfies $\varphi_\mathit{safe}\land\varphi_{\mathit{cross}_1}$ for $r_1$.
If $r_1$ guarantees to not block crossings, $r_2$ can satisfy $\varphi_\mathit{safe}\land\varphi_{\mathit{cross}_2}$ as well.
Hence, if both robots can satisfy the whole part of the specification that affects them, \ie, $\varphi_i = \varphi_\mathit{safe} \land \varphi_{\mathit{cross}_i} \land \varphi_{\mathit{add}_i}$, under the assumption that the other robot sticks to its certificate, then the parallel composition of their strategies satisfies the whole specification. Furthermore, we then know that the robots do not interfere in any other situation.
Thus, the certificates provide insight in the required communication of the robots.

Moreover, when analyzing the strategy $s_i$ of $r_i$, only taking $r_j$'s certificate into account abstracts away $r_j$'s behavior aside from crossings. This allows us to focus on the relevant aspects of $r_j$'s behavior for $r_i$, making it significantly easier to understand why $r_i$'s strategy satisfies~$\varphi_i$. 
Lastly, the certificates form a contract of safe behavior at crossings: If $r_i$'s additional objectives change, it suffices to synthesize a new strategy for~$r_i$. Provided $r_i$ does not change its behavior at crossings, $r_j$'s strategy can be left unchanged.


\section{Preliminaries}

\paragraph{Notation.} In the following, we denote the prefix of length $t$ of an infinite word $\sigma = \sigma_1 \sigma_2 \dots \in (2^V)^\omega$ by $\pref{\sigma}{t} := \sigma_{1} \dots \sigma_{t}$. Moreover, for a set $X$ and an infinite word $\sigma = \sigma_1 \sigma_2 \dots \in (2^V)^\omega$, we define $\sigma \cap X = (\sigma_1 \cap X)(\sigma_2 \cap X)\dots\in (2^X)^\omega$.

\paragraph{LTL.} 
Linear-time temporal logic~(LTL)~\cite{Pnueli77} is a specification language for linear-time properties. 
Let $\Sigma$ be a finite set of atomic propositions and let $a \in \Sigma$. The syntax of LTL is given by
$ \varphi, \psi ::= a ~ | ~ \neg \varphi ~ | ~ \varphi \lor \psi ~ | ~ \varphi \land \psi ~ | ~ \Next \varphi ~ | ~ \varphi \Until \psi$.
We define $\Eventually \varphi = \true \Until \varphi$, and $\Globally \varphi = \neg \Eventually \neg \varphi$ and use the standard semantics.
The language $\mathcal{L}(\varphi)$ of a formula $\varphi$ is the set of infinite words that satisfy $\varphi$.
The atomic propositions in $\varphi$ are denoted by $\propositions{\varphi}$.
We represent a formula $\varphi = \xi_1 \land \dots \land \xi_k$ also by the set of its conjuncts, \ie, $\varphi = \{\xi_1, \dots, \xi_k\}$.

\paragraph{Automata.}
A universal co-Büchi automaton $\mathcal{A} = (Q,q_0,\delta,F)$ over a finite alphabet $\Sigma$ consists of a finite set of states $Q$, an initial state $q_0 \in Q$, a transition relation $\delta: Q \times 2^\Sigma \times Q$, and a set $F \subseteq Q$ of rejecting states.
For an infinite word $\sigma = \sigma_0\sigma_1 \dots \in (2^\Sigma)^\omega$, a run of $\sigma$ on $\mathcal{A}$ is an infinite sequence $q_0 q_1 \dots \in Q^\omega$ of states with $(q_i,\sigma_i,q_{i+1}) \in \delta$ for all $i \geq 0$. 
A run is accepting if it contains only finitely many visits to rejecting states. $\mathcal{A}$ accepts a word $\sigma$ if all runs of $\sigma$ on $\mathcal{A}$ are accepting.
The language $\mathcal{L}(\mathcal{A})$ of $\mathcal{A}$ is the set of all accepted words.
An LTL specification $\varphi$ can be translated into an equivalent universal co-Büchi automaton $\mathcal{A}_\varphi$, \ie, with $\mathcal{L}(\varphi) = \mathcal{L}(\mathcal{A}_\varphi)$, with a single exponential blow up~\cite{KupfermanV05}.

\paragraph{Architectures.}
An architecture is a tuple $A=(P, V, \inpFunc, \outFunc)$, where $P$ is a set of processes consisting of the environment process $\env$ and a set of $n$ system processes $\sysProc = P \setminus\{env\}$, $V$ is a set of variables, $\inpFunc = \myVec{I_1, \dots, I_n}$ assigns a set $\inputs{j} \subseteq V$ of input variables to each system process $p_j$, and $\outFunc = \myVec{O_\env, O_1, \dots O_n}$ assigns a set $\outputs{j} \subseteq V$ of output variables to each process $p_j$.
For all $p_j, p_k \in \sysProc$ with $j \neq k$, we have $\inputs{j} \cap \outputs{j} = \emptyset$ and $\outputs{j} \cap \outputs{k} = \emptyset$.
The variables $\variables{j}$ of $p_j \in \sysProc$ are its inputs and outputs, \ie, $\variables{j} = \inputs{j} \cup \outputs{j}$.
The variables $V$ of the whole system are defined by $V = \bigcup_{p_j \in \sysProc} \variables{j}$. 
We define $\allInputs = \bigcup_{p_j \in \sysProc} \inputs{j}$ and $\allOutputs = \bigcup_{p_j \in \sysProc} \outputs{j}$.
An architecture is called distributed if $|\sysProc| \geq 2$ and monolithic otherwise. 
In the remainder of this paper, we assume that a distributed architecture is given.

\paragraph{Transition Systems.}
Given sets $I$ and $O$ of input and output variables, a Moore transition system (TS) $\mathcal{T} = (T,t_0,\tau,o)$ consists of a finite set of states~$T$, an initial state $t_0$, a transition function $\tau: T \times 2^I \rightarrow T$, and a labeling function $o: T \rightarrow 2^O$.
For an input sequence $\gamma = \gamma_0 \gamma_1 \dotsc \in (2^{I})^\omega$, $\mathcal{T}$ produces a path $\pi = (t_0 , \gamma_0 \cup o(t_0)) (t_1 , \gamma_1 \cup o(t_1)) \dotsc \in (T \times 2^{I \cup O})^\omega$, where $(t_j, \gamma_j,t_{j+1}) \in \tau$.
The projection of a path to the variables is called trace.
The parallel composition of two TS $\mathcal{T}_1 = (T_1,t^1_0,\tau_1,o_1)$, $\mathcal{T}_2 = (T_2,t^2_0,\tau_2,o_2)$, is a TS $\mathcal{T}_1 \pc \mathcal{T}_2 = (T,t_0,\tau,o)$~with $T = T_1 \times T_2$, $t_0=(t^1_0,t^2_0)$, 
$\tau(\!(t,t'),\inp{i}) \!=\! (\tau_1(t,(\inp{i}_1\cup o_2(t')\!) \cap \inputs{1}),\tau_2(t',(\inp{i}_2\cup o_1(t)\!)\cap\inputs{2})\!)$, and 
$o((t,t')) = o_1(t) \cup o_2(t')$.
A TS $\mathcal{T}_1 = (T_1,t^1_0,\tau_1,o_1)$ over $I$ and~$O_1$ \emph{simulates} $\mathcal{T}_2 = (T_2,t^2_0,\tau_2,o_2)$ over $I$ and $O_2$ with $O_1 \subseteq O_2$, denoted $\mathcal{T}_2 \preceq \mathcal{T}_1$, if there is a simulation relation $R: T_2 \times T_1$ with $(t^2_0,t^1_0)\in R$, $\forall (t_2,t_1) \in R.~ o(t_2) \cap O_1 = o(t_1)$, and $\forall t'_2 \in T_2. \forall \inp{i} \in 2^I.~ (\tau_2(t_2,\inp{i}) = t'_2) \rightarrow (\exists t'_1 \in T_1.~\tau_1(t_1,\inp{i})=t'_1 \land (t'_2,t'_1) \in R)$.

\paragraph{Strategies.}
We model a strategy $s_i$ of $p_i\in\sysProc$ as a Moore transition system~$\mathcal{T}_i$ over $\inputs{i}$ and $\outputs{i}$. The trace produced by $\mathcal{T}_i$ on $\gamma \in (2^{\inputs{i}})^\omega$ is called the \emph{computation} of $s_i$ on $\gamma$, denoted $\comp(s_i,\gamma)$. For an LTL formula $\varphi$ over $V$, $s_i$ satisfies $\varphi$, denoted $s_i \models \varphi$, if $\comp(s,\gamma) \cup \gamma' \models \varphi$ holds for all $\gamma \in (2^{\inputs{i}})^\omega$, $\gamma' \in (2^{V\setminus\variables{i}})^\omega$.

\paragraph{Synthesis.}
For a specification $\varphi$, synthesis derives strategies $s_1, \dots, s_n$ for the system processes such that $s_1 \pc \dots \pc s_n \models \varphi$ holds. 
If such strategies exist,~$\varphi$ is realizable in the architecture. Bounded synthesis~\cite{FinkbeinerS13} additionally bounds the size of the strategies.
The search for strategies is encoded into a constraint system that is satisfiable if, and only if, $\varphi$ is realizable for the bound.
There are~SMT,~SAT, QBF, and DQBF encodings for monolithic~\cite{FaymonvilleFRT17} and distributed~\cite{Baumeister17} architectures.


\section{Compositional Synthesis with Certificates}\label{sec:certifying_synthesis}

In this section, we describe a sound and complete compositional synthesis algorithm for distributed systems. The main idea is to synthesize strategies for the system processes individually. Hence, in contrast to classical distributed synthesis, where strategies $s_1, \dots, s_n$ are synthesized such that $s_1 \pc \dots \pc s_n \models \varphi$ holds, we require that $s_i \models \varphi_i$ holds for all system processes $p_i \in \sysProc$. Here, $\varphi_i$ is a subformula of $\varphi$ that, intuitively, captures the part of $\varphi$ that affects $p_i$.
As long as $\varphi_i$ contains all parts of~$\varphi$ that restrict the behavior of $s_i$, the satisfaction of~$\varphi$ by the parallel composition of all strategies is guaranteed.
Computing specification decompositions is not the main focus of this paper; in fact, our algorithm can be used with any decomposition that fulfills the above requirement. There is work on obtaining small subspecifications, \eg, \cite{FinkbeinerGP21}, we, however, use an easy decomposition algorithm in the remainder of this paper for simplicity:

\begin{definition}[Specification Decomposition]\label{def:spec_decomposition}
	Let $\varphi = \xi_1 \land \dots \land \xi_k$ be an LTL formula. The \emph{decomposition of~$\varphi$} is a vector $\myVec{ \varphi_1, \dots, \varphi_n }$ of LTL formulas with $\varphi_i = \{ \xi_j \mid \xi_j \in \varphi \,\land\, (\propositions{\xi_j} \cap \outputs{i} \neq \emptyset \,\lor\, \propositions{\xi_j} \cap \allOutputs = \emptyset) \}$.
\end{definition}

Intuitively, the subspecification~$\varphi_i$ contains all conjuncts of $\varphi$ that contain outputs of~$p_i$ as well as all input-only conjuncts.
In the remainder of this paper, we assume that both $\propositions{\varphi} \subseteq V$ and $\mathcal{L}(\varphi) \in (2^V)^\omega$ hold for all specifications~$\varphi$.
Then, every atomic proposition occurring in a for\-mu\-la $\varphi$ is an input or output of at least one system process and thus $\bigwedge_{p_i\in\sysProc} \varphi_i = \varphi$ holds.

Although we decompose the specification, a process $p_i$ usually cannot guarantee the satisfaction of $\varphi_i$ alone; rather, it depends on the cooperation of the other processes. For instance, robot $r_1$ from \Cref{sec:motivating_example} cannot guarantee that no crash will occur when entering the crossing since~$r_2$ can enter it at the same point in time. Thus, we additionally synthesize a \emph{guarantee on the behavior} of each process, the so-called \emph{certificate}. The certificates then provide essential information to the processes:~If~$p_i$~commits to a certificate, the other processes can rely on~$p_i$'s strategy to not deviate from this behavior. In particular, the strategies only need to satisfy the specification as long as the other processes stick to their certificates.
Thus, a process is not required to react to \emph{all} behaviors of the other processes but only to those that truly occur when the processes interact.

In this section, we represent the certificate of $p_i \in \sysProc$ by an LTL formula $\gb{i}$. For instance, robot $r_2$ may guarantee to always give priority to $r_1$ at crossings, yielding the certificate $\gb{2} = \Globally ((\mathit{at\_crossing}_1 \land \mathit{at\_crossing}_2) \rightarrow \Next \neg \mathit{go}_2)$.
Since~$r_1$ can assume that~$r_2$ does not deviate from its certificate $\psi_2$, a strategy for~$r_1$ that enters crossings regardless of~$r_2$ satisfies $\varphi_\mathit{safe} \land \varphi_{\mathit{cross}_1}$.

To ensure that $p_i$ does not deviate from its own certificate, we require its strategy $s_i$ to satisfy the LTL formula $\psi_i$ describing it. To model that $s_i$ only has to satisfy its specification if the other processes stick to their certificates, it has to satisfy $\GB{i} \rightarrow \varphi_i$, where $\GB{i} = \{ \gb{j} \mid p_j \in \sysProc\setminus\{p_i\}\}$, \ie, $\GB{i}$ is the conjunction of the certificates of the other processes.
Using this, we define certifying synthesis:

\begin{definition}[Certifying Synthesis]
	Let $\varphi$ be an LTL formula with decomposition $\myVec{ \varphi_1, \dots, \varphi_n }$. 
	\emph{Certifying synthesis} derives strategies $s_1,\dots,s_n$ and LTL certificates $\gb{1},\dots,\gb{n}$ for the system processes such that $s_i \models \gb{i} \land (\GB{i} \rightarrow \varphi_i)$ holds for all $p_i \in \sysProc$, where $\GB{i} = \{ \gb{j} \mid p_j \in \sysProc\setminus\{p_i\}\}$.
\end{definition}

Classical distributed synthesis algorithms reason \emph{globally} about the satisfaction of the full specification by the parallel composition of the synthesized strategies.
Certifying synthesis, in contrast, reasons \emph{locally} about the satisfaction of the subspecifications for the individual processes, \ie, without considering the parallel composition of the strategies. This greatly improves the understandability of the correctness of synthesized solutions since we are able to consider the strategies separately.
Furthermore, local reasoning is still sound and complete:

\begin{theorem}[Soundness and Completeness]\label{thm:soundness_completeness_certifying_synthesis}
	Let $\varphi$ be an LTL formula and let $\mathcal{S} = \myVec{s_1, \dots, s_n}$ be a vector of strategies for the system processes. There exists a vector $\Psi = \myVec{\gb{1}, \dots, \gb{n}}$ of LTL certificates such that $(\mathcal{S},\Psi)$ is a solution of certifying synthesis for $\varphi$ if, and only if $s_1 \pc \dots \pc s_n \models \varphi$ holds.
\end{theorem}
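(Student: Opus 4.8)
The plan is to prove the two implications separately, using throughout the identity $\bigwedge_{p_i \in \sysProc} \varphi_i = \varphi$ noted above; since the parallel composition is a single transition system, $s_1 \pc \dots \pc s_n \models \varphi$ holds if and only if $s_1 \pc \dots \pc s_n \models \varphi_i$ for every $p_i \in \sysProc$.

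For the \emph{soundness} direction, assume $(\mathcal{S},\Psi)$ is a solution, so that $s_i \models \gb{i} \land (\GB{i} \rightarrow \varphi_i)$ holds for all $p_i$. First I would fix an arbitrary environment input and let $\sigma$ be the induced trace of $s_1 \pc \dots \pc s_n$. The key observation is that, by the definition of parallel composition, the restriction of $\sigma$ to $\variables{j}$ is exactly the computation $\comp(s_j,\gamma)$ of $s_j$ on the induced input $\gamma = \sigma \cap \inputs{j}$, and $\sigma = \comp(s_j,\gamma) \cup \gamma'$ with $\gamma' = \sigma \cap (V\setminus\variables{j})$. Because $s_j \models \gb{j}$ quantifies over \emph{all} inputs and \emph{all} extensions, it applies to this particular $\gamma,\gamma'$, yielding $\sigma \models \gb{j}$. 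Collecting this for every $j \neq i$ gives $\sigma \models \GB{i}$. Reading $\sigma$ in the same way as a computation of $s_i$ and invoking $s_i \models \GB{i} \rightarrow \varphi_i$ then gives $\sigma \models \varphi_i$. As $i$ and the environment input were arbitrary, $s_1 \pc \dots \pc s_n \models \varphi$ follows.

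For the \emph{completeness} direction, assume $s_1 \pc \dots \pc s_n \models \varphi$. The natural choice is to let $\gb{i}$ be the tightest formula over $\variables{i}$ describing the behavior of $s_i$, i.e.\ one whose models (restricted to $\variables{i}$) are the computations $\{\comp(s_i,\gamma) \mid \gamma \in (2^{\inputs{i}})^\omega\}$. Then $s_i \models \gb{i}$ is immediate. To establish $s_i \models \GB{i} \rightarrow \varphi_i$, I would take any $\gamma,\gamma'$, set $\sigma = \comp(s_i,\gamma) \cup \gamma'$, and assume $\sigma \models \GB{i}$. Then each conjunct $\gb{j}$ forces the outputs of $p_j$ in $\sigma$ to coincide with $s_j$ run on the inputs $\sigma \cap \inputs{j}$; since Moore machines incur no instantaneous feedback, mutual consistency of all processes with their strategies pins $\sigma$ down to the unique composition trace for the environment input $\sigma \cap \envOutputs$. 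That trace satisfies $\varphi$, hence $\varphi_i$, so $\sigma \models \varphi_i$ as required.

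The hard part is concentrated entirely in completeness: it requires \emph{LTL} certificates that constrain the interface tightly enough. A strategy is an arbitrary finite Moore transition system, so its computation language is $\omega$-regular but, being possibly non-star-free, need not be LTL-definable; the tightest LTL certificate may thus be strictly weaker than the exact language of $s_i$, in which case a trace satisfying $\GB{i}$ need not be a genuine composition trace and the reconstruction step above breaks. The crux is therefore to show that an LTL certificate strong enough to entail $\varphi_i$ under $\GB{i}$ always exists -- either by exploiting that each $\varphi_i$ is itself an LTL (hence star-free) property, so that the star-free over-approximation of each strategy already restricts the interface sufficiently, or by admitting auxiliary propositions to encode the relevant state. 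The soundness direction, in contrast, is a routine bookkeeping argument over traces of the composition.
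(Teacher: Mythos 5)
Your argument is essentially the paper's own proof. The soundness direction is identical bookkeeping: fix an environment input, let $\sigma$ be the composition trace, use that $\sigma = \comp(s_j,\sigma\cap\inputs{j})\cup(\sigma\cap(V\setminus\variables{j}))$ for every $p_j$ to import $s_j\models\gb{j}$, conclude $\sigma\models\GB{i}$, discharge the implication, and take the conjunction over all $i$. The completeness direction also matches: the paper likewise takes $\gb{i}$ to be the exact-behavior formula with $\mathcal{L}(\gb{i})=\{\comp(s_i,\gamma)\cup\gamma'\mid \gamma\in(2^{\inputs{i}})^\omega,\,\gamma'\in(2^{V\setminus\variables{i}})^\omega\}$ and shows, exactly as you do, that a trace satisfying $\GB{i}$ together with $s_i$'s own behavior collapses to the unique composition trace for the environment input $\sigma\cap\envOutputs$, which satisfies $\varphi_i$ by assumption; your appeal to the absence of instantaneous feedback in Moore machines is the same fact the paper uses via the definition of parallel composition and disjointness of output sets.

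The one place you diverge is instructive: you flag the existence of an exact LTL certificate as the unresolved crux, whereas the paper dispatches it in a single sentence (``since $s_i$ is represented by a finite-state transition system, the construction of such an LTL formula is always possible''). Your skepticism is mathematically warranted: the computation language of a finite Moore transition system is $\omega$-regular but need not be star-free. For instance, a small machine that reveals the parity of the number of occurrences of some input letter only at input-designated query positions (and outputs nothing in between) has a non-aperiodic syntactic structure, so no LTL formula captures its exact behavior; the paper's assertion is too strong as stated. So, strictly, your proof is incomplete at precisely the step where the paper's proof is unsupported as written. The repairs you sketch are the natural ones: a star-free over-approximation tight enough for the star-free targets $\varphi_i$, or auxiliary propositions exposing the strategy state (which makes the state output-visible and restores LTL-definability of the exact behavior, as in the inductive encoding $\Globally(\Next o \leftrightarrow \cdot)$ of a state-revealing machine). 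Note also that the paper itself effectively adopts the robust alternative in Section 5, where certificates are deterministic transition systems and each strategy can serve as its own certificate with no detour through LTL, so the definability issue affects only the LTL-certificate formulation of this theorem, not the paper's main synthesis pipeline.
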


Soundness of certifying synthesis follows from the fact that every system process is required to satisfy its own certificate. Completeness is obtained since every strategy can serve as its own certificate: Intuitively, if $s_1 \pc \dots \pc s_n \models \varphi$, then LTL certificates that capture the exact behavior of the corresponding strategy satisfy the requirements of certifying synthesis. The proof is given in~\Cref{app:certifying_synthesis}.

Thus, certifying synthesis enables local~reasoning and therefore better understandability of the solution as well as modularity of the system, while ensuring to find correct solutions for all specifications that are realizable in the architecture.
Furthermore, the parallel composition of the strategies obtained with certifying synthesis for a specification $\varphi$ is a solution for the whole system.


\section{Certifying Synthesis with Deterministic Certificates}\label{sec:assumption_lts}

There are several quality measures for certificates, for instance their size.
We, however, focus on certificates that are \emph{easy to synthesize}:
To determine whether a strategy sticks to its own certificate, a check for language containment has to be performed. Yet, efficient algorithms only exist for deterministic properties~\cite{TouatiBK95}. While certificates represented by LTL formulas are easily human-readable, they can be nondeterministic. Thus, the $\omega$-automaton representing the LTL certificate needs to be de\-ter\-minized, yielding an exponential blowup in its size~\cite{Safra88}.

In this section, we introduce a representation of certificates that ensures determinism to avoid the blowup.
Note that while enforcing determinism might yield larger certificates, it does not rule out any strategy that can be found with nondeterministic certificates: 
Since strategies are per se deterministic, there exists at least one deterministic certificate for them: The strategy itself.

We model the guaranteed behavior $g_i$ of a system process~$p_i$ as a labeled transition system~$\guarTrans{i}$, called \emph{guarantee transition system}~(GTS), over inputs~$\inputs{i}$ and \emph{guarantee output variables}~$\guarOutputs{i}\subseteq\outputs{i}$.
Only considering a subset of $\outputs{i}$ as output variables allows the certificate to abstract from outputs of $p_i$ whose valuation is irrelevant for all other system processes. In the following, we assume the guarantee output variables of $p_i$ to be both an output of $p_i$ and an input of some other system process, \ie, $\guarOutputs{i} := \outputs{i} \cap \allInputs$.
Intuitively, a variable $v \in \outputs{i} \setminus \guarOutputs{i}$ cannot be observed by any other process. Thus, a guarantee on its behavior does not influence any other system process and hence it can be omitted.
The variables~$\guarVariables{i}$ of the GTS of $p_i \in \sysProc$ are then given by $\guarVariables{i} := \inputs{i} \cup \guarOutputs{i}$.

In certifying synthesis, it is essential that a strategy only needs to satisfy the specification if the other processes do not deviate from their certificates. In the previous section, we used an implication in the local objective to model this.
When representing certificates as transition systems, we use \emph{valid histories} to determine whether a sequence matches the certificates of the other processes.

\begin{definition}[Valid History]
	Let~$\mathcal{G}_i$ be a set of guarantee transition systems.	
	A \emph{valid history of length~$t$ with respect to~$\mathcal{G}_i$} is a finite sequence $\sigma \in (2^V)^*$ of length~$t$, where for all $g_j \in \mathcal{G}_i$, $\sigma_k \cap \guarOutputs{j} = \comp(g_j,\hat{\sigma}\cap\inputs{j})_k \cap \guarOutputs{j}$ holds for all points in time $k$ with $1 \leq k \leq t$ and all infinite extensions $\hat{\sigma}$ of $\sigma$.
	The set of all valid histories of length $t$ with respect to $\mathcal{G}_i$ is denoted by $\validHistory{t}{\mathcal{G}_i}$.
\end{definition}

Intuitively, a valid history respecting a set $\mathcal{G}_i$ of guarantee transition systems is a finite sequence that is a prefix of a computation of all GTS in $\mathcal{G}_i$.
Thus, a valid history can be produced by the parallel composition of the GTS. 
Note that since strategies cannot look into the future, a finite word satisfies the requirements of a valid history either for all of its infinite extensions or for none of them.

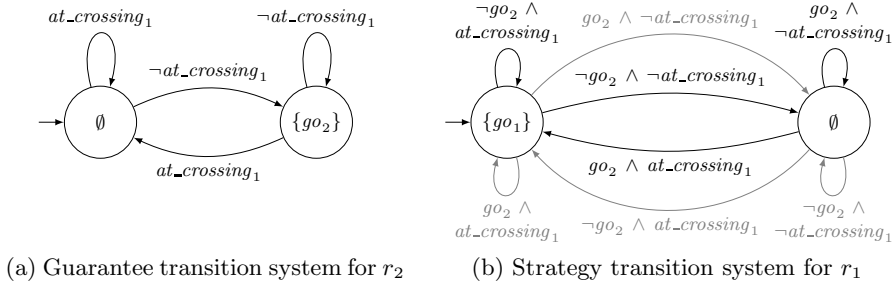
\begin{figure}[t]
	\begin{subfigure}{0.435\textwidth}
		\centering
		\scalebox{0.82}{
		\begin{tikzpicture}[>=latex,shorten >=0pt,auto,->,node distance=1cm,thin,every edge/.style={draw,font=\small}, initial text = ]
		
			\path[use as bounding box] (-1,-1.9) rectangle (4.5,1.85);
		
			\node[state,initial,minimum size=1.17cm]		(u0)		at (0,0)		{$\emptyset$};
			\node[state,minimum size=1.17cm]				(u1)		at (3.5,0)		{$\{\mathit{go}_2\}$};
			
			\path	(u0)	edge[bend left=25]	node		{$\neg \mathit{at\_crossing}_1$}	(u1)
						edge[loop above,looseness=10]		node		{$\mathit{at\_crossing}_1$}			(u0)
					(u1)	edge[bend left=25]	node		{$\mathit{at\_crossing}_1$}			(u0)
						edge[loop above,looseness=10]		node		{$\neg\mathit{at\_crossing}_1$}			(u1);
		\end{tikzpicture}}
		\caption{Guarantee transition system for $r_2$\label{fig:guarantee_r2}}
	\end{subfigure}
	\hfill
	\begin{subfigure}{0.56\textwidth}
		\centering
		\scalebox{0.82}{
		\begin{tikzpicture}[>=latex,shorten >=0pt,auto,->,node distance=1cm,thin,every edge/.style={draw,font=\small}, initial text = ]
		
			\path[use as bounding box] (-1,-1.9) rectangle (6.3,1.85);
		
			\node[state,initial,minimum size=1.17cm]		(t0)		at (0,0)		{$\{\mathit{go}_1\}$};
			\node[state,minimum size=1.17cm]				(t1)		at (5.3,0)		{$\emptyset$};
			
			\path	(t0)	edge[bend left=15]				node[align=center]		{$\neg \mathit{go}_2\,\land\,\neg \mathit{at\_crossing}_1$}	(t1)
						edge[loop above,looseness=7]	node[align=center]	{$\neg \mathit{go}_2~\land$ \\ $\mathit{at\_crossing}_1$}			(t0)
						edge[loop below,looseness=7,gray]	node[align=center]	{$\mathit{go}_2~\land$ \\ $\mathit{at\_crossing}_1$}	(t0)
						edge[bend left=48,gray]			node[align=center]	{$\mathit{go}_2\,\land\,\neg \mathit{at\_crossing}_1$}	(t1)
					(t1)	edge[bend left=15]				node	[align=center]	{$\mathit{go}_2\,\land\,\mathit{at\_crossing}_1$}			(t0)
						edge[bend left=48,gray]			node[align=center]	{$\neg\mathit{go}_2\,\land\,\mathit{at\_crossing}_1$}	(t0)
						edge[loop above,looseness=7]	node[align=center]	{$\mathit{go}_2~\land$ \\ $\neg\mathit{at\_crossing}_1$}			(t1)
						edge[loop below,looseness=7,gray]	node[align=center]	{$\neg\mathit{go}_2~\land$ \\ $\neg\mathit{at\_crossing}_1$}	(t1);
		\end{tikzpicture}}
		\caption{Strategy transition system for $r_1$\label{fig:strategy_r1}}
	\end{subfigure}
\caption{Strategy and GTS for robots $r_1$ and $r_2$ from \Cref{sec:motivating_example}, respectively. The labels of the states denote the output of the TS in the respective state.}\label{fig:example_lts}
\end{figure}

As an example for valid histories, consider the manufacturing robots again. Assume that $r_2$ guarantees to always give priority to $r_1$ at crossings and to move forward if $r_1$ is not at the crossing. A GTS $g_2$ for $r_2$ is depicted in \Cref{fig:guarantee_r2}.
Since $r_2$ never outputs~$\mathit{go}_2$ if~$r_1$ is at the crossing (left state), the finite sequence $\{ \mathit{at\_crossing}_1 \} \{ \mathit{go}_2 \}$ is no valid history respecting $g_2$. Since $r_2$ outputs~$\mathit{go}_2$ otherwise (right state), \eg, $\{\mathit{at\_crossing}_2 \} \{ \mathit{go}_2 \}$ is a valid history respecting $g_2$.

We use valid histories to determine whether the other processes stick to their certificates. Thus, intuitively, a strategy is required to satisfy the specification if its computation is a valid history respecting the GTS of the other processes:

\begin{definition}[Local Satisfaction]
	Let $\mathcal{G}_i$ be a set of guarantee transition systems.
	A strategy $s_i$ for $p_i\in\sysProc$ \emph{locally satisfies} an LTL formula $\varphi_i$ \emph{with respect to}~$\mathcal{G}_i$, denoted $s_i \satresp{\mathcal{G}_i} \varphi_i$, if $\comp(s_i,\gamma) \cup \gamma' \models \varphi_i$ holds for all $\gamma \in (2^\inputs{i})^\omega$, $\gamma' \in (2^{V \setminus \variables{i}})^\omega$ with $\pref{\comp(s_i,\gamma)}{t} \cup \pref{\gamma'}{t} \in \validHistory{t}{\mathcal{G}_i}$ for all points in time $t$.
\end{definition}

If $r_2$, for instance, sticks to its guaranteed behavior $g_2$ depicted in \Cref{fig:guarantee_r2}, then~$r_1$ can enter crossings regardless of $r_2$. Such a strategy $s_1$ for~$r_1$ is shown in \Cref{fig:strategy_r1}. Since neither $\sigma := \{ \mathit{at\_crossing}_1 \} \{ \mathit{go}_2 \}$ nor any finite sequence containing $\sigma$ is a valid history respecting $g_2$, no transition for input $\mathit{go}_2$ has to be considered for local satisfaction when $r_1$ is at the crossing (left state of $s_1$). 
Therefore, these transitions are depicted in gray. Analogously, no transition for $\neg\mathit{go}_2$ has to be considered when $r_1$ is not at the crossing (right state). 
The other transitions match valid histories and thus they are taken into account. Since no crash occurs when considering the black transitions only, $s_1 \satresp{\{g_2\}} \varphi_\mathit{safe}$ holds.

Using local satisfaction, we now define certifying synthesis in the setting where certificates are represented by labeled transition systems: Given an architecture $A$ and a specification $\varphi$, certifying synthesis for $\varphi$ derives strategies $s_1, \dots, s_n$ and \emph{guarantee transition systems} $g_1, \dots, g_n$ for the system processes.
For all $p_i \in \sysProc$, we require $s_i$ to locally satisfy its specification with respect to the guarantee transition systems of the other processes, \ie, $s_i \satresp{\mathcal{G}_i} \varphi_i$, where $\mathcal{G}_i = \{ g_j \mid p_j\in\sysProc\setminus\{p_i\}\}$. To ensure that a strategy does not deviate from its own certificate, $g_i$ is required to simulate $s_i$, \ie, $s_i \simresp{\guarOutputs{i}} g_i$ needs to hold.

In the following, we show that solutions of certifying synthesis with LTL cer\-ti\-fi\-cates can be translated into solutions with GTS and vice versa.
Given a solution of certifying synthesis with GTS, the main idea is to construct LTL certificates that capture the exact behavior of the GTS. 
For the formal certificate translation and its proof of correctness, we refer to~\Cref{app:computing_guarantees}.

\begin{lemma}\label{lem:guar_implies_cert}
	Let $\varphi$ be an LTL formula. 
	Let $\mathcal{S}$ and $\mathcal{G}$ be vectors of strategies and guarantee transition systems, respectively, for the system processes.
	If $(\mathcal{S},\mathcal{G})$ is a solution of certifying synthesis for~$\varphi$, then there exists a vector $\Psi$ of LTL~certificates such that $(\mathcal{S},\Psi)$ is a solution for certifying synthesis for $\varphi$ as well.
\end{lemma}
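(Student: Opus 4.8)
The plan is to construct, for each system process $p_i$, an LTL certificate $\gb{i}$ that exactly captures the computations produced by its guarantee transition system $g_i$, and then to verify that the resulting vector $\Psi = \myVec{\gb{1}, \dots, \gb{n}}$ together with the given strategies $\mathcal{S}$ satisfies the two requirements of certifying synthesis with LTL certificates, namely $s_i \models \gb{i} \land (\GB{i} \rightarrow \varphi_i)$ for every $p_i \in \sysProc$. The bridge between the two formalisms is that the set of valid histories $\validHistory{t}{\mathcal{G}_i}$ used in local satisfaction corresponds to the set of finite prefixes accepted by the parallel composition of the GTS, so the LTL certificate $\gb{i}$ should be built to have exactly the language of infinite traces producible by $g_i$ (projected onto $\guarVariables{i}$, or suitably lifted to $(2^V)^\omega$).

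First I would show that such an LTL formula $\gb{i}$ exists and fix its intended semantics: $\sigma \models \gb{i}$ if and only if $\sigma \cap \guarOutputs{i} = \comp(g_i, \sigma \cap \inputs{i}) \cap \guarOutputs{i}$, i.e. the output projection of $\sigma$ on the guarantee outputs of $p_i$ matches what $g_i$ produces on the corresponding input projection. This is expressible in LTL because $g_i$ is a finite deterministic transition system, so its trace language over $\guarVariables{i}$ is $\omega$-regular and describable by an LTL formula (or, if one prefers to stay purely at the trace-language level, one can carry out the argument with the automaton and appeal to the translation mentioned in the preliminaries; the referenced appendix presumably gives the explicit encoding). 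The key point is that this $\gb{i}$ is defined so that, for any infinite word $\sigma \in (2^V)^\omega$, membership $\sigma \models \bigwedge_{j \neq i} \gb{j}$ holds exactly when every finite prefix $\pref{\sigma}{t}$ lies in $\validHistory{t}{\mathcal{G}_i}$ — this is precisely the translation between the implication-based and the valid-history-based notions of "the other processes stick to their certificates."

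Next I would verify the two conjuncts. For the first, $s_i \models \gb{i}$ follows from the simulation requirement $s_i \simresp{\guarOutputs{i}} g_i$: since $g_i$ simulates $s_i$ on the guarantee outputs, every computation of $s_i$ has the same $\guarOutputs{i}$-labeling as the corresponding computation of $g_i$, which is exactly what $\gb{i}$ demands. For the second, $s_i \models \GB{i} \rightarrow \varphi_i$, I would take an arbitrary input sequence $\gamma$ and completion $\gamma'$, assume $\comp(s_i,\gamma) \cup \gamma' \models \GB{i}$, and use the equivalence established above to conclude that every prefix of $\comp(s_i,\gamma) \cup \gamma'$ is a valid history with respect to $\mathcal{G}_i$; local satisfaction $s_i \satresp{\mathcal{G}_i} \varphi_i$ then gives $\comp(s_i,\gamma) \cup \gamma' \models \varphi_i$, discharging the implication. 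Conversely, whenever $\GB{i}$ is violated the implication holds vacuously, so no additional obligation arises.

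The main obstacle I anticipate is the careful handling of the projection and quantification over extensions in the valid-history definition, and making sure the translation of "prefix-wise membership in $\validHistory{t}{\mathcal{G}_i}$" into "satisfaction of $\GB{i}$" is tight in both directions. In particular, the definition of a valid history quantifies over \emph{all} infinite extensions $\hat{\sigma}$, so I must confirm that the prefix-closed, extension-insensitive nature of GTS computations (noted in the remark after the valid-history definition — a finite word is valid for all extensions or none) makes the GTS trace language genuinely $\omega$-regular and exactly characterized by the formulas $\gb{j}$. I would also need to be slightly careful that $\gb{i}$ constrains only the $\guarOutputs{i}$ component and leaves the remaining variables free, so that $\gb{i}$ interacts correctly with $\varphi_i$ over the shared variables and so that simulation on $\guarOutputs{i}$ (rather than on all of $\outputs{i}$) suffices to establish $s_i \models \gb{i}$. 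Once the language-level equivalence between valid histories and $\GB{i}$ is pinned down, the rest is a direct unfolding of the definitions.
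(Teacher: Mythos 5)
Your proposal follows essentially the same route as the paper's proof: define $\gb{i}$ so that $\mathcal{L}(\gb{i})$ is exactly the set of computations of $g_i$ (with the variables outside $\guarVariables{i}$ left unconstrained), derive $s_i \models \gb{i}$ from the simulation $s_i \simresp{\guarOutputs{i}} g_i$, and discharge $\GB{i} \rightarrow \varphi_i$ via the correspondence between satisfaction of the other processes' certificates and prefix-wise membership in $\validHistory{t}{\mathcal{G}_i}$ --- your case split on whether $\GB{i}$ holds is just the contrapositive of the paper's case split on valid histories, which exploits the same extension-insensitivity remark you cite. One minor caveat: your justification that the trace language is ``$\omega$-regular and describable by an LTL formula'' is imprecise, since LTL captures only the star-free $\omega$-regular languages; however, the paper itself asserts the existence of $\gb{i}$ from the finiteness of $g_i$ without further argument, so this does not distinguish your proof from the paper's own.
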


Given a solution of certifying synthesis with LTL certificates, we can construct GTS that match the strategies of the given solution. Then, these strategies as well as the GTS form a solution of certifying synthesis with GTS. The full construction and its proof of correctness is given in~\Cref{app:computing_guarantees}.

\begin{lemma}\label{lem:cert_implies_guar}
	Let $\varphi$ be an LTL formula. 
	Let $\mathcal{S}$ and $\Psi$ be vectors of strategies and LTL certificates, respectively, for the system processes.
	If $(\mathcal{S},\Psi)$ is a solution of certifying synthesis for $\varphi$, then there exists a vector $\mathcal{G}$ of guarantee transition system such that $(\mathcal{S},\mathcal{G})$ is a solution for certifying synthesis for $\varphi$ as well.
\end{lemma}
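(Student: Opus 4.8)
The plan is to start from a solution $(\mathcal{S}, \Psi)$ of certifying synthesis with LTL certificates, so that $s_i \models \gb{i} \land (\GB{i} \rightarrow \varphi_i)$ holds for every $p_i \in \sysProc$, and to build a vector $\mathcal{G} = \myVec{g_1, \dots, g_n}$ of guarantee transition systems such that $(\mathcal{S}, \mathcal{G})$ is a solution, \ie, $s_i \simresp{\guarOutputs{i}} g_i$ and $s_i \satresp{\mathcal{G}_i} \varphi_i$ hold for all $p_i$. The natural candidate for $g_i$ is the strategy $s_i$ itself, projected onto the guarantee variables $\guarVariables{i} = \inputs{i} \cup \guarOutputs{i}$. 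Concretely, I would take the Moore TS $\mathcal{T}_i$ underlying $s_i$ and relabel each state by intersecting its output label with $\guarOutputs{i}$, keeping the same states and transition function. This immediately gives a GTS over $\inputs{i}$ and $\guarOutputs{i}$.

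\textbf{The simulation requirement.} With $g_i$ defined as the restriction of $s_i$ to $\guarOutputs{i}$, the simulation $s_i \simresp{\guarOutputs{i}} g_i$ is essentially free: since $g_i$ and $s_i$ share the same state set and transition function, the identity relation on states is a simulation relation. I would verify the three conditions from the definition of $\preceq$ directly — the initial states correspond, the output labels agree after projection to $\guarOutputs{i}$ (which holds by construction of $g_i$'s labeling), and every $s_i$-transition is matched by the identical $g_i$-transition landing in the related state. This is a routine check.

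\textbf{The local satisfaction requirement.} The crux is showing $s_i \satresp{\mathcal{G}_i} \varphi_i$, where $\mathcal{G}_i = \{ g_j \mid p_j \in \sysProc \setminus \{p_i\}\}$. Here I would exploit the connection between valid histories and the LTL certificates. The key observation to establish is that a finite sequence $\pref{\comp(s_i,\gamma)}{t} \cup \pref{\gamma'}{t}$ lies in $\validHistory{t}{\mathcal{G}_i}$ for all $t$ exactly when the full infinite word $\comp(s_i,\gamma) \cup \gamma'$ agrees, on the guarantee outputs of each $p_j$, with the computation of $g_j = s_j$ on the corresponding inputs. Because each $g_j$ was built from $s_j$, and because $s_j \models \gb{j}$ by assumption, any such agreeing word must satisfy every certificate $\gb{j}$, and hence satisfies $\GB{i}$. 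From $s_i \models \GB{i} \rightarrow \varphi_i$ we then conclude that $\comp(s_i,\gamma) \cup \gamma' \models \varphi_i$, which is precisely what local satisfaction demands. I would carry this out by fixing an arbitrary $\gamma \in (2^{\inputs{i}})^\omega$ and $\gamma' \in (2^{V\setminus\variables{i}})^\omega$ whose prefixes are all valid histories with respect to $\mathcal{G}_i$, and threading through the definitions of valid history, of $g_j$, and of $s_j \models \gb{j}$.

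\textbf{Main obstacle.} I expect the delicate point to be the direction ``valid history implies the certificate holds.'' The valid-history condition only constrains, at each time $k$, the guarantee outputs $\guarOutputs{j}$ of the sequence to match $g_j$'s computation; it says nothing about the other processes' non-guarantee outputs or about how $g_j$ consumes its inputs. I must argue that this pointwise agreement on $\guarOutputs{j}$, holding for all prefixes, forces the infinite word to coincide with an actual computation of $s_j$ on its inputs $\sigma \cap \inputs{j}$ restricted to the relevant variables, so that $s_j \models \gb{j}$ can be invoked. This requires checking that $\gb{j}$, as an LTL formula, is satisfied purely on the basis of the values it reads from the word — which is where the assumptions $\propositions{\gb{j}} \subseteq \guarVariables{j}$ (or a similar restriction ensuring the certificate only mentions observable guarantee variables) and the determinism of the Moore machines must be used carefully. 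Making this matching precise, including quantifying correctly over the infinite extensions $\hat{\sigma}$ in the valid-history definition, is the part of the argument that needs the most attention.
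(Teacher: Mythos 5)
Your construction and argument coincide with the paper's proof of \Cref{lem:cert_implies_guar}: the paper also takes $g_i$ to be a copy of $s_i$ with labels relabeled to $o_i(t) \cap \guarOutputs{i}$, dispatches $s_i \simresp{\guarOutputs{i}} g_i$ by observing the two systems differ only in labels, and handles local satisfaction by connecting valid histories to the certificates. Your direct formulation (``all prefixes are valid histories, hence the word satisfies every $\gb{j}$, hence $\GB{i}$, hence $\varphi_i$'') is exactly the contrapositive of the paper's second case, which argues that a word violating some $\gb{j}$ must eventually leave $\validHistory{t}{\mathcal{G}_i}$.

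However, the ``main obstacle'' you isolate is a genuine gap, and it is a gap in the paper's own proof as well, not an artifact of your write-up. The paper justifies the crucial step --- from $\sigma \not\in \mathcal{L}(\gb{j})$ to a disagreement between $\sigma$ and $\comp(g_j,\sigma\cap\inputs{j})$ on $\guarOutputs{j}$ --- with the phrase ``by construction of $\gb{j}$''; but in this lemma $\gb{j}$ is a \emph{hypothesis}, not a construction (the phrase is imported from the converse direction, \Cref{lem:guar_implies_cert}, where $\gb{j}$ is built as the exact-behavior formula of $g_j$). From $s_j \models \gb{j}$ one only learns that words agreeing with $s_j$ on all of $\variables{j}$ satisfy $\gb{j}$, not that words agreeing with $g_j$ on $\guarVariables{j}$ do, and the latter is what the step needs, since valid histories constrain only $\guarOutputs{j} = \outputs{j} \cap \allInputs$. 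Concretely, take $u \in \outputs{1}$, $v \in \outputs{2}$ with neither read by any process, $\varphi_1 = \varphi_2 = \Globally(u \leftrightarrow v)$, constant strategies emitting $u$ and $v$, and $\gb{1} = \Globally u$, $\gb{2} = \Globally v$: then $(\mathcal{S},\Psi)$ is a solution, but $\guarOutputs{2} = \emptyset$, so \emph{every} $\gamma'$ yields valid histories with respect to any $g_2$, and $s_1 \satresp{\{g_2\}} \varphi_1$ fails for every choice of $g_2$ --- the mismatch with $\gb{2}$ lives on a hidden output that no GTS can speak about. (The same example defeats the exact-behavior certificates, so one cannot rescue the proof by first normalizing $\Psi$ via \Cref{lem:completeness_certifying_synthesis}.) Your proposed side condition $\propositions{\gb{j}} \subseteq \guarVariables{j}$ is the right repair: with it, the word $\tilde\sigma$ obtained from $\sigma$ by overwriting $\variables{j}$ with $\comp(s_j,\sigma\cap\inputs{j})$ satisfies $\gb{j}$ and agrees with $\sigma$ on $\guarVariables{j}$, whence $\sigma \models \gb{j}$; this mirrors the observability assumption $\propositions{\varphi_i} \subseteq \variables{i}$ that the paper does impose in \Cref{thm:constraint_system}. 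So your proof is exactly as strong as the paper's, and the step you flag as delicate cannot be closed from the stated hypotheses alone --- some such assumption must be added to the lemma.
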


Hence, we can translate solutions of certifying synthesis with LTL formulas and with GTS into each other. Thus, we can reuse the results from \Cref{sec:certifying_synthesis}, in particular \Cref{thm:soundness_completeness_certifying_synthesis}, and then soundness and completeness of certifying synthesis with guarantee transition systems follows with \Cref{lem:guar_implies_cert,lem:cert_implies_guar}:

\begin{theorem}[Soundness and Completeness with GTS]\label{thm:correctness_assumption_lts}
	Let~$\varphi$ be an LTL formula.
	Let $\mathcal{S} = \myVec{s_1, \dots, s_n}$ be a vector of strategies for the system processes.
	Then, there exists a vector $\mathcal{G}$ of guarantee transition systems such that $(\mathcal{S},\mathcal{G})$ is a solution of certifying synthesis for $\varphi$ if, and only if, $s_1 \pc \dots \pc s_n \models \varphi$ holds.
\end{theorem}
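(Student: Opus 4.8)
The plan is to obtain \Cref{thm:correctness_assumption_lts} as a corollary of the analogous result for LTL certificates, \Cref{thm:soundness_completeness_certifying_synthesis}, by routing both directions of the biconditional through an intermediate LTL solution using the two translation lemmas. Since \Cref{thm:soundness_completeness_certifying_synthesis} already characterizes when a fixed strategy vector $\mathcal{S}$ admits LTL certificates, and \Cref{lem:guar_implies_cert,lem:cert_implies_guar} convert certificates between the GTS and LTL representations without touching $\mathcal{S}$, the argument reduces to a short chaining of these three results.

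For the forward direction, I would assume that there is a vector $\mathcal{G}$ of guarantee transition systems with $(\mathcal{S},\mathcal{G})$ a solution of certifying synthesis for $\varphi$. Applying \Cref{lem:guar_implies_cert} yields a vector $\Psi$ of LTL certificates such that $(\mathcal{S},\Psi)$ is a solution with LTL certificates, over the \emph{same} strategies $\mathcal{S}$. The ``only if'' half of \Cref{thm:soundness_completeness_certifying_synthesis} then gives $s_1 \pc \dots \pc s_n \models \varphi$. For the backward direction, I would assume $s_1 \pc \dots \pc s_n \models \varphi$. The ``if'' half of \Cref{thm:soundness_completeness_certifying_synthesis} provides a vector $\Psi$ of LTL certificates with $(\mathcal{S},\Psi)$ a solution, and \Cref{lem:cert_implies_guar} turns this into a vector $\mathcal{G}$ of guarantee transition systems with $(\mathcal{S},\mathcal{G})$ a GTS solution, again over the same $\mathcal{S}$. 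This establishes the existence claim and closes the biconditional.

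The step requiring the most care is not in this proof itself but in verifying that the three imported results compose cleanly: all of them must be phrased for one and the same fixed strategy vector $\mathcal{S}$, with only the certificate component changing. Since \Cref{lem:guar_implies_cert,lem:cert_implies_guar} are stated to preserve $\mathcal{S}$ and merely replace $\mathcal{G}$ by $\Psi$ or vice versa, this holds, and no further work is needed at the level of \Cref{thm:correctness_assumption_lts}.

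The substantive difficulty lives entirely in the lemmas rather than in this combination. Concretely, \Cref{lem:guar_implies_cert} must construct an LTL formula $\gb{j}$ capturing the exact input--output behavior of a GTS $g_j$ so that local satisfaction with respect to $\mathcal{G}_i$ coincides with satisfaction of the implication $\GB{i} \rightarrow \varphi_i$; the delicate point there is matching the valid-history condition against the implication semantics, including the abstraction to guarantee outputs $\guarOutputs{j}$. Conversely, \Cref{lem:cert_implies_guar} must realize an LTL certificate as a \emph{deterministic} GTS, for which the strategy itself, restricted to its guarantee outputs, is the canonical candidate. Those constructions, deferred to the appendix, are where the genuine obstacle resides; at the level of \Cref{thm:correctness_assumption_lts} the proof is a direct three-step chaining.
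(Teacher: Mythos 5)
Your proposal is correct and follows essentially the same route as the paper: the paper also derives \Cref{thm:correctness_assumption_lts} by chaining \Cref{lem:guar_implies_cert,lem:cert_implies_guar} with \Cref{thm:soundness_completeness_certifying_synthesis}, relying on the fact that both translation lemmas keep the strategy vector $\mathcal{S}$ fixed and only exchange the certificate component. Your remark that the substantive difficulty is deferred to the two lemmas (exact-behavior LTL formulas for GTS, and the strategy restricted to $\guarOutputs{i}$ as the deterministic GTS) matches the paper's structure precisely.
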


Thus, similar to LTL certificates, certifying synthesis with GTS allows for local reasoning and thus enables modularity of the system while it still ensures that correct solutions for all realizable specifications are found. In particular, enforcing deterministic certificates does not rule out strategies that can be obtained with either nondeterministic certificates or with classical distributed synthesis.

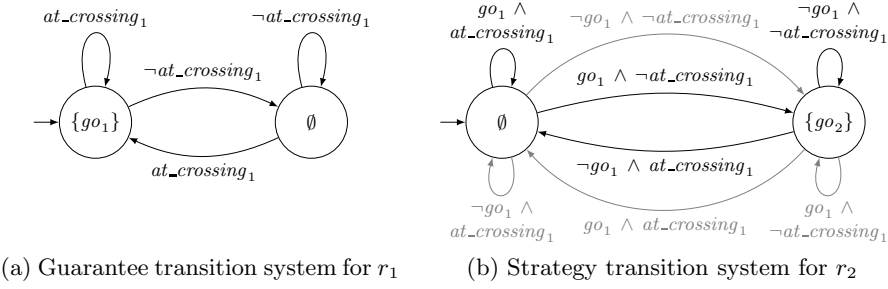
\begin{figure}[t]
	\begin{subfigure}{0.435\textwidth}
		\centering
		\scalebox{0.82}{
		\begin{tikzpicture}[>=latex,shorten >=0pt,auto,->,node distance=1cm,thin,every edge/.style={draw,font=\small}, initial text = ]
		
			\path[use as bounding box] (-1,-1.9) rectangle (4.5,1.85);
		
			\node[state,initial,minimum size=1.17cm]		(u0)		at (0,0)		{$\{\mathit{go}_1\}$};
			\node[state,minimum size=1.17cm]				(u1)		at (3.5,0)		{$\emptyset$};
			
			\path	(u0)	edge[bend left=25]	node		{$\neg \mathit{at\_crossing}_1$}	(u1)
						edge[loop above,looseness=10]		node		{$\mathit{at\_crossing}_1$}			(u0)
					(u1)	edge[bend left=25]	node		{$\mathit{at\_crossing}_1$}			(u0)
						edge[loop above,looseness=10]		node		{$\neg \mathit{at\_crossing}_1$}	(u1);
		\end{tikzpicture}}
		\caption{Guarantee transition system for $r_1$\label{fig:guarantee_r1}}
	\end{subfigure}
	\hfill
	\begin{subfigure}{0.56\textwidth}
		\centering
		\scalebox{0.82}{
		\begin{tikzpicture}[>=latex,shorten >=0pt,auto,->,node distance=1cm,thin,every edge/.style={draw,font=\small}, initial text = ]
		
			\path[use as bounding box] (-1,-1.9) rectangle (6.3,1.85);
		
			\node[state,initial,minimum size=1.17cm]		(t0)		at (0,0)		{$\emptyset$};
			\node[state,minimum size=1.17cm]				(t1)		at (5.3,0)		{$\{\mathit{go}_2\}$};
			
			\path	(t0)	edge[bend left=15]				node[align=center]		{$\mathit{go}_1\,\land\,\neg \mathit{at\_crossing}_1$}	(t1)
						edge[loop above,looseness=7]	node[align=center]	{$\mathit{go}_1~\land$ \\ $\mathit{at\_crossing}_1$}			(t0)
						edge[loop below,looseness=7,gray]	node[align=center]	{$\neg \mathit{go}_1~\land$ \\ $\mathit{at\_crossing}_1$}	(t0)
						edge[bend left=48,gray]			node[align=center]	{$\neg \mathit{go}_1\,\land\,\neg \mathit{at\_crossing}_1$}	(t1)
					(t1)	edge[bend left=15]				node	[align=center]	{$\neg\mathit{go}_1\,\land\,\mathit{at\_crossing}_1$}			(t0)
						edge[loop above,looseness=7]	node[align=center]		{$\neg\mathit{go}_1~\land$ \\ $\neg \mathit{at\_crossing}_1$}	(t1)
						edge[loop below,looseness=7,gray]	node[align=center]	{$\mathit{go}_1~\land$ \\ $\neg \mathit{at\_crossing}_1$}	(t1)
						edge[bend left=48,gray]			node[align=center]	{$\mathit{go}_1\,\land\,\mathit{at\_crossing}_1$}	(t0);
		\end{tikzpicture}}
		\caption{Strategy transition system for $r_2$\label{fig:strategy_r2}}
	\end{subfigure}
\caption{GTS and strategy for robots $r_1$ and $r_2$ from \Cref{sec:motivating_example}, respectively. The labels of the states denote the output of the TS in the respective state.}\label{fig:example_lts_2}
\end{figure}

As an example of the whole synthesis procedure of a distributed system with certifying synthesis and GTS, consider the manufacturing robots from \Cref{sec:motivating_example}. For simplicity, suppose that the robots do not have individual additional requirements $\varphi_{\mathit{add}_i}$. Hence, the full specification is given by $\varphi_\mathit{safe} \land \varphi_{\mathit{cross}_1} \land \varphi_{\mathit{cross}_2}$. Since $\mathit{go}_i$ is an output variable of robot $r_i$, we obtain the subspecifications $\varphi_i = \varphi_\mathit{safe} \land \varphi_{\mathit{cross}_i}$. A solution of certifying synthesis is then given by the strategies and GTS depicted in \Cref{fig:example_lts,fig:example_lts_2}. 
Note that $s_2$ only locally satisfies $\varphi_{\mathit{cross}_2}$ with respect to $g_1$ when assuming that $r_1$ is not immediately again at the intersection after crossing it. However, there are solutions with slightly more complicated certificates that do not need this assumption.
The parallel composition of $s_1$ and $s_2$ yields a strategy that allows $r_1$ to move forwards if it is at the crossing and that allows $r_2$ to move forwards otherwise.


\section{Computing Relevant Processes}

Both representations of certificates introduced in the last two sections consider the certificates of \emph{all} other system processes in the local objective of every system process $p_i$.
This is not always necessary since in some cases $\varphi_i$ is satisfiable by a strategy for $p_i$ even if another process deviates from its guaranteed behavior.

In this section, we present an optimization of certifying synthesis that reduces the number of considered certificates.
We compute a set of \emph{relevant processes $\relevantProcesses{i} \subseteq \sysProc\setminus\{p_i\}$} for every $p_i \in \sysProc$. 
Certifying synthesis then only considers the certificates of the relevant processes:
For LTL certificates, it requires that $s_i \models \gb{i} \land (\relGB{i} \rightarrow \varphi_i)$ holds, where $\relGB{i} = \{ \gb{j} \in \Psi \mid p_j \in \relevantProcesses{i} \}$.
For GTS, both $s_i \simresp{\guarOutputs{i}} g_i$ and $s_i \satresp{\mathcal{G}^\mathcal{R}_i} \varphi_i$ need to hold, where $\mathcal{G}^\mathcal{R}_i = \{ g_j \in \mathcal{G} \mid p_j \in \relevantProcesses{i}\}$.
Such solutions of certifying synthesis are denoted by $(\mathcal{S},\Psi)_\mathcal{R}$ and $(\mathcal{S},\mathcal{G})_\mathcal{R}$, respectively.

The construction of the relevant processes $\relevantProcesses{i}$ has to ensure that certifying synthesis is still sound and complete. In the following, we introduce a definition of relevant processes that does so.
It excludes processes from $p_i$'s set of relevant processes $\relevantProcesses{i}$ whose output variables do not occur in the subspecification~$\varphi_i$:

\begin{definition}[Relevant Processes]\label{def:ref_rel_proc}
	Let $\varphi$ be an LTL formula with decomposition $\myVec{\varphi_1,\dots,\varphi_n}$. The \emph{relevant processes $\relevantProcesses{i} \subseteq \sysProc\setminus\{p_i\}$} of system process $p_i \in \sysProc$ are given by $\relevantProcesses{i} = \{ p_j \in \sysProc \setminus \{p_i\} \mid \outputs{j} \cap \propositions{\varphi_i} \neq \emptyset \}$.
\end{definition}

Intuitively, since $\outputs{j} \cap \propositions{\varphi_i} = \emptyset$ holds for a process $p_j \in \sysProc \setminus \relevantProcesses{i}$ with $i \neq j$, the subspecification $\varphi_i$ does not restrict the satisfying valuations of the output variables of $p_j$. Thus, in particular, if a sequence satisfies $\varphi_i$, then it does so for any valuations of the variables in $\outputs{j}$. Hence, the guaranteed behavior of $p_j$ does not influence the satisfiability of $\varphi_i$ and thus $p_i$ does not need to consider it.
The proof of the following theorem stating this property is given in \Cref{app:relevant_processes}.

\begin{theorem}[Correctness of Relevant Processes]\label{thm:correctness_rel_proc}
	Let $\varphi$ be an LTL formula. Let $\mathcal{S} = \myVec{s_1, \dots, s_n}$ be a vector of strategies for the system processes.
	\begin{enumerate}
		\item Let $\Psi$ be a vector of LTL certificates. If $(\mathcal{S},\Psi)_{\mathcal{R}}$ is a solution of certifying synthesis for $\varphi$, then $s_1 \pc \dots \pc s_n \models \varphi$ holds.  If $s_1 \pc \dots \pc s_n \models \varphi$ holds, then there exists a vector $\Psi'$ of LTL certificates and a vector~$\mathcal{S}'$ of strategies such that $(\mathcal{S}',\Psi')_{\mathcal{R}}$ is a solution of certifying synthesis for $\varphi$.
		\item Let $\mathcal{G}$ be a vector of guarantee transition systems. If $(\mathcal{S},\mathcal{G})_{\mathcal{R}}$ is a solution of certifying synthesis for $\varphi$, then $s_1 \pc \dots \pc s_n \models \varphi$. If $s_1 \pc \dots \pc s_n \models \varphi$ holds, then there exists a vector $\mathcal{G}'$ of guarantee transition systems and a vector $\mathcal{S}'$ of strategies such that $(\mathcal{S}',\mathcal{G}')_{\mathcal{R}}$ is a solution of certifying synthesis for $\varphi$.
	\end{enumerate}
\end{theorem}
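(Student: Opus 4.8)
The plan is to reduce both parts of \Cref{thm:correctness_rel_proc} to the soundness and completeness results already established for the full local objectives, namely \Cref{thm:soundness_completeness_certifying_synthesis} for LTL certificates and \Cref{thm:correctness_assumption_lts} for guarantee transition systems. The two parts are structurally identical, so I would carry out part~1 (LTL) in detail and obtain part~2 (GTS) by the same argument, or more economically by pushing part~1 through the certificate translations of \Cref{lem:guar_implies_cert,lem:cert_implies_guar}, exactly as \Cref{thm:correctness_assumption_lts} was derived from \Cref{thm:soundness_completeness_certifying_synthesis}. The entire difficulty lies in comparing the restricted local objective, which only refers to $\relGB{i}$, with the full one, which refers to $\GB{i}$.

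For the soundness implication I would argue by monotonicity. Since $\relevantProcesses{i} \subseteq \sysProc\setminus\{p_i\}$ by \Cref{def:ref_rel_proc}, the conjunction $\relGB{i}$ consists of a subset of the conjuncts of $\GB{i}$, hence $\GB{i} \rightarrow \relGB{i}$ is valid and therefore $\relGB{i}\rightarrow\varphi_i$ entails $\GB{i}\rightarrow\varphi_i$. Consequently every restricted solution $(\mathcal{S},\Psi)_\mathcal{R}$ already satisfies the full objective $s_i \models \gb{i}\land(\GB{i}\rightarrow\varphi_i)$ and is thus a solution in the sense of \Cref{sec:certifying_synthesis}, so \Cref{thm:soundness_completeness_certifying_synthesis} yields $s_1 \pc \dots \pc s_n \models \varphi$. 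For GTS the matching observation is that enlarging a set of guarantee transition systems can only shrink the set of valid histories, so $\validHistory{t}{\mathcal{G}_i} \subseteq \validHistory{t}{\mathcal{G}^\mathcal{R}_i}$ for all $t$; hence $s_i \satresp{\mathcal{G}^\mathcal{R}_i}\varphi_i$ implies $s_i \satresp{\mathcal{G}_i}\varphi_i$, and \Cref{thm:correctness_assumption_lts} applies.

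For the completeness implication I would start from $s_1 \pc\dots\pc s_n \models\varphi$ and invoke \Cref{thm:soundness_completeness_certifying_synthesis} to obtain a full solution $(\mathcal{S},\Psi)$ whose certificates capture the exact behaviour of the corresponding strategies. It then remains to strengthen each local objective from $\GB{i}\rightarrow\varphi_i$ to $\relGB{i}\rightarrow\varphi_i$, for which the guiding principle is the independence property underlying \Cref{def:ref_rel_proc}: because $\outputs{j}\cap\propositions{\varphi_i}=\emptyset$ for every $p_j\in\sysProc\setminus(\relevantProcesses{i}\cup\{p_i\})$, the truth of $\varphi_i$ on a trace depends only on its projection to $\propositions{\varphi_i}$ and is invariant under arbitrary changes of the non-relevant outputs. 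Concretely, given a trace $\sigma=\comp(s_i,\gamma)\cup\gamma'$ with $\sigma\models\relGB{i}$, I would complete it to a trace $\hat\sigma$ in which the non-relevant processes additionally follow their certificates, chosen so as to leave $\sigma\cap\propositions{\varphi_i}$ unchanged; then $\hat\sigma\models\GB{i}$, so $\hat\sigma$ is a trace of the parallel composition and $\hat\sigma\models\varphi_i$, and independence transfers this back to $\sigma\models\varphi_i$. The restricted vectors $\mathcal{S}',\Psi'$ would be chosen accordingly, using the freedom granted by the statement.

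The main obstacle is precisely this completion: making $\hat\sigma$ restore all non-relevant certificates while simultaneously preserving its $\propositions{\varphi_i}$-projection is delicate, because a non-relevant output may be read by $p_i$ or by a relevant process and could thereby influence outputs that \emph{do} occur in $\varphi_i$. I expect to handle this either by a simultaneous induction on time showing that $\sigma$ and the corresponding run of the parallel composition agree on all relevant variables (hence on $\propositions{\varphi_i}$), or, exploiting the permitted choice of a fresh strategy vector $\mathcal{S}'$, by selecting strategies whose $\propositions{\varphi_i}$-relevant outputs do not depend on non-relevant inputs, so that the required coincidence holds by construction. Part~2 (GTS) then follows by the same independence and monotonicity reasoning, or via \Cref{lem:guar_implies_cert,lem:cert_implies_guar}.
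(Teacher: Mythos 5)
Your proposal follows the paper's proof essentially step for step: the soundness half is exactly the paper's monotonicity argument ($\relGB{i} \subseteq \GB{i}$ for LTL certificates, respectively $\validHistory{t}{\mathcal{G}_i} \subseteq \validHistory{t}{\mathcal{G}^{\mathcal{R}}_i}$ for GTS, followed by \Cref{thm:soundness_completeness_certifying_synthesis,thm:correctness_assumption_lts}), and for completeness the paper implements precisely your second fix, constructing fresh strategies $s'_i$ whose outputs depend only on the environment portion of the input — each $s'_i$ replays the outputs of $s_1 \pc \dots \pc s_n$ — so that exact-behavior certificates plus the independence property $\outputs{j} \cap \propositions{\varphi_i} = \emptyset$ for non-relevant $p_j$ yield the case split you describe (deviation by a relevant process falsifies $\relGB{i}$; deviation only by non-relevant processes leaves the $\propositions{\varphi_i}$-projection of the composition trace intact), with part 2 obtained by re-running this argument for GTS projected to $\guarOutputs{i}$. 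Note only that your first alternative (retaining the original strategies and arguing by induction on time) cannot succeed in general, since $s_i$ may exploit the certificate of a process outside $\relevantProcesses{i}$ and violate $\varphi_i$ when only that process deviates — which is exactly why the theorem quantifies over a fresh vector $\mathcal{S}'$, as the paper remarks after its statement.
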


Note that for certifying synthesis with relevant processes, we can only guarantee that for every vector of strategies $\myVec{s_1,\dots,s_n}$ whose parallel composition satisfies the specification, there exist \emph{some} strategies that are a solution of certifying synthesis. These strategies are not necessarily $s_1, \dots, s_n$: A strategy $s_i$ may make use of the certificate of a process $p_j$ outside of $\relevantProcesses{i}$. That is, it may violate its specification~$\varphi_i$ on an input sequence that does not stick to $g_j$ although $\varphi_i$ is satisfiable for this input. Strategy $s_i$ is not required to satisfy $\varphi_i$ on this input, a strategy that may only consider the certificates of the relevant processes, however, is.
As long as the definition of relevant processes allows for finding \emph{some} solution of certifying synthesis, like the one introduced in this section does as a result of \Cref{thm:correctness_rel_proc}, certifying synthesis is nevertheless sound and complete.


\section{Synthesizing Certificates}

In this section, we describe an algorithm for practically synthesizing strategies and deterministic certificates represented by GTS.
Our approach is based on \emph{bounded synthesis}~\cite{FinkbeinerS13} and bounds the size of the strategies and of the certificates. This allows for producing size-optimal solutions in either terms of strategies or certificates.
Like for monolithic bounded synthesis~\cite{FinkbeinerS13,FaymonvilleFRT17}, we encode the search for a solution of certifying synthesis of a certain size into a SAT constraint system. We reuse parts of the constraint system for monolithic systems.

An essential part of bounded synthesis is to determine whether a strategy satisfies an LTL formula $\varphi_i$. To do so, we first construct the equivalent universal co-Büchi automaton $\mathcal{A}_i$ with $\mathcal{L}(\mathcal{A}_i) = \mathcal{L}(\varphi_i)$. Then, we check whether~$\mathcal{A}_i$ accepts $\comp(s_i,\gamma)\cup\gamma'$ for all $\gamma \in (2^\inputs{i})^\omega$, $\gamma' \in (2^{V\setminus\variables{i}})^\omega$, \ie, whether all runs of $\mathcal{A}_i$ induced by $\comp(s_i,\gamma)\cup\gamma'$ contain only finitely many visits to rejecting states.
So far, we used \emph{local satisfaction} to formalize that in compositional synthesis with GTS a strategy only needs to satisfy its specification as long as the other processes stick to their guarantees. That is, we changed the satisfaction condition.
To reuse existing algorithms for bounded synthesis and, in particular, for checking whether a strategy is winning, however, we incorporate this property of certifying synthesis into the labeled transition system representing the strategy instead. In fact, we utilize the following observation:
A finite run of a universal co-Büchi automaton can never visit a rejecting state infinitely often.
 Hence, by ensuring that the automaton produces finite runs on all sequences that deviate from a guarantee, checking whether a strategy satisfies a specification can still be done by checking whether the runs of the corresponding automaton induced by the computations of the strategy visit a rejecting state only finitely often.

Therefore, we represent strategies by \emph{incomplete} transition systems in the following. The domain of definition of their transition function is defined such that the computation of a strategy is infinite if, and only if, the other processes stick to their guarantees. To formalize this, we utilize valid histories:

\begin{definition}[Local Strategy]
	A \emph{local strategy} $s_i$ for process $p_i \in \sysProc$ \emph{with respect to} a set $\mathcal{G}_i$ of GTS is represented by a TS $\mathcal{T}_i = (T,t_0,\tau,o)$ with a partial transition function $\tau: T \times 2^\inputs{i} \rightharpoonup T$.
	The domain of definition of $\tau$ is defined such that $\comp(s_i,\gamma)$ is infinite for $\gamma \in (2^\inputs{i})^\omega$ if, and only if, there exists $\gamma' \in (2^{V \setminus \variables{i}})^\omega$ such that $\pref{\comp(s_i,\gamma)}{t} \cup \pref{\gamma'}{t} \in\validHistory{t}{\mathcal{G}_i}$ holds for all points in time $t$.
\end{definition}

As an example, consider strategy $s_1$ for robot $r_1$ and guarantee transition system~$g_2$ for robot $r_2$, both depicted in \Cref{fig:example_lts}, again. From $s_1$, we can construct a local strategy $s'_1$ for $r_1$ with respect to $g_2$ by eliminating the gray transitions.

We now define certifying synthesis with local strategies: Given a specification~$\varphi$, certifying synthesis derives GTS $g_1,\dots,g_n$ and \emph{local strategies} $s_1,\dots,s_n$ respecting these guarantees, such that for all $p_i \in \sysProc$, $s_i \simresp{\guarOutputs{i}} g_i$ holds and all runs of $\mathcal{A}_i$ induced by $\comp(s_i,\gamma)\cup\gamma'$ contain finitely many visits to re\-jec\-ting states for all $\gamma \in (2^\inputs{i})^\omega$, $\gamma' \in (2^{V \setminus \variables{i}})^\omega$, where $\mathcal{A}_i$ is a universal co-Büchi automaton with $\mathcal{L}(\mathcal{A}_i)=\mathcal{L}(\varphi_i)$. 
Thus, we can reuse existing algorithms for checking satisfaction of a formula in our certifying synthesis algorithm when synthesizing local strategies instead of complete ones. Similar to monolithic bounded synthesis, we construct a constraint system encoding the search for local strategies and~GTS:

\begin{theorem}\label{thm:constraint_system}
	Let $A$ be an architecture, let $\varphi$ be an LTL formula, and let $\mathcal{B}$ be the size bounds. There is a SAT constraint system~$\constraintSystem{A}{\varphi}{\mathcal{B}}$ such that (1) if $\constraintSystem{A}{\varphi}{\mathcal{B}}$ is satisfiable, then~$\varphi$ is realizable in $A$, (2) if~$\varphi$ is realizable in~$A$ for the bounds~$\mathcal{B}$ and additionally $\propositions{\varphi_i} \subseteq \variables{i}$ holds for all $p_i\in\sysProc$, then $\constraintSystem{A}{\varphi}{\mathcal{B}}$ is satisfiable.
\end{theorem}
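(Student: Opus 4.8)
The plan is to construct the SAT constraint system $\constraintSystem{A}{\varphi}{\mathcal{B}}$ by adapting the standard bounded-synthesis encoding for monolithic systems to the compositional, certifying setting, and then to establish the two implications separately. For each system process $p_i$, the constraint system introduces Boolean variables encoding (a) a local strategy TS $\mathcal{T}_i$ of the size bounded by $\mathcal{B}$, including Boolean variables for the transition function $\tau$, the labeling function $o$, and an additional predicate marking where $\tau$ is \emph{defined} (since local strategies are incomplete); (b) a guarantee transition system $\guarTrans{i}$ of bounded size together with its transition and labeling functions; and (c) annotation functions in the style of \cite{FinkbeinerS13} that witness acceptance of the product of $\mathcal{T}_i$ with the universal co-Büchi automaton $\mathcal{A}_i$ for $\varphi_i$. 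The key novelty over the monolithic encoding is that the acceptance annotation only has to account for \emph{finite} runs being harmless, so the encoding must tie the domain of definition of $\tau$ to valid histories with respect to $\{g_j \mid j \neq i\}$, and it must separately encode the simulation requirement $s_i \simresp{\guarOutputs{i}} g_i$.

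I would build the constraints in three blocks. First, a \emph{well-formedness} block asserting that the strategy and GTS variables describe genuine transition systems of the allowed size. Second, a \emph{consistency} block encoding both $s_i \simresp{\guarOutputs{i}} g_i$ (existence of a simulation relation, itself encoded by Boolean variables over $T_i \times G_i$ satisfying the three conditions of the simulation definition) and the valid-history condition that governs where $\tau$ is defined: a transition on input $\inp{i}$ is present exactly when the induced one-step extension stays consistent with the outputs dictated by every $g_j$ with $p_j \in \sysProc \setminus \{p_i\}$ on the shared guarantee variables $\guarOutputs{j}$. Third, the standard bounded-synthesis \emph{acceptance} block, which for each $p_i$ encodes that along every \emph{infinite} run of $\mathcal{A}_i$ over a computation of $s_i$ the rejecting states are visited only finitely often, using the ranking/annotation technique; because $\tau$ is partial, runs that leave a valid history simply terminate and are automatically accepted. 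The conjunction of all three blocks over all $p_i$ is $\constraintSystem{A}{\varphi}{\mathcal{B}}$.

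For implication (1), I would argue that a satisfying assignment yields, for each $p_i$, a local strategy $s_i$ and a GTS $g_i$ meeting exactly the requirements of certifying synthesis with local strategies: the acceptance block guarantees the co-Büchi condition on all infinite (hence valid-history-respecting) computations, which by the definition of local satisfaction means $s_i \satresp{\mathcal{G}_i} \varphi_i$, and the consistency block guarantees $s_i \simresp{\guarOutputs{i}} g_i$. Thus $(\mathcal{S},\mathcal{G})$ is a solution of certifying synthesis, and by \Cref{thm:correctness_assumption_lts} we conclude $s_1 \pc \dots \pc s_n \models \varphi$, so $\varphi$ is realizable in $A$. For implication (2), I would start from realizability for the bounds $\mathcal{B}$, apply \Cref{thm:correctness_assumption_lts} to obtain complete strategies and GTS forming a certifying-synthesis solution, and then convert the complete strategies into local ones by restricting $\tau$ to the valid-history domain, producing a satisfying assignment. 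The side condition $\propositions{\varphi_i} \subseteq \variables{i}$ is what makes this conversion faithful: it guarantees that the acceptance of $\comp(s_i,\gamma)\cup\gamma'$ depends only on $p_i$'s own variables together with the guarantee variables constrained by valid histories, so that trimming the transition function does not spuriously destroy the co-Büchi annotation.

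The main obstacle I expect is the correct encoding of the domain of definition of $\tau$ via valid histories, and the proof that restricting a complete strategy to this domain preserves satisfaction. The subtlety is that valid histories quantify over \emph{all} infinite extensions $\hat{\sigma}$ and over the hidden variables $\gamma' \in (2^{V \setminus \variables{i}})^\omega$, whereas the strategy TS only sees $\inputs{i}$; one has to verify, using the remark from the excerpt that a finite word either is a valid history for all its extensions or for none, that a purely \emph{local}, prefix-determined transition predicate correctly captures ``the other processes stick to their guarantees.'' Establishing that the partial transition function is well-defined (each reachable state has, for each input, at most one consistent successor determined by the $g_j$ outputs) and that dropping the remaining transitions neither admits new infinite runs nor removes any valid one is the technical heart of implication (2), and is exactly where the hypothesis $\propositions{\varphi_i} \subseteq \variables{i}$ is indispensable.
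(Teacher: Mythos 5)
Your overall plan coincides with the paper's: the actual constraint system $\constraintSystem{A}{\varphi}{\mathcal{B}}$ consists of per-process copies of the monolithic bounded-synthesis encoding plus exactly your three blocks (constraint (a) makes each GTS complete, constraint (d) makes the partial strategy defined precisely on ``valid'' inputs, constraint (b) encodes $s_i \simresp{\guarOutputs{i}} g_i$ by Boolean variables for a simulation relation, and constraint (e) is the valid-annotation encoding justified by \Cref{thm:valid_annotation}), and your direction (2) is the paper's argument verbatim in outline: obtain a complete-strategy solution from \Cref{thm:correctness_assumption_lts}, then restrict each $s_i$ to the valid-history domain (\Cref{lem:complete_to_local}), with the hypothesis $\propositions{\varphi_i}\subseteq\variables{i}$ used exactly where you place it — when one hidden completion $\gamma'$ leaves the valid histories but another $\gamma''$ does not, the restricted computation stays infinite and must still be accepted together with $\gamma'$, which only the side condition guarantees.

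There are, however, two concrete gaps. First, in direction (1) you assert $s_i \satresp{\mathcal{G}_i} \varphi_i$ for the extracted local strategies and feed them directly into \Cref{thm:correctness_assumption_lts}; but local satisfaction and that theorem are formulated for \emph{complete} Moore strategies, whereas your SAT witness yields partial ones, so the step is not licensed as written. The paper bridges this with strategy extension and \Cref{lem:local_to_complete}: complete each $s_i$ by letting it behave like its own certificate $g_i$ from the point where $s_i$ gets stuck, then prove (i) the extension still simulates $g_i$, using that $s_i$ and $g_i$ agree on $\guarVariables{i}$ up to the sticking point, and (ii) the extension still locally satisfies $\varphi_i$, because a finite computation of $s_i$ means \emph{every} hidden completion leaves the valid histories, and — since $\guarOutputs{j} = \outputs{j}\cap\allInputs$ and GTS are deterministic — the extension's trace leaves them at the same time. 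This is a genuine lemma, not bookkeeping, and without it your soundness direction does not go through (the paper even proves more, namely that the composition of the local strategies equals that of their extensions, in \Cref{thm:corectness_local_strategies}, though for bare realizability the extensions themselves suffice as witnesses). Second, the obstacle you flag at the end — that the transition predicate must be prefix-determined although valid histories constrain variables $p_i$ cannot observe — is not resolved by your ``one-step consistency with the outputs dictated by every $g_j$'' test, since those outputs include $\guarOutputs{k}\not\subseteq\inputs{i}$ that $s_i$ never sees. The paper's device is to enlarge the strategy labels by the \emph{associated outputs} $\associatedOutputs{j}$ of the relevant processes and to add constraint (c), a reverse simulation forcing each relevant $g_k$'s outputs to agree with these labels; only then is the predicate $\validInput{j}{t}{\inp{i}}$ in constraint (d) well-defined and faithful to valid histories. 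Your proposal correctly identifies this as the crux but leaves the mechanism unsupplied.
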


Intuitively, the constraint system $\constraintSystem{A}{\varphi}{\mathcal{B}}$ consists of $n$ slightly adapted copies of the SAT constraint system for monolithic systems~\cite{FinkbeinerS13,FaymonvilleFRT17} as well as additional constraints that ensure that the synthesized local strategies correspond to the synthesized guarantees and that they indeed fulfill the conditions of certifying synthesis. The constraint system $\constraintSystem{A}{\varphi}{\mathcal{B}}$ is presented in~\Cref{app:synthesizing_certificates}.

Note that we build a \emph{single} constraint system for the whole certifying synthesis task. That is, the strategies and certificates of the individual processes are not synthesized completely independently. This is one of the main differences of our approach to the negotiation-based assume-guarantee synthesis algorithm~\cite{MajumdarMSZ20}.
While this prevents separate synthesis tasks and thus parallelizability, it eliminates the need for a negotiation between the processes. Moreover, it allows for completeness of the synthesis algorithm.
Although the synthesis tasks are not fully separated, the constraint system $\constraintSystem{A}{\varphi}{\mathcal{B}}$ is in most cases still significantly smaller and easier to solve than the one of classical distributed synthesis.

As indicated in \Cref{thm:constraint_system}, certifying synthesis with local strategies is not complete in general: We can only ensure completeness if the satisfaction of each subspecification solely depends on the variables that the corresponding process can observe.
This incompleteness is due to a slight difference in the satisfaction of a specification with local strategies and local satisfaction with complete strategies: The latter requires a strategy~$s_i$ to satisfy $\varphi_i$ if \emph{all processes} stick to their guarantees.
The former, in contrast, requires $s_i$ to satisfy~$\varphi_i$ if \emph{all processes producing observable outputs} stick to their guarantees.
Hence, if~$p_i$ cannot observe whether $p_j$ sticks to its guarantee, satisfaction with local strategies requires~$s_i$ to satisfy $\varphi_i$ even if $p_j$ deviates, while local satisfaction does not.

This slight change in definition is needed in order to incorporate the requirements of certifying synthesis into the transition system representing the strategy and thus to be able to reuse existing bounded synthesis frameworks. 
Although this advantage is at general completenesses expanse, we experienced that in practice many distributed systems, at least after rewriting the specification, indeed satisfy the condition that is needed for completeness in our approximation of certifying synthesis.
In fact, all benchmarks described in \Cref{sec:experiments} satisfy it.


\section{Experimental Results}\label{sec:experiments}

We have implemented certifying synthesis with local strategies and guarantee transition systems. It expects an LTL formula and its decomposition as well as the system architecture, and bounds on the sizes of the strategies and certificates as input.
Specification decomposition can easily be automated by, \eg, implementing \Cref{def:spec_decomposition}.
The implementation extends the synthesis tool BoSy~\cite{FaymonvilleFT17} for monolithic systems to certifying synthesis for distributed systems. In particular, we extend and adapt BoSy's SAT encoding~\cite{FaymonvilleFRT17} as described in~\Cref{app:synthesizing_certificates}.

\begin{table}[t]
    \centering
    \caption{Experimental results on scalable benchmarks. Reported is the parameter and the running time in seconds. We used a machine with a 3.1 GHz Dual-Core Intel Core i5 processor and 16 GB of RAM, and a timeout of 60min. For dist.\ BoSy, we use the SMT encoding and give the average runtime of 10~runs.\\}\label{table:results}
    \begin{tabular}{p{3.2cm}>{\centering}p{1.3cm}||>{\centering}p{2.1cm}|>{\centering}p{2.1cm}|>{\centering\arraybackslash}p{2.1cm}}
	     Benchmark & Param. & Cert. Synth. & Dist. BoSy & Dom. Strat.\\
	     \hline\hline
	     n-ary Latch & 2 & \textbf{0.89} & 41.26 & 4.75\\
	     & 3 & \textbf{0.91} & TO & 6.40\\
		 & \dots & \dots & \dots & \dots \\
	     & 6 & \textbf{12.26} & TO & 13.89\\
	     & 7 & 105.69 & TO & \textbf{15.06}\\
	     \hline
	     Generalized Buffer & 1 & \textbf{1.20} & 6.59 & 5.23\\
	     & 2 & \textbf{2.72} & 3012.51 & 10.53\\
	     & 3 & \textbf{122.09} & TO & 961.60\\
	     \hline
	     Load Balancer & 1 & \textbf{0.98} & 1.89 & 2.18\\
	     & 2 & \textbf{1.64} & 2.39 & --\\
	     \hline
	     Shift & 2 & \textbf{1.10} & 1.99 & 4.76 \\
	     & 3 & \textbf{1.13} & 4.16 & 7.04 \\
	     & 4 & \textbf{1.14} & TO & 11.13 \\
		 & \dots & \dots & \dots & \dots \\
	     & 7 & \textbf{9.01} & TO & 16.08 \\
	     & 8 & 71.89 & TO & \textbf{19.38} \\
	     \hline
	     Ripple-Carry Adder & 1 & \textbf{0.878} & 1.83 & --\\
	     & 2 & \textbf{2.09} & 36.84 & -- \\
	     & 3 & \textbf{106.45} & TO & -- \\
	     \hline
	     Manufacturing Robots & 2 & \textbf{1.10} & 2.45 & --\\
	     & 4 & \textbf{1.18} & 2.43 & --\\
	     & 6 & \textbf{1.67} & 3.20 & --\\
	     & 8 & \textbf{2.88} & 5.67 & --\\
	     & 10 & \textbf{48.83} & 221.16 & --\\
	     & 12 & \textbf{1.44} & TO & --\\
		 & \dots & \dots & \dots & \dots \\
	     & 42 & \textbf{373.90} & TO & --\\
    \end{tabular}
\end{table}

We compare our implementation to two extensions of BoSy: One for distributed systems~\cite{Baumeister17} and one for synthesizing individual dominant strategies, implementing the compositional synthesis algorithm presented in~\cite{DammF14}.
The results are shown in \Cref{table:results}. 
We used the SMT encoding of distributed~BoSy since the other ones either cause memory errors on almost all benchmarks~(SAT), or do not support most of our architectures (QBF).
Since the running times of the underlying SMT solver vary immensely, we report on the average running time of 10 runs.
Synthesizing individual dominant strategies is incomplete and hence we can only report on results for half of our benchmarks.
We could not compare our implementation to the iterative assume-guarantee synthesis tool Agnes~\cite{MajumdarMSZ20}, since it currently does not support most of our architectures or specifications.

The first four benchmarks stem from the synthesis competition~\cite{SYNTCOMP2018}. The latch is parameterized in the number of bits, the generalized buffer in the number of senders, the load balancer in the number of servers, and the shift benchmark in the number of inputs.
The fourth benchmark is a ripple-carry adder that is parameterized in the number of bits and the last benchmark describes the manufacturing robots from \Cref{sec:motivating_example} and is parameterized in the size of the objectives $\varphi_{\mathit{add}_i}$ of the robots.
The system architectures are given in~\Cref{app:benchmarks}.

For the latch, the generalized buffer, the ripple-carry adder, and the shift, certifying synthesis clearly outperforms distributed BoSy. 
For many parameters, BoSy does not terminate within 60min, while certifying synthesis solves the tasks in less than 13s.
For these benchmarks, a process does not~need to know the full behavior of the other processes. 
Hence, the certificates are notably smaller than the strategies. 
A process of the ripple-carry adder, for instance, only needs information about the carry bit of the previous process, the sum bit is irrelevant.

For the load balancer, in contrast, the certificates need to contain the full behavior of the processes.
Hence, the benefit of the compositional approach lies solely in the specification decomposition.
This advantage suffices to produce a solution faster than distributed BoSy.
Yet, for other benchmarks with full certificates, the overhead of synthesizing certificates dominates the benefit of specification decomposition for larger parameters, showcasing that certifying synthesis is particularly beneficial if a small interface between the processes exists. 

The manufacturing robot benchmark is designed such that the interface between the processes stays small for all parameters. Hence, it demonstrates the advantage of abstracting from irrelevant behavior. Certifying synthesis clearly outperforms distributed BoSy on all instances. The parameter corresponds to the minimal solution size with distributed BoSy which does not directly correspond to the solution size with certifying synthesis. Thus, the running times do not grow in parallel. For more details on this benchmark, including a more detailed table on the results, we refer to~\Cref{app:benchmarks}.

Thus, certifying synthesis is extremely beneficial for specifications where small certificates exist. This directly corresponds to the existence of a small interface between the processes of the system. Hence, bounding the size of the certificates indeed guides the synthesis procedure in finding solutions fast.

When synthesizing dominant strategies, the weaker winning condition poses implicit assumptions on the behavior of the other processes.
These assumptions do not always suffice: There are no independent dominant strategies for the load balancer, the ripple-carry adder, and the robots.
For the other bench\-marks, the algorithm terminates. 
While certifying synthesis performs better for the generalized buffer, the slight overhead of synthesizing explicit certificates becomes clear for the latch and the shift: 
For small parameters, certifying synthesis produces a solution faster. 
For larger parameters, synthesizing dominant strategies outperforms certifying synthesis.
Yet, the implicit assumptions do not encapsulate the required interface between the processes and thus they do not increase the understandability of the system's interconnections.


\section{Conclusions}

We have presented a synthesis algorithm that reduces the complexity of distributed synthesis by decomposing the synthesis task into smaller ones for the individual processes.
To ensure completeness, the algorithm synthesizes additional certificates that capture a certain behavior a process commits to. A process then makes use of the certificates of the other processes by only requiring its strategy to satisfy the specification if the other processes do not deviate from their certificates.
Synthesizing additional certificates increases the understandability of the system and the solution since the certificates capture the interconnections of the processes and which agreements they have to establish.
Moreover, the certificates form a contract between the processes: The synthesized strategies can be substituted as long as the new strategy still complies with the contract, \ie, as long as it does not deviate from the guaranteed behavior, enabling modularity.

We have introduced two representations of the certificates, as LTL formulas and as labeled transition systems. Both ensure soundness and completeness of the compositional certifying synthesis algorithm. For the latter representation, we presented an encoding of the search for strategies and certificates into a SAT constraint solving problem.
Moreover, we have introduced a technique for reducing the number of certificates that a process needs to consider by determining relevant processes.
We have implemented the certifying synthesis algorithm and compared it to two extensions of the synthesis tool BoSy to distributed systems.
The results clearly show the advantage of compositional approaches as well as of guiding the synthesis procedure by bounding the size of the certificates: For benchmarks where small interfaces between the processes exist, certifying synthesis outperforms the other distributed synthesis tools significantly. If no solution with small interfaces exist, the overhead of certifying synthesis is small.

%
%
%
\bibliographystyle{splncs04}
\bibliography{bib}

\begin{thebibliography}{10}
\providecommand{\url}[1]{\texttt{#1}}
\providecommand{\urlprefix}{URL }
\providecommand{\doi}[1]{https://doi.org/#1}

\bibitem{AlurMT15}
Alur, R., Moarref, S., Topcu, U.: {P}attern-{B}ased {R}efinement of
  {A}ssume-{G}uarantee {S}pecifications in {R}eactive {S}ynthesis. In: TACAS
  (2015)

\bibitem{Baumeister17}
Baumeister, J.E.: {Encodings of Bounded Synthesis for Distributed Systems}.
  {Bachelor's Thesis}, Saarland University (2017)

\bibitem{BloemCJK15}
Bloem, R., Chatterjee, K., Jacobs, S., K{\"{o}}nighofer, R.:
  {A}ssume-{G}uarantee {S}ynthesis for {C}oncurrent {R}eactive {P}rograms with
  {P}artial {I}nformation. In: TACAS (2015)

\bibitem{BrenguierRS17}
Brenguier, R., Raskin, J., Sankur, O.: {A}ssume-{A}dmissible {S}ynthesis. Acta
  Informatica  (2017)

\bibitem{ChatterjeeH07}
Chatterjee, K., Henzinger, T.A.: {A}ssume-{G}uarantee {S}ynthesis. In: TACAS
  (2007)

\bibitem{DammF11}
Damm, W., Finkbeiner, B.: {D}oes {I}t {P}ay to {E}xtend the {P}erimeter of a
  {W}orld {M}odel? In: FM (2011)

\bibitem{DammF14}
Damm, W., Finkbeiner, B.: {A}utomatic {C}ompositional {S}ynthesis of
  {D}istributed {S}ystems. In: FM (2014)

\bibitem{FaymonvilleFRT17}
Faymonville, P., Finkbeiner, B., Rabe, M.N., Tentrup, L.: {E}ncodings of
  {B}ounded {S}ynthesis. In: TACAS (2017)

\bibitem{FaymonvilleFT17}
Faymonville, P., Finkbeiner, B., Tentrup, L.: {BoSy}: {A}n {E}xperimentation
  {F}ramework for {B}ounded {S}ynthesis. In: CAV (2017)

\bibitem{FiliotJR10}
Filiot, E., Jin, N., Raskin, J.: {C}ompositional {A}lgorithms for {LTL}
  {S}ynthesis. In: ATVA (2010)

\bibitem{FinkbeinerGP21}
Finkbeiner, B., Geier, G., Passing, N.: {S}pecification {D}ecomposition for
  {R}eactive {S}ynthesis. In: NFM (2021)

\bibitem{FinkbeinerP20}
Finkbeiner, B., Passing, N.: {D}ependency-{B}ased {C}ompositional {S}ynthesis.
  In: ATVA (2020)

\bibitem{FinalVersion}
Finkbeiner, B., Passing, N.: {C}ompositional {S}ynthesis of {M}odular
  {S}ystems. In: ATVA (2021)

\bibitem{FinkbeinerS13}
Finkbeiner, B., Schewe, S.: {B}ounded {S}ynthesis. {STTT}  (2013)

\bibitem{SYNTCOMP2018}
Jacobs, S., Bloem, R., Colange, M., Faymonville, P., Finkbeiner, B., Khalimov,
  A., Klein, F., Luttenberger, M., Meyer, P.J., Michaud, T., Sakr, M., Sickert,
  S., Tentrup, L., Walker, A.: {T}he 5th {R}eactive {S}ynthesis {C}ompetition
  {(SYNTCOMP} 2018): {B}enchmarks, {P}articipants {\&} {R}esults. CoRR
  \textbf{abs/1904.07736} (2019)

\bibitem{KuglerS09}
Kugler, H., Segall, I.: {C}ompositional {S}ynthesis of {R}eactive {S}ystems
  from {L}ive {S}equence {C}hart {S}pecifications. In: TACAS (2009)

\bibitem{KupfermanPV06}
Kupferman, O., Piterman, N., Vardi, M.Y.: {S}afraless {C}ompositional
  {S}ynthesis. In: CAV (2006)

\bibitem{KupfermanV05}
Kupferman, O., Vardi, M.Y.: Safraless decision procedures. In: {FOCS} (2005)

\bibitem{MajumdarMSZ20}
Majumdar, R., Mallik, K., Schmuck, A., Zufferey, D.: {A}ssume-{G}uarantee
  {D}istributed {S}ynthesis. {IEEE} Trans. Comput. Aided Des. Integr. Circuits
  Syst.  (2020)

\bibitem{Pnueli77}
Pnueli, A.: {T}he {T}emporal {L}ogic of {P}rograms. In: FOCS (1977)

\bibitem{Compos97}
de~Roever, W.P., Langmaack, H., Pnueli, A. (eds.): {C}ompositionality: {T}he
  {S}ignificant {D}ifference, {COMPOS} (1998)

\bibitem{Safra88}
Safra, S.: {O}n the {C}omplexity of omega-{A}utomata. In: FOCS (1988)

\bibitem{TouatiBK95}
Touati, H.J., Brayton, R.K., Kurshan, R.P.: {T}esting {L}anguage {C}ontainment
  for omega-{A}utomata {U}sing {BDD}'s. Inf. Comput.  (1995)

\end{thebibliography}


\appendix
\section{Compositional Synthesis with Certificates}\label{app:certifying_synthesis}

\subsection*{Soundness and Completeness of Certifying Synthesis (Proof of \Cref{thm:soundness_completeness_certifying_synthesis})}

First, we prove the soundness of certifying synthesis. Intuitively, it follows from the fact that every system process is required to satisfy its own certificate.

\begin{lemma}[Soundness of Certifying Synthesis]\label{lem:soundness_certifying_synthesis}
	Let $\varphi$ be an LTL formula. Let $\mathcal{S} = \myVec{s_1, \dots, s_n}$ and $\Psi = \myVec{\gb{1}, \dots, \gb{n}}$ be vectors of strategies and LTL certificates, respectively, for the system processes.
	If $(\mathcal{S},\Psi)$ is a solution of certifying synthesis for $\varphi$, then $s_1 \pc \dots \pc s_n \models \varphi$ holds as well.
\end{lemma}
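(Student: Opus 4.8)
The plan is to show that if $(\mathcal{S},\Psi)$ is a solution of certifying synthesis, then the parallel composition $s_1 \pc \dots \pc s_n$ satisfies every conjunct of the decomposition, and hence $\varphi$ itself. Recall that by the assumptions in the excerpt $\bigwedge_{p_i \in \sysProc} \varphi_i = \varphi$, so it suffices to prove $s_1 \pc \dots \pc s_n \models \varphi_i$ for each $p_i \in \sysProc$. Fix such a process $p_i$. By the definition of certifying synthesis we know $s_i \models \gb{i} \land (\GB{i} \rightarrow \varphi_i)$, where $\GB{i} = \{\gb{j} \mid p_j \in \sysProc \setminus \{p_i\}\}$. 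The key observation is that every process satisfies \emph{its own} certificate, i.e.\ $s_j \models \gb{j}$ for all $j$, and this information is exactly what lets process $p_i$ discharge the antecedent $\GB{i}$.

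First I would fix an arbitrary input sequence and consider the trace $\sigma$ of the parallel composition $s_1 \pc \dots \pc s_n$ that it produces. The central step is to argue that $\sigma \models \gb{j}$ for every $p_j \in \sysProc \setminus \{p_i\}$, and therefore $\sigma \models \GB{i}$. This should follow from the fact that $s_j \models \gb{j}$ (part of the certifying-synthesis requirement) together with the semantics of strategy satisfaction: since $s_j \models \gb{j}$ means $\comp(s_j,\gamma) \cup \gamma' \models \gb{j}$ for \emph{all} inputs $\gamma \in (2^{\inputs{j}})^\omega$ and all $\gamma' \in (2^{V \setminus \variables{j}})^\omega$, in particular it holds for the projection of $\sigma$ onto $\inputs{j}$ and for the valuations of the remaining variables supplied by the other strategies in the composition. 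In other words, the trace $\sigma$ restricted appropriately is one of the computations quantified over, so $\sigma \models \gb{j}$. Once I have $\sigma \models \GB{i}$, I combine it with $\sigma \models (\GB{i} \rightarrow \varphi_i)$ — which follows from $s_i \models (\GB{i} \rightarrow \varphi_i)$ by the same projection argument applied to $p_i$ — and conclude $\sigma \models \varphi_i$ by modus ponens in LTL semantics.

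Since $p_i$ and the input sequence were arbitrary, I obtain $s_1 \pc \dots \pc s_n \models \varphi_i$ for all $p_i \in \sysProc$, and hence $s_1 \pc \dots \pc s_n \models \bigwedge_{p_i \in \sysProc} \varphi_i = \varphi$, completing the proof.

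The main obstacle is the bookkeeping that connects the \emph{local} satisfaction notion $s_j \models \gb{j}$ (quantifying over all inputs and all valuations of non-owned variables in isolation) with satisfaction of $\gb{j}$ by the \emph{global} trace $\sigma$ of the parallel composition. I expect this to require care in matching the quantified $\gamma$ and $\gamma'$ from the definition of $\models$ to the actual projections of $\sigma$; the subtlety is that the composition fixes the values of all variables consistently, whereas the definition of strategy satisfaction treats them as universally quantified, so the global trace is always a special case covered by the universal quantification. Verifying that the projections line up correctly — i.e.\ that $\sigma \cap \variables{j}$ really is the computation $\comp(s_j, \sigma \cap \inputs{j})$ extended by some $\gamma'$ — is the one genuinely technical point, and it relies on the definition of parallel composition guaranteeing that each component strategy reads exactly its own inputs from the shared trace.
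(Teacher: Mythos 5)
Your proposal is correct and follows essentially the same route as the paper's proof: fix an arbitrary input, let $\sigma$ be the trace of the parallel composition, observe that $\sigma = \comp(s_i,\sigma\cap\inputs{i})\cup(\sigma\cap(V\setminus\variables{i}))$ by the definition of parallel composition and disjointness of output sets, so $\sigma$ instantiates the universal quantification in each $s_i \models \gb{i} \land (\GB{i}\rightarrow\varphi_i)$, and then discharge the implications by modus ponens since all certificates hold on $\sigma$. The one technical point you flag but defer — that the projections line up — is exactly the step the paper settles with the same disjointness-of-outputs argument, so nothing is missing.
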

\begin{proof}
	Let $(\mathcal{S},\Psi)$ be a solution of certifying synthesis for $\varphi$.
	Furthermore, let $\GB{i} = \{ \gb{j} \mid p_j \in \sysProc \setminus \{p_i\} \}$.
	Then, by definition, $s_i \models \gb{i} \land (\GB{i} \rightarrow \varphi_i)$ holds for all system processes $p_i \in \sysProc$. 
	Let $\gamma \in (2^{\envOutputs})^\omega$ and, for the sake of better readability, let $\sigma = \comp(s_1 \pc\dots\pc s_n,\gamma)$. Then, in particular, \[\comp(s_i,\sigma\cap\inputs{i})\cup(\sigma\cap(V\setminus\variables{i})) \models \gb{i} \land (\GB{i} \rightarrow \varphi_i)\] holds for all $p_i \in \sysProc$.
	By the definition of parallel composition and since the sets of output variables are pairwise disjoint, we have \[\comp(s_i,\sigma\cap\inputs{i})\cup(\sigma\cap(V\setminus\variables{i})) = \sigma\] for all $p_i \in \sysProc$. Thus, $\sigma \models \gb{i} \land (\GB{i} \rightarrow \varphi_i)$ follows for all $p_i \in \sysProc$ and therefore
	\[\comp(s_1 \pc \dots \pc s_n, \gamma) \models (\gb{1} \land (\GB{1} \rightarrow \varphi_1)) \land \dots \land (\gb{n} \land (\GB{n} \rightarrow \varphi_n))\] follows.
	Thus, in particular $\comp(s_1 \pc \dots \pc s_n, \gamma) \models \gb{1} \land \dots \land \gb{n}$ holds and hence, by definition of $\GB{i}$, $\comp(s_1 \pc \dots \pc s_n,\gamma) \models \GB{i}$ follows for all $i$ with $1 \leq i \leq n$.
	Therefore, by the semantics of implication, $\comp(s_1 \pc \dots \pc s_n, \gamma) \models \varphi_1 \land \dots \land \varphi_n$ holds as well.
	Hence, since we chose an arbitrary $\gamma \in (2^\envOutputs)^\omega$ above, we have $\comp(s_1 \pc \dots \pc s_n, \gamma) \models \varphi_1 \land \dots \land \varphi_n$ for all $\gamma \in (2^{\envOutputs})^\omega$ and thus $s_1 \pc \dots \pc s_n \models \varphi$ follows with the definition of specification decomposition.\qed

\end{proof}

Second, we prove the completeness of certifying synthesis, \ie, we show that if $s_1 \pc \dots \pc s_n \models \varphi$ holds, then there exist certificates $\varphi_1, \dots, \varphi_n$ such that the strategies and the certificates build a solution of certifying synthesis for $\varphi$.. Intuitively, LTL formulas that capture the exact behavior of the corresponding strategy satisfy the requirements of certifying synthesis.

\begin{lemma}[Completeness of Certifying Synthesis]\label{lem:completeness_certifying_synthesis}
	Let $\varphi$ be an LTL formula. Let $\mathcal{S} = \myVec{s_1, \dots, s_n}$ be a vector of strategies for the system processes.
	If $s_1 \pc \dots \pc s_n \models \varphi$ holds, then there exists a vector $\Psi$ of LTL certificates such that $(\mathcal{S},\Psi)$ is a solution of certifying synthesis for $\varphi$.
\end{lemma}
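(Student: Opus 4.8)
The plan is to prove completeness by constructing, for each strategy $s_i$, an LTL certificate $\gb{i}$ that describes exactly the input/output behavior of $s_i$, and then verifying that these certificates satisfy all three requirements of certifying synthesis. Since each $s_i$ is a finite-state Moore transition system, its set of traces forms an $\omega$-regular language, and standard results guarantee that we can express this behavior as an LTL formula or at least reason about it as a language-theoretic object. The key intuition, already flagged in the excerpt, is that a strategy can serve as its own certificate: the certificate $\gb{i}$ should capture precisely the valuations of the variables $\variables{i}$ that $s_i$ produces, so that a computation satisfies $\gb{i}$ if and only if it is consistent with $s_i$'s behavior on the corresponding inputs.

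The three obligations to discharge are, for every $p_i \in \sysProc$: first, $s_i \models \gb{i}$; second, $s_i \models \GB{i} \rightarrow \varphi_i$; and third (implicitly, to make the construction well-defined) that the $\gb{i}$ are genuine LTL formulas over $V$. The first obligation is immediate from the construction: since $\gb{i}$ is defined to capture exactly the behavior of $s_i$, every computation of $s_i$ satisfies $\gb{i}$ by fiat. The real work is in the second obligation. Here I would argue as follows. Fix $p_i$ and take any $\gamma \in (2^{\inputs{i}})^\omega$ and $\gamma' \in (2^{V \setminus \variables{i}})^\omega$, and suppose the combined word $\comp(s_i,\gamma) \cup \gamma'$ satisfies $\GB{i} = \bigwedge_{j \neq i} \gb{j}$. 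Since each $\gb{j}$ captures the behavior of $s_j$, satisfying $\gb{j}$ forces the projection of this word onto $\variables{j}$ to coincide with an actual computation of $s_j$. The plan is to show that when \emph{all} the other strategies' behaviors are pinned down in this way, the combined word must in fact equal a computation of the full parallel composition $s_1 \pc \dots \pc s_n$ on some environment input, and then invoke the hypothesis $s_1 \pc \dots \pc s_n \models \varphi$ to conclude $\varphi$, hence $\varphi_i$, holds.

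The main obstacle will be the bookkeeping in this last reduction: one must check that fixing every $\gb{j}$ for $j \neq i$ together with running $s_i$ on $\gamma$ really does determine a unique trace of the parallel composition, and that this trace is consistent with the given $\gamma'$ on the relevant output variables. This requires using that the output variable sets $\outputs{j}$ are pairwise disjoint (so the certificates do not conflict on shared outputs) and that each process is Moore, so that the joint behavior is well-defined and deterministic given the environment inputs. A subtlety is that $\gamma'$ ranges over \emph{all} of $V \setminus \variables{i}$, which includes outputs of the other system processes; the certificate satisfaction constrains precisely these, collapsing the universal quantification over $\gamma'$ to the single consistent completion dictated by the other strategies, after which the argument mirrors the soundness proof's identity $\comp(s_i,\sigma \cap \inputs{i}) \cup (\sigma \cap (V \setminus \variables{i})) = \sigma$. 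Once this identification is made, applying the global satisfaction hypothesis and the definition of specification decomposition finishes the proof.
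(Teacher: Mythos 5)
Your proposal is correct and follows essentially the same route as the paper's proof: you construct each certificate $\gb{i}$ so that $\mathcal{L}(\gb{i})$ is exactly the set of computations of $s_i$ (matching the paper's assertion, at the same level of rigor, that such a formula exists for a finite-state strategy), discharge $s_i \models \gb{i}$ by construction, and handle $\GB{i} \rightarrow \varphi_i$ by showing that a word satisfying all certificates must coincide with the computation of $s_1 \pc \dots \pc s_n$ on its environment-input projection, using disjointness of the output sets and determinism, exactly as the paper does via the identity $\comp(s_i,\gamma)\cup\gamma' = \comp(s_1 \pc \dots \pc s_n,(\gamma\cup\gamma')\cap\envOutputs)$. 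The only cosmetic difference is that the paper makes the trivial case $\comp(s_i,\gamma)\cup\gamma' \not\models \GB{i}$ an explicit case split, which your framing (assuming the antecedent) subsumes.
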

\begin{proof}
	Let $s_1 \pc \dots \pc s_n \models \varphi$. We construct LTL certificates $\gb{1}, \dots, \gb{n}$ as follows:~$\gb{i}$ describes exactly the behavior of $s_i$, \ie, every computation of $s_i$ satisfies~$\gb{i}$ and there is no trace that satisfies $\gb{i}$ but that is no computation of $s_i$. Hence, $\mathcal{L}(\gb{i}) = \{ \comp(s_i,\gamma) \cup \gamma' \mid \gamma \in (2^\inputs{i})^\omega, \gamma' \in (2^{V \setminus \variables{i}})^\omega \}$ holds. Since a strategy~$s_i$ is represented by a finite-state labeled transition system, the construction of such an LTL formula~$\psi_i$ is always possible. Let $\GB{i} = \{ \gb{j} \mid p_j \in \sysProc \setminus \{p_i\} \}$.
		
	Note that by construction, for all sequences $\sigma \in (2^V)^\omega$, we have $\sigma \models \gb{i}$ if, and only if, $\sigma \cap \variables{i} = \comp(s_i,\gamma') \cap \variables{i}$ holds for some $\gamma' \in (2^{\inputs{i}})^\omega$.
	Thus, $\sigma \models \gb{1} \land \dots \land \gb{n}$ holds for some $\sigma \in (2^V)^\omega$ if, and only if, for all $i$ with $1 \leq i \leq n$, there exists $\gamma' \in (2^{\inputs{i}})^\omega$ such that $\sigma \cap \variables{i} = \comp(s_i,\gamma') \cap \variables{i}$ holds.
	By disjointness of the inputs and outputs of a process, we have $\comp(s_i,\gamma') \cap \inputs{i} = \gamma' \cap \inputs{i}$ for all $1 \leq i \leq n$ and all $\gamma' \in (2^{\inputs{i}})^\omega$. Thus, $\sigma \cap \inputs{i} = \gamma'$ follows since $\inputs{i} \subseteq \variables{i}$. Therefore, $\sigma \cap \variables{i} = \comp(s_i,\sigma \cap \inputs{i}) \cap \variables{i}$ holds for all $1 \leq i \leq n$.
	Hence, by definition of the parallel composition of strategies, we have $\sigma = \comp(s_1 \pc \dots \pc s_n,\sigma \cap \envOutputs)$ for all $\sigma \in (2^V)^\omega$ with $\sigma \models \gb{1}\land \dots \land \gb{n}$ since $\allInputs \setminus \allOutputs = \envOutputs$.
	
	It remains to show that $s_i \models \gb{i} \land (\GB{i} \rightarrow \varphi_i)$ holds for all $i$ with $1 \leq i \leq n$.
	Let $p_i \in \sysProc$. By construction of the LTL formulas, we clearly have $s_i \models \gb{i}$. 
	Next, let $\gamma \in (2^{\inputs{i}})^\omega$, $\gamma' \in (2^{V\setminus\variables{i}})^\omega$. We distinguish two cases:
	\begin{enumerate}
		\item Let $\comp(s_i,\gamma) \cup \gamma' \models \neg\GB{i}$ hold. Then, $\comp(s_i,\gamma) \cup \gamma' \models \GB{i} \rightarrow \varphi_i$ follows directly with the semantics of implication. 
		\item Let $\comp(s_i,\gamma) \cup \gamma' \not\models \neg\GB{i}$ hold. Then, we have $\comp(s_i,\gamma) \cup \gamma' \models \GB{i}$. By construction of $\GB{i}$ and since, as shown above, $s_i \models \gb{i}$ holds, we thus have $\comp(s_i,\gamma) \cup \gamma' \models \gb{1} \land \dots \land \gb{n}$. Therefore, by construction of the $\gb{i}$, \[\comp(s_i,\gamma) \cup \gamma' = \comp(s_1 \pc \dots \pc s_n,(\comp(s_i,\gamma)\cup\gamma') \cap \envOutputs)\] follows as shown above. 
		Since the sets of output variables in an architecture are pairwise disjoint, in particular $\outputs{i} \cap \envOutputs = \emptyset$ holds. Hence, we have $\comp(s_i,\gamma)\cap\envOutputs = \gamma\cap\envOutputs$ and thus \[\comp(s_i,\gamma) \cup \gamma' = \comp(s_1 \pc \dots \pc s_n,(\gamma\cup\gamma') \cap \envOutputs)\] follows.
		Since we have $s_1 \pc \dots \pc s_n \models \varphi$ by assumption, $s_1 \pc \dots \pc s_n \models \varphi_i$ holds by the definition of specification decomposition and the semantics of conjunction as well. Hence, $\comp(s_1 \pc \dots \pc s_n,(\gamma\cup\gamma')\cap\envOutputs) \models \varphi_i$ holds and thus $\comp(s_i,\gamma) \cup \gamma' \models \varphi_i$ follows. Hence, by the semantics of implication, $\comp(s_i,\gamma) \cup \gamma' \models \GB{i} \rightarrow \varphi_i$ holds as well.
	\end{enumerate}
	Hence, we have shown that $\comp(s_i,\gamma) \cup \gamma' \models \GB{i} \rightarrow \varphi_i$ holds for all $\gamma \in (2^\inputs{i})^\omega$, $\gamma' \in (2^{V\setminus\variables{i}})^\omega$ and hence $s_i \models \GB{i}\rightarrow\varphi_i$ follows. Together with the previous result that $s_i \models \gb{i}$ holds, we obtain $s_i \models \gb{i} \land (\GB{i} \rightarrow \varphi_i)$, concluding the proof.\qed
\end{proof}

With \Cref{lem:soundness_certifying_synthesis,lem:completeness_certifying_synthesis}, \Cref{thm:soundness_completeness_certifying_synthesis} follows directly.


\section{Certifying Synthesis with Deterministic Certificates}\label{app:computing_guarantees}

\subsection*{Translating LTL Certificates into GTS (Proof of \Cref{lem:guar_implies_cert})}

Let $\mathcal{S} = \myVec{s_1,\dots,s_n}$, $\mathcal{G} = \myVec{g_1,\dots,g_n}$ and let $\myVec{ \varphi_1, \dots, \varphi_n}$ be the decomposition of $\varphi$. For $p_i \in \sysProc$, let $\mathcal{G}_i := \{ g_j \mid p_j \in \sysProc \setminus \{p_i\} \}$.
	Let $(\mathcal{S},\mathcal{G})$ be a solution of certifying synthesis for $\varphi$.
	Hence, both $s_i \simresp{\guarOutputs{i}} g_i$ and $s_i \satresp{\mathcal{G}_i} \varphi_i$ hold for all $p_i \in \sysProc$.
	For each process $p_i \in \sysProc$, let~$\psi_i$ be the LTL formula describing the exact behavior of the guarantee transition system $g_i$, \ie, we have 
	\[\mathcal{L}(\psi_i) = \{ \comp(g_i,\gamma) \cup \gamma' \mid \gamma \in (2^\inputs{i})^\omega, \gamma' \in (2^{V\setminus\guarVariables{i}})^\omega \}.\]
	Since the state spaces of the guarantee transition systems $g_i$ are finite, such LTL formulas~$\gb{i}$ always exists. Let $\Psi := \myVec{\gb{1}, \dots, \gb{n}}$ be the vector of the LTL certificates for all system processes and let $\GB{i} := \{ \gb{j} \mid p_j \in \sysProc \setminus \{p_i\} \}$.
	We claim that $(\mathcal{S},\Psi)$ is a solution of certifying synthesis for $\varphi$ as well.
	Thus, we need to show that for all $p_i \in \sysProc$, $s_i \models \gb{i} \land (\GB{i} \rightarrow \varphi_i)$ holds. Let $p_i \in \sysProc$.
	
	First, we prove that $s_i \models \gb{i}$ holds:
	Since $s_i \simresp{\guarOutputs{i}} g_i$ holds by assumption, every sequence of variables in $\guarOutputs{i}$ produced by~$s_i$ is also produced by $g_i$. Hence, for all $\gamma \in (2^\inputs{i})^\omega$, $\comp(s_i,\gamma) \cap \guarVariables{i} = \comp(g_i,\gamma)$ holds. Let $\gamma \in (2^\inputs{i})^\omega$. Then, $\comp(s_i,\gamma) \cup \gamma' \in \mathcal{L}(\gb{i})$ follows for all $\gamma' \in (2^{V\setminus\variables{i}})^\omega$ by construction of $\gb{i}$ and since $\guarVariables{i} \subseteq \variables{i}$ holds by definition. Hence, we have $\comp(s_i,\gamma) \cup \gamma' \models \gb{i}$ for all $\gamma' \in (2^{V\setminus\variables{i}})^\omega$.
	Therefore, $s_i \models \gb{i}$ follows since we chose an arbitrary $\gamma \in (2^\inputs{i})^\omega$.
	
	It remains to show that $s_i \models \GB{i} \rightarrow \varphi_i$ holds, \ie, that for all $\gamma \in (2^\inputs{i})^\omega$ and all $\gamma' \in (2^{V\setminus\variables{i}})^\omega$, we have $\comp(s_i,\gamma) \cup \gamma' \models \GB{i} \rightarrow \varphi_i$. 
	Let $\gamma \in (2^\inputs{i})^\omega$ and $\gamma' \in (2^{V\setminus\variables{i}})^\omega$. We distinguish two cases:
	\begin{enumerate}
		\item Let $\pref{\comp(s_i,\gamma)}{t} \cup \pref{\gamma'}{t} \in \validHistory{t}{\mathcal{G}_i}$ hold for all points in time $t$. Then, since we have $s_i \satresp{\mathcal{G}_i} \varphi_i$ by assumption, $\comp(s_i,\gamma) \cup \gamma' \models \varphi_i$ holds. Thus, by the semantics of implication, $\comp(s_i,\gamma) \cup \gamma' \models \GB{i} \rightarrow \varphi_i$ holds as well.
		\item Let there be a point in time $t$ such that $\pref{\comp(s_i,\gamma)}{t} \cup \pref{\gamma'}{t} \not\in \validHistory{t}{\mathcal{G}_i}$ holds. For the sake of readability, let $\sigma := \pref{\comp(s_i,\gamma)}{t} \cup \pref{\gamma'}{t}$. Then, there is a guarantee transition system $g_j \in \mathcal{G}_i$ and an infinite extension $\hat{\sigma}$ of $\sigma$ such that $\sigma_k \cap \outputs{j} \neq \comp(g_j,\hat{\sigma}\cap\inputs{j})_k \cap \outputs{j}$ holds for some $k$ with $1 \leq k \leq t$ by definition of valid histories. Note that since strategies cannot look into the future and since we only consider points in time up to $t$, the particular infinite extension does not matter and is only needed since computations are defined on infinite input sequences. Thus, the above holds for \emph{all} infinite extensions of $\sigma$ as well. Therefore, by construction of $\gb{j}$, $\hat{\sigma} \not\in \mathcal{L}(\gb{j})$ holds for all infinite extensions $\hat{\sigma}$ of $\sigma$, \ie, they violate $\gb{j}$. Hence, by definition of~$\GB{i}$, by the semantics of conjunction, and since $g_j \in \mathcal{G}_i$ holds, all infinite extensions $\hat{\sigma}$ of $\sigma$ violate $\GB{i}$ as well, \ie, $\hat{\sigma} \not \in \mathcal{L}(\GB{i})$. Thus, $\hat{\sigma} \not\models \GB{i}$ and therefore, by the semantics of implication, $\hat{\sigma} \models \GB{i} \rightarrow \varphi_i$ holds for all infinite extensions $\hat{\sigma}$ of $\sigma$. Clearly, by construction of $\sigma$, $\comp(s_i,\gamma) \cup \gamma'$ is an infinite extension of $\sigma$. Thus, $\comp(s_i,\gamma) \cup \gamma' \models \GB{i}\rightarrow\varphi_i$ follows.
	\end{enumerate}
	Combining the above results for both cases, we obtain $\comp(s'_i,\gamma)\cup\gamma' \models \GB{i} \rightarrow \varphi_i$ for all $\gamma \in (2^\inputs{i})^\omega$, $\gamma' \in (2^{V\setminus\variables{i}})^\omega$ and thus $s_i \models \GB{i} \rightarrow \varphi_i$ follows.

	Hence, we have both $s_i \models \gb{i}$ and $s_i \models \GB{i} \rightarrow \varphi_i$ for all system processes $p_i \in \sysProc$.
	Thus, $(\mathcal{S},\Psi)$ is indeed a solution of certifying synthesis for $\varphi$. \qed

\subsection*{Translating GTS into LTL Certificates (Proof of \Cref{lem:cert_implies_guar})}

Let $\mathcal{S} = \myVec{s_1,\dots,s_n}$, $\Psi = \myVec{\gb{1},\dots,\gb{n}}$ and let $\myVec{\varphi_1, \dots, \varphi_n}$ be the decomposition of $\varphi$.
	For all $p_i \in \sysProc$, let $\Psi_i := \{ \gb{j} \mid p_j \in \sysProc \setminus \{p_i\}\}$.
	Let $(\mathcal{S},\Psi)$ be a solution of certifying synthesis for $\varphi$.
	Hence, $s_i \models \gb{i} \land (\GB{i} \rightarrow \varphi_i)$ holds for all $p_i \in \sysProc$.
	For each $p_i \in \sysProc$, we construct a labeled transition system representing the certificate~$g_i$ from~$s_i$ as follows: $g_i$ is a copy of $s_i$, yet, the labels of $g_i$ ignore output variables $v\in\outputs{i}$ that are not contained in $\guarOutputs{i}$, \ie, $o^g_i(t,\inp{i}) = o_i(t,\inp{i}) \cap \guarOutputs{i}$ for all states~$t$ and all inputs $\inp{i} \in 2^\inputs{i}$, where $o^g_i$ is the labeling function of $g_i$ and $o_i$ is the labeling function of $s_i$. Let $\mathcal{G} := \myVec{ g_1, \dots, g_n }$ be the vector of these guarantee transition systems and let $\mathcal{G}_i := \{ g_j \mid p_j \in \sysProc \setminus \{p_i\}\}$.
	We claim that $(\mathcal{S}, \mathcal{G})$ is a solution of certifying synthesis for $\varphi$ as well. Thus, we need to show that for all $p_i \in \sysProc$, both $s_i \simresp{\guarOutputs{i}} g_i$ and $s_i \satresp{\mathcal{G}_i} \varphi_i$ hold. Let $p_i \in \sysProc$.
	
	First, we prove that $s_i \simresp{\guarOutputs{i}} g_i$ holds: By construction of the guarantee transition systems, $g_i$ and $s_i$ only differ in their labels and, in fact, the labels agree on the variables in $\guarOutputs{i}$. Since the variables in $\guarOutputs{i}$ are the only output variables that are shared by $s_i$ and $g_i$ and, in particular, $\guarOutputs{i} \subseteq \outputs{i}$ holds, $s_i \simresp{\guarOutputs{i}} g_i$ follows.

	Hence, it remains to show that $s_i \satresp{\mathcal{G}_i} \varphi_i$ holds. That is, by definition of local satisfaction, we need to show that we have $\comp(s_i,\gamma) \cup \gamma' \models \varphi_i$ for all $\gamma \in (2^\inputs{i})^\omega$, $\gamma' \in (2^{V\setminus\variables{i}})^\omega$ with $\pref{\comp(s_i,\gamma)}{t} \cup \pref{\gamma'}{t} \in \validHistory{t}{\mathcal{G}_i}$ for all points in time $t$. 
	Let $\gamma \in (2^\inputs{i})^\omega$ and $\gamma' \in (2^{V\setminus\variables{i}})^\omega$. 
	By assumption, $s_i \models \gb{i} \land \GB{i} \rightarrow \varphi_i$ holds and thus, by the semantics of conjunction, we have $s_i \models \GB{i} \rightarrow \varphi_i$ as well. Hence, in particular, $\comp(s_i,\gamma) \cup \gamma' \models \GB{i} \rightarrow \varphi_i$ holds.
	We distinguish two cases:

	\begin{enumerate}
		\item Let $\comp(s_i,\gamma)\cup\gamma'\models\GB{i}$ hold. Then, since $\comp(s_i,\gamma) \cup \gamma' \models \GB{i} \rightarrow \varphi_i$ holds, $\comp(s_i,\gamma) \cup \gamma' \models \varphi_i$ follows immediately.
		\item Let $\comp(s_i,\gamma)\cup\gamma'\not\models\GB{i}$ hold. Then, there exists a process $p_j \in \sysProc\setminus\{p_i\}$ such that $\comp(s_i,\gamma)\cup\gamma'\not\models\gb{j}$ holds. Hence, $\comp(s_i,\gamma) \cup \gamma' \not\in \mathcal{L}(\gb{j})$. For the sake of better readability, let $\sigma := \comp(s_i,\gamma)\cup\gamma'$. Then, by construction of $\gb{j}$, we have $\sigma \neq \comp(g_j,\sigma \cap \inputs{j}) \cup (\sigma \cap (V \setminus \guarVariables{j}))$. That is, intuitively, $\sigma$ does not match a computation of $g_j$. Hence, $\sigma \cap \guarOutputs{j} \neq \comp(g_j,\sigma\cap\inputs{j}) \cap \guarOutputs{j}$ holds as we have $\guarOutputs{j} \subseteq V$. Thus, in particular, there is a point in time $k$ such that $\sigma_k \cap \guarOutputs{j} \neq \comp(g_j,\sigma\cap\inputs{j})_k \cap \guarOutputs{j}$ holds and therefore, by definition of valid histories, $\sigma \not\in \validHistory{t}{\{g_j\}}$ holds for all $t>k$. Since $p_j \in \sysProc\setminus\{p_i\}$ holds, we have $g_j \in \mathcal{G}_j$ as well and thus $\comp(s_i,\gamma)\cup\gamma' \not\in \validHistory{t}{\mathcal{G}_i}$ follows for all $t > k$.
	\end{enumerate}
	Combining the above results for both cases, we directly obtain $s_i \satresp{\mathcal{G}_i} \varphi_i$.
	
	Hence, we have both $s_i \simresp{\guarOutputs{i}} g_i$ and $s_i \satresp{\mathcal{G}_i} \varphi_i$ for all system processes $p_i \in \sysProc$. Thus, $(\mathcal{S},\mathcal{G})$ is indeed a solution of certifying synthesis for $\varphi$.\qed


\section{Computing Relevant Processes}\label{app:relevant_processes}

\subsection*{Correctness of Relevant Processes (Proof of \Cref{thm:correctness_rel_proc})}

In order to prove \Cref{thm:correctness_rel_proc}, we first prove the soundness of certifying synthesis when considering relevant processes instead of all other system processes: For every solution of certifying synthesis with relevant processes, the parallel composition of the derived strategies satisfies the specification. Intuitively, this is the case since we have $\relevantProcesses{i} \subseteq \sysProc$ for all $p_i \in \sysProc\setminus\{p_i\}$.

\begin{lemma}\label{lem:soundness_rel_proc}
	Let $\varphi$ be an LTL formula. Let $\mathcal{S} = \myVec{s_1,\dots,s_n}$ be a vector of strategies for the system processes. (1) Let $\Psi$ be a vector of LTL certificates. If $(\mathcal{S},\Psi)_{\mathcal{R}}$ is a solution of certifying synthesis for $\varphi$, then $s_1 \pc \dots \pc s_n \models \varphi$ holds. (2) Let $\mathcal{G}$ be a vector of guarantee transition systems. If $(\mathcal{S},\mathcal{G})_{\mathcal{R}}$ is a solution of certifying synthesis for $\varphi$, then $s_1 \pc \dots \pc s_n \models \varphi$ holds.
\end{lemma}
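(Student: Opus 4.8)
The plan is to prove both parts almost entirely by rerunning arguments already established earlier, exploiting the simple observation that replacing $\GB{i}$ by the smaller assumption set $\relGB{i}$ only \emph{weakens} the hypothesis of each local implication, which is always harmless for soundness. Concretely, for part (1) I would mirror the soundness proof of \Cref{lem:soundness_certifying_synthesis} almost verbatim, inserting one subset observation; for part (2) I would first translate the guarantee transition systems into LTL certificates exactly as in the proof of \Cref{lem:guar_implies_cert} (but relative to the relevant processes) and then invoke part (1).

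For part (1), fix an arbitrary $\gamma \in (2^{\envOutputs})^\omega$ and set $\sigma = \comp(s_1 \pc \dots \pc s_n, \gamma)$. Since $(\mathcal{S},\Psi)_{\mathcal{R}}$ is a solution, $s_i \models \gb{i} \land (\relGB{i} \rightarrow \varphi_i)$ holds for every $p_i \in \sysProc$; in particular $s_i \models \gb{i}$. As in \Cref{lem:soundness_certifying_synthesis}, the definition of parallel composition together with the pairwise disjointness of the output sets yields $\comp(s_i, \sigma \cap \inputs{i}) \cup (\sigma \cap (V \setminus \variables{i})) = \sigma$, so $\sigma \models \gb{i}$ for all $i$ and thus $\sigma \models \gb{1} \land \dots \land \gb{n}$. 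Because $\relevantProcesses{i} \subseteq \sysProc \setminus \{p_i\}$, the set $\relGB{i}$ is a subset of $\{\gb{j} \mid p_j \in \sysProc\setminus\{p_i\}\}$, hence a subconjunction of $\gb{1} \land \dots \land \gb{n}$; therefore $\sigma \models \relGB{i}$. Applying the implication gives $\sigma \models \varphi_i$ for every $i$, so $\sigma \models \varphi_1 \land \dots \land \varphi_n = \varphi$. Since $\gamma$ was arbitrary, $s_1 \pc \dots \pc s_n \models \varphi$ follows with the definition of specification decomposition.

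For part (2), I would reduce to part (1) by constructing, for each $p_i$, an LTL certificate $\gb{i}$ capturing the exact behavior of $g_i$ exactly as in the proof of \Cref{lem:guar_implies_cert}, and then showing that $(\mathcal{S},\Psi)_{\mathcal{R}}$ is an LTL solution with relevant processes. The argument for $s_i \models \gb{i}$ is unchanged: $s_i \simresp{\guarOutputs{i}} g_i$ forces every $\guarOutputs{i}$-sequence produced by $s_i$ to agree with $g_i$. For $s_i \models \relGB{i} \rightarrow \varphi_i$ I would rerun the case distinction of \Cref{lem:guar_implies_cert} with $\mathcal{G}_i$ replaced by $\mathcal{G}^\mathcal{R}_i$ and $\GB{i}$ replaced by $\relGB{i}$: if the computation is a valid history with respect to $\mathcal{G}^\mathcal{R}_i$ at every point in time, then $s_i \satresp{\mathcal{G}^\mathcal{R}_i} \varphi_i$ yields $\varphi_i$ directly; otherwise some $g_j$ with $p_j \in \relevantProcesses{i}$ is violated at some point, so its certificate $\gb{j} \in \relGB{i}$ is falsified by every infinite extension, making $\relGB{i}$ false and the implication vacuously true. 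This establishes $(\mathcal{S},\Psi)_{\mathcal{R}}$ as an LTL solution, and part (1) concludes.

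I expect the only real subtlety to be bookkeeping rather than a genuine obstacle: I must check that the violating process in the non-valid-history case of part (2) indeed lies in $\relevantProcesses{i}$ so that its certificate appears in $\relGB{i}$ (which holds immediately by the definition of $\mathcal{G}^\mathcal{R}_i$), and that the subset relation $\relGB{i} \subseteq \{\gb{j} \mid p_j \in \sysProc\setminus\{p_i\}\}$ is invoked in the right direction in part (1). Notably, the genuinely delicate aspect of \Cref{def:ref_rel_proc} — that dropping the non-relevant certificates cannot destroy satisfiability of $\varphi_i$ — is needed only for \emph{completeness}; for soundness, shrinking the assumption set is always safe, which is precisely why both parts go through by essentially replaying the earlier proofs with one weakened hypothesis.
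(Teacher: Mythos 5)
Your proof is correct. On part (1) you and the paper rest on the same core fact---that $\relGB{i}$ is a subconjunction of $\GB{i}$ because $\relevantProcesses{i} \subseteq \sysProc\setminus\{p_i\}$---but package it differently: the paper lifts the hypothesis once and for all, noting that $s_i \models \gb{i} \land (\relGB{i} \rightarrow \varphi_i)$ entails $s_i \models \gb{i} \land (\GB{i} \rightarrow \varphi_i)$ by strengthening the antecedent, so $(\mathcal{S},\Psi)$ is a solution of plain certifying synthesis and \Cref{thm:soundness_completeness_certifying_synthesis} finishes, while you inline the trace-level argument of \Cref{lem:soundness_certifying_synthesis}; the paper's reduction is shorter, yours is self-contained, and the substance is identical. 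On part (2) your route is genuinely different: the paper never leaves the GTS setting---since $\mathcal{G}^\mathcal{R}_i \subseteq \mathcal{G}_i$ and a valid history must agree with \emph{every} GTS in the set, $\validHistory{t}{\mathcal{G}_i} \subseteq \validHistory{t}{\mathcal{G}^\mathcal{R}_i}$ holds, so $s_i \satresp{\mathcal{G}^\mathcal{R}_i} \varphi_i$ directly yields $s_i \satresp{\mathcal{G}_i} \varphi_i$, making $(\mathcal{S},\mathcal{G})$ an ordinary solution to which \Cref{thm:correctness_assumption_lts} applies---whereas you detour through the LTL representation, re-proving \Cref{lem:guar_implies_cert} relative to $\relevantProcesses{i}$ and feeding the result into your part (1). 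Your detour is sound: the case analysis transfers verbatim because, as you checked, the deviating process in the non-valid-history case lies in $\relevantProcesses{i}$ by definition of $\mathcal{G}^\mathcal{R}_i$, so its exact-behavior certificate $\gb{j}$ indeed occurs in $\relGB{i}$ and is falsified by every infinite extension of the offending prefix. The cost is a full rerun of the translation lemma where the paper needs only a two-line monotonicity observation; what it buys is the reusable byproduct that the GTS-to-LTL translation itself respects relevant processes. Your closing diagnosis matches the paper's structure exactly: shrinking the assumption set can only weaken the local hypotheses, so soundness is automatic, and the delicate content of \Cref{def:ref_rel_proc} is needed only for the completeness direction (\Cref{lem:completeness_rel_proc_ltl,lem:completeness_rel_proc_lts}).
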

\begin{proof}
	(1) Assume that $(\mathcal{S},\Psi)_{\mathcal{R}}$ is a solution of certifying synthesis for $\varphi$.
	Let $\Psi = \myVec{\gb{1},\dots,\gb{n}}$, $\GB{i} = \{ \gb{j} \mid p_j \in \sysProc \setminus \{p_i\}\}$, and $\relGB{i} = \{ \gb{j} \mid p_j \in \relevantProcesses{i}\}$.
	Then, we have $s_i \models \gb{i} \land (\relGB{i} \rightarrow \varphi_i)$ for all $p_i \in \sysProc$. By construction of the relevant processes, $\relevantProcesses{i} \subseteq \sysProc\setminus\{p_i\}$ holds and thus $\relGB{i} \subseteq \GB{i}$ follows. Hence, since $s_i \models \gb{i} \land (\relGB{i} \rightarrow \varphi_i)$ holds, $s_i \models \gb{i} \land (\GB{i} \rightarrow \varphi_i)$ follows with the semantics of conjunction and implication. Thus, $(\mathcal{S},\Psi)$ is a solution of certifying synthesis for~$\varphi$ as well and therefore $s_1 \pc \dots \pc s_n \models \varphi$ follows with \Cref{thm:soundness_completeness_certifying_synthesis}.
	
	(2) Assume that $(\mathcal{S},\mathcal{G})_{\mathcal{R}}$ is a solution of certifying synthesis for $\varphi$ and let $\mathcal{G} = \myVec{g_1,\dots,g_n}$, $\mathcal{G}_{i} = \{ g_{j} \mid p_j \in \sysProc \setminus \{p_i\}\}$, and $\mathcal{G}^{\mathcal{R}}_{i} = \{ g_{j} \mid p_j \in \relevantProcesses{i}\}$.
		Then, we have both $s_i \satresp{\mathcal{G}^{\relevantProcessesFunc}_i} \varphi_i$ and $s_i \simresp{\guarOutputs{i}} g_i$ for all $p_i \in \sysProc$. 
		Thus, $\comp(s_i,\gamma) \cup \gamma' \models \varphi_i$ holds for all $\gamma \in (2^\inputs{i})^\omega$ and $\gamma' \in (2^{V\setminus\variables{i}})^\omega$ with $\comp(s_i,\gamma)\cup\gamma' \in \validHistory{t}{\mathcal{G}^{\relevantProcessesFunc}_i}$ for all points in time $t$. 
		Since $\relevantProcesses{i} \subseteq \sysProc\setminus\{p_i\}$ holds by construction of the relevant processes, we have $\mathcal{G}^{\relevantProcessesFunc}_i \subseteq \mathcal{G}_i$ as well. Hence, $\comp(s_i,\gamma)\cup\gamma' \in \validHistory{t}{\mathcal{G}_i}$ follows. 
		Thus, $s_i \satresp{\mathcal{G}_i} \varphi_i$ holds for all $p_i \in \sysProc$ and therefore $(\mathcal{S},\mathcal{G})$ is a solution of certifying synthesis for $\varphi$ as well. Hence, by \Cref{thm:correctness_assumption_lts}, $s_1 \pc \dots \pc s_n \models \varphi$ holds.
\end{proof}

Next, we show that a slightly weaker but still sufficient notion of completeness than used before holds for certifying synthesis with relevant processes: When considering all other processes, we showed that if $s_1 \pc \dots \pc s_n \models \varphi$ holds, then there is a vector $\Psi$ of LTL certificates and a vecot $\mathcal{G}$ of guarantee transition systems such that $(\mathcal{S},\Psi)$ is a solution of certifying synthesis with LTL certificates for $\varphi$ and such that $(\mathcal{S},\mathcal{G})$ is a solution of certifying synthesis with guarantee transition systems for $\varphi$, where $\mathcal{S} = \myVec{ s_1,\dots,s_n }$.
When considering only the certificates of relevant processes, we cannot prove this property: A strategy $s_i$ may make use of a certificate of a process $p_j$ outside of $\relevantProcesses{i}$, \ie, it may violate~$\varphi_i$ on an input sequence that deviates from $g_j$ although $\varphi_i$ is satisfiable for this input. While $s_i$ is not required to satisfy $\varphi_i$ on this input, a strategy that may only consider the certificates of relevant processes, however, is. In this case, $s_i$ does not satisfy the requirements of certifying synthesis when only considering relevant certificates.
However, we can show that if if $s_1 \pc \dots \pc s_n \models \varphi$ holds, then there are \emph{some strategies} $s'_1, \dots, s'_n$ such that we can construct certificates that, together with $\mathcal{S}'=\myVec{s'_1,\dots,s'_n}$, form a solution of certifying synthesis for $\varphi$.

The main idea is to construct strategies $s'_i$ that behave on every input sequence as $s_i$ on input sequences that can occur in the parallel composition of all strategies. Since the parallel composition satisfies $\varphi$ by assumption, the strategies~$s'_i$ do so on all input sequences that match the relevant certificates. First, we show this for certifying synthesis with full strategies and LTL certificates:

\begin{lemma}\label{lem:completeness_rel_proc_ltl}
	Let $\varphi$ be an LTL formula. Let $s_1, \dots, s_n$ be strategies for the system processes. If $s_1 \pc \dots \pc s_n \models \varphi$ holds, then there exists a vector $\mathcal{S}'$ of strategies and a vector $\Psi$ of LTL certificates such that $(\mathcal{S}',\Psi)_{\mathcal{R}}$ is a solution of certifying synthesis for $\varphi$.
\end{lemma}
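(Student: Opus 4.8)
The plan is to keep the certificates maximally informative and to \emph{repair} the strategies instead. Concretely, I would construct new strategies $s'_1,\dots,s'_n$ and then let each certificate $\gb{i}$ describe the exact behaviour of $s'_i$, exactly as in the completeness proof of \Cref{lem:completeness_certifying_synthesis}. With this choice $s'_i \models \gb{i}$ holds by construction, so the entire burden reduces to establishing $s'_i \models \relGB{i} \rightarrow \varphi_i$ for every $p_i \in \sysProc$, where $\relGB{i} = \{ \gb{j} \mid p_j \in \relevantProcesses{i}\}$. Taking $\mathcal{S}' = \myVec{s'_1,\dots,s'_n}$ and $\Psi = \myVec{\gb{1},\dots,\gb{n}}$ then yields the claimed solution $(\mathcal{S}',\Psi)_{\mathcal{R}}$.

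The construction of $s'_i$ is the heart of the argument. Following the idea sketched before the lemma, I would define $s'_i$ so that on every input it mimics the behaviour that $s_i$ exhibits \emph{inside} the global composition $C := s_1 \pc \dots \pc s_n$: $s'_i$ internally tracks a run of $C$ consistent with the inputs observed so far, overwriting the values of inputs coming from non-relevant processes $p_k \in \sysProc \setminus (\relevantProcesses{i} \cup \{p_i\})$ by the values that $C$ itself produces, and outputs the $p_i$-component of that run. Since $p_i$'s output in $C$ is already a function of its inputs $\inputs{i}$, on input sequences that genuinely occur in $C$ this simply reproduces $s_i$; the purpose of the construction is to keep $s'_i$ on a $\varphi$-satisfying run of $C$ even when it is fed arbitrary non-relevant inputs.

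Two structural facts drive the verification. First, by \Cref{def:ref_rel_proc} every non-relevant process $p_k$ satisfies $\outputs{k} \cap \propositions{\varphi_i} = \emptyset$, so the truth of $\varphi_i$ on any sequence is independent of the outputs of non-relevant processes. Second, every input-only conjunct $\xi$ of $\varphi$ (one with $\propositions{\xi} \cap \allOutputs = \emptyset$) mentions only free environment variables, and since $s_1 \pc \dots \pc s_n \models \varphi$ holds for \emph{all} environment inputs, such a $\xi$ must be valid and hence holds on every sequence. Using these, the key step runs as follows: fix $p_i$ and take arbitrary $\gamma,\gamma'$ with $\sigma := \comp(s'_i,\gamma)\cup\gamma' \models \relGB{i}$. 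Because each relevant $\gb{j}$ captures the behaviour of $s'_j$, the projection $\sigma \cap \variables{j}$ is a computation of $s'_j$ for every $p_j \in \relevantProcesses{i}$, so $\sigma$ is consistent with the $C$-run that $s'_i$ tracks. I would then exhibit a genuine computation $\sigma^\ast$ of $C$, which satisfies $\varphi$ and therefore $\varphi_i$, that agrees with $\sigma$ on all of $\propositions{\varphi_i}$: agreement on $p_i$'s outputs holds by construction of $s'_i$, on relevant outputs by the certificates, and on the remaining propositions by the two facts above (non-relevant outputs are invisible to $\varphi_i$, and the still-unconstrained environment positions only feed conjuncts that hold for all their values). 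From $\sigma^\ast \models \varphi_i$ and the agreement we obtain $\sigma \models \varphi_i$, hence $s'_i \models \relGB{i}\rightarrow\varphi_i$.

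The main obstacle is the construction itself: making $s'_i$ a well-defined Moore strategy over $\inputs{i}$ that provably tracks a $\varphi$-satisfying run of $C$, and nailing down the consistency argument that produces $\sigma^\ast$. The delicate point is that $\gamma$ may supply arbitrary values for inputs coming from non-relevant processes and for environment variables that $p_i$ does not observe; I have to show that, once the relevant certificates in $\relGB{i}$ hold, these free choices cannot drive $s'_i$'s output off a composition run that satisfies $\varphi_i$. This is precisely where the definition of relevant processes and of the decomposition must be exploited, and it is the reason one can only guarantee \emph{some} strategies $\mathcal{S}'$ rather than the original $\mathcal{S}$.
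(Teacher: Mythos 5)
Your overall plan is exactly the paper's: keep exact-behavior certificates $\gb{i}'$ for repaired strategies $s'_i$ that reproduce, on arbitrary inputs, the behavior $s_i$ exhibits inside $C := s_1 \pc \dots \pc s_n$, and verify $s'_i \models \relGBPrime{i} \rightarrow \varphi_i$ by splitting into ``some relevant process deviates, so the antecedent fails'' versus ``only non-relevant processes deviate, so by \Cref{def:ref_rel_proc} the trace agrees with a genuine run of $C$ on $\propositions{\varphi_i}$.'' The genuine gap is in how you handle environment variables that $p_i$ does not observe. The paper does not build $s'_i$ as a machine that ``internally tracks'' a run of $C$; it defines the computations extensionally, per pair $(\gamma,\gamma')$: $\comp(s'_i, (\gamma \cup \gamma') \cap \inputs{i}) := (\comp(C, \gamma) \cap \outputs{i}) \cup ((\gamma \cup \gamma')\cap\inputs{i})$, so the comparison run $\sigma^\ast = \comp(C, (\gamma\cup\gamma')\cap\envOutputs)$ is taken on the \emph{actual} environment sequence of the trace under consideration and agrees with $\sigma$ on all of $\envOutputs$ by construction; your fact (2) is never needed. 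In your tracking construction, by contrast, the tracked run must guess the values of variables in $\envOutputs \setminus \inputs{i}$, so $\sigma^\ast$ and $\sigma$ can disagree there, and your patch --- that such positions ``only feed conjuncts that hold for all their values'' --- is false in general: a conjunct such as $\Globally(e \rightarrow \Next o)$ with $e \in \envOutputs\setminus\inputs{i}$ and $o \in \outputs{i}$ lands in $\varphi_i$ by \Cref{def:spec_decomposition} but is not input-only, so its truth does depend on exactly the positions where the two traces may differ. The agreement step of your argument therefore fails on these propositions.

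A second, related divergence: you only overwrite inputs from non-relevant processes and worry (rightly) that no run of $C$ is consistent with the observed inputs once a \emph{relevant} process deviates. The paper avoids this entirely by making $s'_i$'s outputs depend only on the environment portion of the input, i.e., it ignores deviations of \emph{all} other system processes, relevant or not; the relevant-deviation case is then handled purely at verification time, since a deviating $p_j \in \relevantProcesses{i}$ falsifies $\gb{j}'$ (because $\gb{j}'$ captures $s'_j$, whose outputs on the same environment coincide with $s_j$'s outputs in $C$, using disjointness of output sets), making the implication vacuous. So the repair is: drop the machine-level tracking and adopt the paper's trace-level definition of $s'_i$. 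Then your ``main obstacle'' disappears, your fact (2) becomes unnecessary, and the rest of your case analysis is essentially the paper's proof.
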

\begin{proof}
	Assume that $s_1 \pc \dots \pc s_n \models \varphi$ holds. Then, by \Cref{thm:soundness_completeness_certifying_synthesis}, there exists a vector $\Psi$ of LTL certificates such that $(\mathcal{S},\Psi)$ is a solution of certifying synthesis for~$\varphi$. In particular, this holds for the LTL certificates $\Psi := \myVec{\gb{1},\dots,\gb{n}}$, where~$\gb{i}$ is the LTL formula that captures the exact behavior of $s_i$, \ie, the LTL formula with $\mathcal{L}(\psi_i) = \{ \comp(s_i,\gamma)\cup\gamma' \mid \gamma \in (2^\inputs{i})^\omega, \gamma' \in (2^{V\setminus\variables{i}})^\omega \}$, since we use this construction in the completeness proof of certifying synthesis (\Cref{thm:soundness_completeness_certifying_synthesis}). In the remainder of this proof, we assume that the LTL certificates $\gb{i}$ are constructed in this way. Let $\GB{i} = \{ \gb{j} \mid p_j \in \sysProc\setminus\{p_i\} \}$.
	
	We construct strategies $s'_1, \dots, s'_n$ as follows: For each process $p_i \in \sysProc$, define \[ \comp(s'_i, (\gamma \cup \gamma') \cap \inputs{i}) := (\comp(s_1 \pc \dots \pc s_n, \gamma) \cap \outputs{i}) \cup ((\gamma \cup \gamma')\cap\inputs{i})\] for all $\gamma \in (2^\envOutputs)^\omega$ and all $\gamma' \in (2^{V\setminus\envOutputs})^\omega$. Let $\mathcal{S}' := \myVec{s'_1,\dots,s'_n}$. 
	Moreover, for each $p_i \in \sysProc$, let $\gb{i}'$ be the LTL formula that captures the exact behavior of $s'_i$, \ie, the LTL formula with $\mathcal{L}(\gb{i}') = \{ \comp(s'_i,\gamma)\cup\gamma' \mid \gamma \in (2^\inputs{i})^\omega, \gamma' \in (2^{V\setminus\variables{i}})^\omega \}$. Let $\Psi' := \myVec{\gb{1}',\dots,\gb{n}'}$ and $\relGBPrime{i} := \{ \gb{j}' \mid p_j \in \relevantProcesses{i} \}$.

	It remains to show that $(\mathcal{S}',\Psi')_{\mathcal{R}}$ is a solution of certifying synthesis for $\varphi$.
	Hence, we prove that $s'_i \models \gb{i}' \land (\relGBPrime{i} \rightarrow \varphi_i)$ holds for all system processes $p_i \in \sysProc$. Let $p_i \in \sysProc$.
	Clearly, by construction of $\gb{i}'$, we have $s'_i \models \gb{i}'$.
	Next, let $\gamma \in (2^\inputs{i})^\omega$ and $\gamma' \in (2^{V\setminus\variables{i}})^\omega$. 
	By construction of $s'_i$, we have $\comp(s'_i,\gamma) = (\sigma \cap \outputs{i}) \cup \gamma$, where $\sigma := \comp(s_1 \pc \dots \pc s_n,(\gamma \cup \gamma')\cap\envOutputs)$ and thus we have \[\comp(s'_i,\gamma) \cup \gamma' = (\sigma\cap\outputs{i}) \cup \gamma \cup \gamma'\] 
	Since $s_1 \pc \dots \pc s_n \models \varphi$ holds by assumption, we have $\sigma \models \varphi$ and hence, by the definition of specification decomposition, $\sigma \models \varphi_i$.
	Thus, if $\gamma \cup \gamma' = \sigma \cap (V\setminus\outputs{i})$, then $\comp(s'_i,\gamma)\cup\gamma' = \sigma$ and therefore $\comp(s'_i,\gamma)\cup\gamma' \models \varphi_i$ follows immediately.
	
	Otherwise, $\gamma \cup \gamma' \neq \sigma \cap (V\setminus\outputs{i})$ holds and hence there exists a system process $p_j\in\sysProc$ such that we have $\comp(s_j,\sigma\cap\inputs{j}) \cap \outputs{j} \neq (\gamma \cup \gamma') \cap \outputs{j}$. That is, intuitively, the behavior of $p_j$ in $\gamma \cup \gamma'$ differs from its behavior defined by its strategy $s_j$. We distinguish two cases:
	
	\begin{enumerate}
		\item There exists a relevant process of $p_i$ with this property, \ie, there exists a process $p_j\in\relevantProcesses{i}$ such that $\comp(s_j,\sigma\cap\inputs{j}) \cap \outputs{j} \neq (\gamma \cup \gamma') \cap \outputs{j}$ holds. Since we have $(\gamma \cup \gamma') \cap \outputs{j} = (\comp(s'_i,\gamma) \cup \gamma') \cap\outputs{j}$ by the dis\-joint\-ness of sets of output variables, $\comp(s_j,\sigma\cap\inputs{j}) \cap \outputs{j} \neq (\comp(s'_i,\gamma) \cup \gamma') \cap\outputs{j}$ follows. Thus, by construction of $\gb{j}'$, we have $\comp(s'_i,\gamma)\cup\gamma' \not\models \gb{j}'$. Since $p_j \in \relevantProcesses{i}$ holds by assumption, we have $\gb{j}' \in \relGBPrime{i}$. Hence, by the semantics of conjunction, $\comp(s'_i,\gamma)\cup\gamma' \not\models \relGBPrime{i}$ holds as well and thus $\comp(s'_i,\gamma)\cup\gamma' \models \relGBPrime{i} \rightarrow \varphi_i$ follows with the semantics of implication.
		\item There is no relevant process of $p_i$ with this property, \ie, for all $p_j \in \sysProc$ with $\comp(s_j,\sigma\cap\inputs{j}) \cap \outputs{j} \neq (\gamma \cup \gamma') \cap \outputs{j}$, we have $p_j \not\in \relevantProcesses{i}$. Let $\mathcal{P} \subseteq \sysProc \setminus (\relevantProcesses{i}\cup\{p_i\})$ be the set of all processes $p_j$ that satisfy the property. Since $p_j \not\in \relevantProcesses{i}$ holds for all $p_j \in \mathcal{P}$ by assumption, we have $\outputs{j} \cap \propositions{\varphi_i} = \emptyset$ for all $p_j \in \mathcal{P}$ by construction of the relevant processes $\relevantProcesses{i}$. Thus, $\bigcup_{p_j\in\mathcal{P}}\outputs{j} \cap \propositions{\varphi_i} = \emptyset$ and hence the satisfaction of $\varphi_i$ is not influenced by the valuations of the variables in $\bigcup_{p_j\in\mathcal{P}}\outputs{j}$. That is, for all sequences $\sigma',\sigma'' \in (2^V)^\omega$ that agree on the valuations of variables outside of $\bigcup_{p_j\in\mathcal{P}}\outputs{j}$, we have $\sigma' \models \varphi_i$ if, and only if, $\sigma'' \models \varphi_i$. By definition of $\mathcal{P}$, $\comp(s_k,\sigma\cap\inputs{k}) \cap \outputs{k} = (\gamma \cup \gamma') \cap \outputs{k}$ holds for all $p_k \in \sysProc\setminus\mathcal{P}$. Hence, since $\comp(s_k,\sigma\cap\inputs{k}) \cap \outputs{k} = \sigma \cap \outputs{k}$ holds by the definition of parallel compositions, we have $\sigma \cap \outputs{k} = (\gamma \cup \gamma')\cap\outputs{k}$ for all $p_k \in \sysProc\setminus\mathcal{P}$. Thus, $\gamma \cup \gamma'$ and $\sigma$ agree on the valuations of the variables outside of $\bigcup_{p_j\in\mathcal{P}}\outputs{j} \cup \outputs{i}$. By disjointness of the sets of outputs variables, we have $(\comp(s'_i,\gamma)\cup\gamma')\cap\outputs{\ell} = (\gamma \cup \gamma') \cap \outputs{\ell}$ for all processes $p_\ell \in \sysProc\setminus\{p_i\}$. Therefore, $\comp(s'_i,\gamma)\cup\gamma'$ and $\sigma$ agree on the valuations of the variables outside of $\bigcup_{p_j\in\mathcal{P}}\outputs{j} \cup \outputs{i}$. Moreover, $(\comp(s'_i,\gamma)\cup\gamma')\cap\outputs{i} = \sigma \cap\outputs{i}$ holds by construction of $s'_i$. Hence, $\comp(s'_i,\gamma)\cup\gamma'$ and $\sigma$ even agree on the valuations of the variables outside of $\bigcup_{p_j\in\mathcal{P}}\outputs{j}$ and thus, as shown above, we have $\sigma \models \varphi_i$ if, and only if, $\comp(s'_i,\gamma)\cup\gamma' \models \varphi_i$ holds. By assumption, we have $\sigma \models \varphi_i$ and hence $\comp(s'_i,\gamma)\cup\gamma' \models \varphi_i$ follows. Thus, we have $\comp(s'_i,\gamma)\cup\gamma' \models \relGBPrime{i} \rightarrow \varphi_i$ by the semantics of implication.
	\end{enumerate}
	Therefore, we have both $s'_i \models \gb{i}'$ and $s'_i \models \relGBPrime{i} \rightarrow \varphi_i$ for all system processes $p_i \in \sysProc$ and thus $(\mathcal{S}',\Psi')_\mathcal{R}$ is indeed a solution of certifying synthesis for $\varphi$.\qed
\end{proof}

Next, we prove this property for certifying synthesis with GTS as well:

\begin{lemma}\label{lem:completeness_rel_proc_lts}
	Let $\varphi$ be an LTL formula.
	Let $s_1, \dots, s_n$ be strategies for the system processes. If $s_1 \pc \dots \pc s_n \models \varphi$ holds, then there exists a vector $\mathcal{S}'$ of strategies and a vector $\mathcal{G}$ of guarantee transition systems such that $(\mathcal{S}',\mathcal{G})_{\mathcal{R}}$ is a solution of certifying synthesis for $\varphi$.
\end{lemma}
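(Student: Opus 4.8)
The plan is to reuse the strategy construction of \Cref{lem:completeness_rel_proc_ltl} and then replace its LTL certificates by guarantee transition systems, in the spirit of the translation used in the proof of \Cref{lem:cert_implies_guar}. Concretely, I would first apply \Cref{lem:completeness_rel_proc_ltl} to the given strategies: since $s_1\pc\dots\pc s_n\models\varphi$, it yields a vector $\mathcal{S}'=\myVec{s'_1,\dots,s'_n}$ of strategies and a vector $\Psi'=\myVec{\gb{1}',\dots,\gb{n}'}$ of LTL certificates, where each $\gb{i}'$ captures the exact behavior of $s'_i$, such that $(\mathcal{S}',\Psi')_{\mathcal{R}}$ is a solution of certifying synthesis; in particular $s'_i\models\gb{i}'\land(\relGBPrime{i}\rightarrow\varphi_i)$ holds for every $p_i\in\sysProc$. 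From these strategies I would build the guarantee transition systems exactly as in the proof of \Cref{lem:cert_implies_guar}: let $g_i$ be the copy of $s'_i$ whose labeling ignores every output in $\outputs{i}\setminus\guarOutputs{i}$. Set $\mathcal{G}:=\myVec{g_1,\dots,g_n}$ and $\mathcal{G}^{\relevantProcessesFunc}_i:=\{g_j\mid p_j\in\relevantProcesses{i}\}$. The claim to establish is that $(\mathcal{S}',\mathcal{G})_{\mathcal{R}}$ is a solution, i.e., that $s'_i\simresp{\guarOutputs{i}}g_i$ and $s'_i\satresp{\mathcal{G}^{\relevantProcessesFunc}_i}\varphi_i$ hold for all $p_i\in\sysProc$.

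The simulation requirement $s'_i\simresp{\guarOutputs{i}}g_i$ is immediate from the construction, since $g_i$ and $s'_i$ differ only in their labeling and agree on all variables in $\guarOutputs{i}$; this is verbatim the corresponding step of \Cref{lem:cert_implies_guar}. The local-satisfaction requirement is the core of the argument. Fix $p_i\in\sysProc$ and take $\gamma\in(2^\inputs{i})^\omega$, $\gamma'\in(2^{V\setminus\variables{i}})^\omega$ such that $\pref{\comp(s'_i,\gamma)}{t}\cup\pref{\gamma'}{t}\in\validHistory{t}{\mathcal{G}^{\relevantProcessesFunc}_i}$ for all $t$; I must show $\comp(s'_i,\gamma)\cup\gamma'\models\varphi_i$. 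Here I would link the valid-history condition to the relevant LTL certificates: because $g_j$ is the projection of $s'_j$ to $\guarOutputs{j}$ and $\gb{j}'$ captures $s'_j$ exactly, a violation of $\gb{j}'$ by $\comp(s'_i,\gamma)\cup\gamma'$ should force a deviation from $g_j$ on $\guarOutputs{j}$ at some finite time, i.e., failure of the valid-history condition -- this is precisely the case-2 reasoning of \Cref{lem:cert_implies_guar}, now restricted to $p_j\in\relevantProcesses{i}$. Consequently, the valid-history assumption would entail $\comp(s'_i,\gamma)\cup\gamma'\models\relGBPrime{i}$, and then $s'_i\models\relGBPrime{i}\rightarrow\varphi_i$ yields $\comp(s'_i,\gamma)\cup\gamma'\models\varphi_i$, as required. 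Summing over all processes gives that $(\mathcal{S}',\mathcal{G})_{\mathcal{R}}$ is a solution.

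I expect the delicate step to be exactly this translation between the GTS-based valid histories and the LTL certificates restricted to the relevant processes, for two reasons. First, the valid-history condition only tracks the observable guarantee outputs $\guarOutputs{j}=\outputs{j}\cap\allInputs$, whereas $\gb{j}'$ constrains all of $\variables{j}$, so I must argue that a relevant process which deviates does so in a way that is witnessed on its guarantee outputs; this is where the interplay with \Cref{def:ref_rel_proc} enters and where the construction of the $s'_i$ (replaying the parallel-composition behavior, so that deviations of non-relevant processes never affect $\varphi_i$) does the real work, mirroring the two-case split of \Cref{lem:completeness_rel_proc_ltl}. Second, one must take care that restricting the certificate sets from $\GB{i}$ to $\relGBPrime{i}$ (respectively $\mathcal{G}_i$ to $\mathcal{G}^{\relevantProcessesFunc}_i$) preserves the argument; soundness of this restriction is already guaranteed by \Cref{lem:soundness_rel_proc}, and the completeness direction relies on the relevant-process property that the outputs of every $p_j\notin\relevantProcesses{i}$ do not occur in $\propositions{\varphi_i}$. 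Once these correspondences are in place the statement follows, and it may equivalently be presented as a direct reduction to \Cref{lem:completeness_rel_proc_ltl} composed with the GTS construction of \Cref{lem:cert_implies_guar}.
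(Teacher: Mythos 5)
Your proposal matches the paper's proof essentially step for step: the paper likewise applies \Cref{lem:completeness_rel_proc_ltl} with the exact-behavior certificates $\gb{i}$ from its constructive proof, builds each $g_i$ as a copy of $s'_i$ whose labels are restricted to $\guarOutputs{i}$, gets the simulation $s'_i \simresp{\guarOutputs{i}} g_i$ immediately, and establishes local satisfaction by the same two-case argument in which a violation of some $\gb{j}$ with $p_j \in \relevantProcesses{i}$ is shown to force a mismatch on $\guarOutputs{j}$ at a finite point in time, so that the valid-history premise fails. Even the delicate step you flag is discharged in the paper exactly as you anticipate, via the observation that $g_j$ agrees with $s'_j$ on all guarantee outputs, so your reduction to \Cref{lem:completeness_rel_proc_ltl} composed with the GTS construction of \Cref{lem:cert_implies_guar} is precisely the paper's argument.
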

\begin{proof}
	Assume that $s_1 \pc \dots \pc s_n \models \varphi$ holds. Then, by \Cref{lem:completeness_rel_proc_ltl}, there exists a vector $\mathcal{S}' = \myVec{s'_1,\dots,s'_n}$ of strategies and a vector $\Psi$ of LTL certificates such that $(\mathcal{S}',\Psi)_{\mathcal{R}}$ is a solution of certifying synthesis for $\varphi$.
	In particular, this holds for the LTL certificates $\Psi := \myVec{\gb{1},\dots,\gb{n}}$, where $\gb{i}$ is the LTL formula that captures the exact behavior of $s'_i$, \ie, $\mathcal{L}(\gb{i}) = \{ \comp(s'_i,\gamma)\cup\gamma' \mid \gamma\in(2^\inputs{i})^\omega, \gamma' \in (2^{V\setminus\variables{i}})^\omega \}$ holds, since we use this construction in the (constructive) proof of \Cref{lem:completeness_rel_proc_ltl}. In the remainder of this proof, we assume that the LTL formulas $\gb{i}$ are constructed in this way. Let $\GB{i} = \{ \gb{j} \mid p_j \in \sysProc \setminus \{p_i\} \}$ and let $\relGB{i} := \{ \gb{j} \mid p_j \in \relevantProcesses{i} \}$.
	
	For each $p_i \in \sysProc$, we construct a guarantee transition system $g_i$ as follows:~$g_i$ is a copy of $s'_i$, yet, the labels of $g_i$ ignore output variables $v\in\outputs{i}$ that are not contained in $\guarOutputs{i}$, \ie, $o^g_i(t,\inp{i}) = o_i(t,\inp{i}) \cap \guarOutputs{i}$ for all states $t$ and all inputs $\inp{i} \in 2^\inputs{i}$, where $o^g_i$ is the labeling function of $g_i$ and $o_i$ is the one of~$s'_i$. Let $\mathcal{G} := \myVec{ g_1, \dots, g_n }$, let $\mathcal{G}_i := \{ g_j \mid p_j \in \sysProc \setminus \{p_i\}\}$, and let $\mathcal{G}^{\mathcal{R}}_i := \{ g_j \mid p_j \in \relevantProcesses{i}\}$.
	We claim that $(\mathcal{S}',\mathcal{G})_{\mathcal{R}}$ is a solution of certifying synthesis for $\varphi$. Thus, we need to show that both $s'_i \simresp{\guarOutputs{i}} g_i$ and $s'_i \satresp{\mathcal{G}^{\mathcal{R}}_i} \varphi_i$ hold for all system processes $p_i \in \sysProc$. Let $p_i \in \sysProc$.

	First, we show that $s'_i \simresp{\guarOutputs{i}} g_i$ holds: By construction of the guarantee transition systems, $g_i$ and $s'_i$ only differ in their labels and, in fact, the labels agree on the variables in $\guarOutputs{i}$. Since the variables in $\guarOutputs{i}$ are the only output variables that are shared by $s'_i$ and $g_i$ and, in particular, $\guarOutputs{i} \subseteq \outputs{i}$ holds, $s'_i \simresp{\guarOutputs{i}} g_i$ follows.

	It remains to show that $s'_i \satresp{\mathcal{G}^{\mathcal{R}}_i} \varphi_i$ holds, \ie, that $\comp(s'_i,\gamma) \cup \gamma' \models \varphi_i$ holds for all $\gamma \in (2^\inputs{i})^\omega$ and all $\gamma' \in (2^{V\setminus\variables{i}})^\omega$ with $\pref{\comp(s'_i,\gamma)}{t} \cup \pref{\gamma'}{t} \in \validHistory{t}{\mathcal{G}^\mathcal{R}_i}$ for all points in time $t$. Let $\gamma \in (2^\inputs{i})^\omega$ and $\gamma' \in (2^{V\setminus\variables{i}})^\omega$.	
	Since $(\mathcal{S}',\Psi)_{\mathcal{R}}$ is a solution of certifying synthesis for $\varphi$, we have $s'_i \models \gb{i} \land \relGB{i} \rightarrow \varphi_i$ and hence, by the semantics of conjunction, $s'_i \models \relGB{i} \rightarrow \varphi_i$ holds as well. Thus, in particular, $\comp(s'_i,\gamma)\cup\gamma' \models \relGB{i} \rightarrow \varphi_i$. We distinguish two cases:
	
	\begin{enumerate}
		\item Let $\comp(s'_i,\gamma)\cup\gamma' \models \relGB{i}$ hold. Then, since $\comp(s'_i,\gamma)\cup\gamma' \models \relGB{i} \rightarrow \varphi_i$ holds, $\comp(s'_i,\gamma)\cup\gamma' \models \varphi_i$ follows immediately.
		\item Let $\comp(s'_i,\gamma)\cup\gamma' \not\models \relGB{i}$ hold. Then, there exists an LTL formula $\gb{j} \in \relGB{i}$ such that $\comp(s'_i,\gamma)\cup\gamma'\not\models\gb{j}$ holds. Hence, $\comp(s'_i,\gamma) \cup \gamma' \not\in \mathcal{L}(\gb{j})$ follows. For the sake of better readability, let $\sigma := \comp(s'_i,\gamma)\cup\gamma'$. Then, by construction of $\gb{j}$, we have $\sigma \cap \guarVariables{j} \neq \comp(s'_j,\sigma \cap \inputs{j}) \cap \guarVariables{j}$. By construction of the guarantee transition systems, $g_j$ produces the same outputs as~$s'_j$ for all outputs in $\guarOutputs{j}$, \ie, $\comp(s'_j,\sigma \cap \inputs{j}) \cap \guarOutputs{j} = \comp(g_j,\sigma\cap\inputs{j}) \cap \guarOutputs{j}$. Hence, since $\guarOutputs{j} \subseteq \guarVariables{j}$, we have $\sigma \cap \guarOutputs{j} \neq \comp(g_j,\sigma \cap \inputs{j}) \cap \guarOutputs{j}$. Thus, in particular, there is a point in time $k$ such that $\sigma_k \cap \guarOutputs{j} \neq \comp(g_j,\sigma\cap\inputs{j}_k) \cap \guarOutputs{j}$ holds and therefore, by definition of valid histories, $\sigma \not\in \validHistory{t}{\{g_j\}}$ holds for all $t>k$. Since $\gb{j} \in \relGB{j}$ holds, we have $p_j \in \relevantProcesses{i}$ by definition. Thus, $g_j \in \mathcal{G}^\mathcal{R}_i$ holds as well and therefore $\comp(s_i,\gamma)\cup\gamma' \not\in \validHistory{t}{\mathcal{G}_i}$ follows for all $t > k$.
	\end{enumerate}
	Combining the above results for both cases, we directly obtain $s_i \satresp{\mathcal{G}^\mathcal{R}_i} \varphi_i$.
	
	Hence, we have both $s'_i \simresp{\guarOutputs{i}} g_i$ and $s'_i \satresp{\mathcal{G}^{\mathcal{R}}_i} \varphi_i$ for all system processes $p_i \in \sysProc$. Thus, $(\mathcal{S}',\mathcal{G})_\mathcal{R}$ is indeed a solution of certifying synthesis for $\varphi$.\qed
\end{proof}

With \Cref{lem:soundness_rel_proc,lem:completeness_rel_proc_ltl,lem:completeness_rel_proc_lts}, \Cref{thm:correctness_rel_proc} follows directly.


\section{Synthesizing Certificates}\label{app:synthesizing_certificates}

In this section, we take a closer look at the (approximative) description of certifying synthesis using local strategies.
First, we investigate the relation between the \emph{satisfaction} of an LTL formula with a \emph{local strategy} and \emph{local satisfaction} of an LTL formula with a \emph{complete strategy}. Reusing some of the results for local satisfaction from \Cref{sec:assumption_lts}, we obtain a theorem on soundness and conditional completeness of approximative certifying synthesis with local strategies.
Afterwards, we present the SAT constraint system from \Cref{thm:constraint_system} that encodes approximate certifying synthesis with local strategies.


\subsection*{Certifying Synthesis with Local Strategies}

In the following, we will first introduce the notions of \emph{restricting complete strategies} and \emph{extending local strategies}. This allows us to compare local strategies to complete ones and, in fact, to switch between local satisfaction with complete strategies and satisfaction with local strategies.

We can build a local strategy from a complete strategy by \emph{restricting} it to a set of guarantee transition systems. Intuitively, the local strategy is a copy of the full one. Yet, we delete all transitions that are only taken if the other (observable) processes deviate from their certificates. Then, the resulting strategy meets the requirements of a local strategy while still behaving exactly as the complete strategy on inputs where the other processes stick to their guaranteed behavior.

\begin{definition}[Strategy Restriction]
	Let $\mathcal{G} = \myVec{g_1,\dots,g_n}$ be a vector of GTS for the system processes. For $p_i \in \sysProc$, define $\mathcal{G}_i := \{ g_j \mid p_j \in \sysProc \setminus \{p_i\}\}$.
	Let $s_i$ be a strategy for $p_i \in \sysProc$. The restriction $\restrict{s_i}{\mathcal{G}}$ of $s_i$ to a local strategy~$s'_i$ with respect to $\mathcal{G}_i$ is defined as follows:~$s'_i$ is a copy of $s_i$, yet, for $\gamma \in (2^\inputs{i})^\omega$, $\comp(s'_i,\gamma)$ is infinite, if, and only if, there exists a sequence $\gamma' \in (2^{V\setminus\variables{i}})^\omega$ such that $\pref{\comp(s'_i,\gamma)}{t} \cup \pref{\gamma'}{t} \in \validHistory{t}{\mathcal{G}_i}$ holds for all points in time $t$.
\end{definition}

Vice versa, we can \emph{extend} a local strategy with respect to a set of guarantee transition systems $\mathcal{G}$ with its own guaranteed behavior, that is always complete, to obtain a complete strategy. The complete strategy behaves exactly as the local one on input sequences that do not deviate from the (observable) certificates of the other processes. On other input sequences, the complete strategy acts as the local one until the latter ``gets stuck'', \ie, until the local strategy does not have any outgoing transitions for the input anymore, and then switches to behaving as the local strategy's own guaranteed behavior.

\begin{definition}[Strategy Extension] 
	Let $\mathcal{G} = \myVec{g_1,\dots,g_n}$ be a vector of GTS for the system processes. For $p_i \in \sysProc$, define $\mathcal{G}_i := \{ g_j \mid p_j \in \sysProc \setminus \{p_i\}\}$.
	Let~$s_i$ be a local strategy for $p_i \in \sysProc$ with respect to $\mathcal{G}_i$. The extension $\extend{s_i}{\mathcal{G}}$ of $s_i$ to a complete strategy~$s'_i$ is defined as follows: 
	For all $\gamma \in (2^\inputs{i})^\omega$ for which $\comp(s_i,\gamma)$ is infinite, define $\comp(s'_i,\gamma) := \comp(s_i,\gamma)$. 
	For all $\gamma \in (2^\inputs{i})^\omega$ for which $\comp(s_i,\gamma)$ is finite, define $\comp(s'_i,\gamma) := \comp(g_i,\gamma) \cup \sigma'$ for some $\sigma' \in (2^{\outputs{i} \setminus \guarOutputs{i}})^\omega$.
\end{definition}

Note that whether or not a strategy extension really extends a local strategy~$s'_i$, in the sense that it behaves as~$s'_i$ until~$s'_i$ gets stuck and then switches to behave as $g_i$, strongly depends on the certificate: It has to reflect the behavior of $s'_i$ up to the point in time where the latter gets stuck. Our use of certificates and the requirements we pose on them in certifying synthesis, however, always guarantee that strategy extensions match our intuition as described above.

Using strategy extension and restriction, we can now investigate the relation between satisfaction with local strategies and local strategies with complete strategies. Every solution of certifying synthesis with local strategies can be extended to a solution of certifying synthesis with complete strategies and local satisfaction by extending the local strategies with their own guarantees behavior, \ie, by using strategy extension as defined above:

\begin{lemma}\label{lem:local_to_complete}
	Let $\varphi$ be an LTL formula. 
	Let $\mathcal{G} = \myVec{g_1, \dots, g_n}$ be a vector of guarantee transition systems for the system processes and let $\mathcal{G}_i = \{ g_j \mid p_j \in \sysProc \setminus \{p_i\}\}$. Let $\mathcal{S} = \myVec{s_1, \dots, s_n}$ be a vector of local strategies such that $s_i$ is a local strategy with respect to $\mathcal{G}_i$.	
	If $(\mathcal{S},\mathcal{G})$ is a solution of certifying synthesis with local strategies for $\varphi$, then $(\mathcal{S}',\mathcal{G})$ is a solution of certifying synthesis with local satisfaction for $\varphi$, where $\mathcal{S}' = \myVec{s'_1, \dots, s'_n}$ with $s'_i = \extend{s_i}{\mathcal{G}}$.
\end{lemma}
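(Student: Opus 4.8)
The goal is to show that extending each local strategy $s_i$ to a complete strategy $s'_i = \extend{s_i}{\mathcal{G}}$ yields a solution of certifying synthesis with local satisfaction. That is, I must establish for every $p_i \in \sysProc$ that both $s'_i \simresp{\guarOutputs{i}} g_i$ and $s'_i \satresp{\mathcal{G}_i} \varphi_i$ hold, given that $(\mathcal{S},\mathcal{G})$ solves certifying synthesis with local strategies, meaning $s_i \simresp{\guarOutputs{i}} g_i$ holds and all runs of $\mathcal{A}_i$ induced by $\comp(s_i,\gamma)\cup\gamma'$ visit rejecting states only finitely often (for every $\gamma,\gamma'$).

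**The simulation obligation.**

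First I would verify $s'_i \simresp{\guarOutputs{i}} g_i$. On input sequences $\gamma$ for which $\comp(s_i,\gamma)$ is infinite, $s'_i$ agrees with $s_i$, which already simulates $g_i$ by assumption. On the remaining sequences, where $\comp(s_i,\gamma)$ is finite, the extension defines $\comp(s'_i,\gamma) := \comp(g_i,\gamma) \cup \sigma'$, so $s'_i$ behaves exactly as $g_i$ on the guarantee output variables $\guarOutputs{i}$. In both cases the $\guarOutputs{i}$-projection of $s'_i$'s computation matches a computation of $g_i$, which is precisely what the simulation relation requires. I expect this part to be essentially bookkeeping: the key point is that the extension is constructed so that $g_i$ governs the $\guarOutputs{i}$-labels on the finite branch as well.

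**The local satisfaction obligation, and the main obstacle.**

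For $s'_i \satresp{\mathcal{G}_i} \varphi_i$, I must show $\comp(s'_i,\gamma)\cup\gamma' \models \varphi_i$ for every $\gamma$ and $\gamma'$ with $\pref{\comp(s'_i,\gamma)}{t} \cup \pref{\gamma'}{t} \in \validHistory{t}{\mathcal{G}_i}$ for all $t$. The crucial observation is that if such a $\gamma'$ witnesses validity at every step, then by the defining property of the local strategy $s_i$ (its transition function is defined exactly when the observable processes stick to their guarantees), $\comp(s_i,\gamma)$ must be \emph{infinite}, and hence $\comp(s'_i,\gamma) = \comp(s_i,\gamma)$ by the extension definition. The hard part will be tying the ``valid history for all $t$'' hypothesis to the ``computation is infinite'' characterization of local strategies — I would argue that a valid history respecting $\mathcal{G}_i$ is exactly the condition under which $s_i$'s partial transition function stays defined, so infinitude of $\comp(s_i,\gamma)$ follows. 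Once the computations coincide, I invoke the remaining hypothesis: the runs of $\mathcal{A}_i$ on $\comp(s_i,\gamma)\cup\gamma'$ visit rejecting states only finitely often, which by definition of the universal co-Büchi acceptance and $\mathcal{L}(\mathcal{A}_i) = \mathcal{L}(\varphi_i)$ means $\comp(s_i,\gamma)\cup\gamma' \models \varphi_i$, hence $\comp(s'_i,\gamma)\cup\gamma' \models \varphi_i$. The main subtlety I would watch is the quantifier over $\gamma'$: the local satisfaction definition fixes one $\gamma'$ satisfying the valid-history condition, whereas the winning condition for local strategies quantifies over all extensions, so I would confirm that the fixed $\gamma'$ is among those covered, which it is since validity for this particular $\gamma'$ suffices to keep the run infinite and thus subject to the acceptance guarantee.
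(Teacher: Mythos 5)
Your overall plan coincides with the paper's proof: the simulation obligation is discharged by the same infinite/finite case split on $\comp(s_i,\gamma)$, and your contrapositive reading of the satisfaction obligation (valid history for all $t$ $\Rightarrow$ $\comp(s_i,\gamma)$ infinite $\Rightarrow$ $\comp(s'_i,\gamma)=\comp(s_i,\gamma)$ $\Rightarrow$ co-B\"uchi acceptance yields $\varphi_i$) is logically the same argument the paper runs directly (finite $\Rightarrow$ the valid-history hypothesis is vacuously false for $s'_i$). Your closing remark about the quantifier over $\gamma'$ is also correct: the winning condition for local strategies quantifies over all $\gamma'$, so the fixed one is covered.

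However, the step you yourself flag as the hard part is resolved by a justification that does not work as stated, and this is where the paper spends most of its effort. The local-satisfaction hypothesis constrains prefixes of $\comp(s'_i,\gamma)\cup\gamma'$, whereas the definedness characterization of the local strategy constrains prefixes of $\comp(s_i,\gamma)\cup\gamma'$ --- and these computations are \emph{not} equal on the branch at issue: when $\comp(s_i,\gamma)$ is finite, the extension sets $\comp(s'_i,\gamma) = \comp(g_i,\gamma)\cup\sigma'$ for an \emph{arbitrary} $\sigma'\in(2^{\outputs{i}\setminus\guarOutputs{i}})^\omega$, so the two computations agree only on $\guarVariables{i}$, only up to the point where $s_i$ gets stuck, and even that agreement is not definitional but a consequence of the hypothesis $s_i \simresp{\guarOutputs{i}} g_i$. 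Transferring (in)validity of histories between the two computations therefore requires three ingredients that your sketch omits: (i) the simulation hypothesis, to obtain $\pref{\comp(s_i,\gamma)}{t} \cap \guarVariables{i} = \pref{\comp(g_i,\gamma)}{t} \cap \guarVariables{i}$ for all $t \leq |\comp(s_i,\gamma)|$, so that $s'_i$ really is an ``extension'' of $s_i$ on the guarantee variables; (ii) the identity $\guarOutputs{i} = \outputs{i}\cap\allInputs$, which makes the variables on which the two computations may differ invisible to every other process, hence irrelevant to membership in $\validHistory{t}{\mathcal{G}_i}$; and (iii) determinism of the guarantee transition systems $g_j$, so that equal $\inputs{j}$-projections induce equal $\guarOutputs{j}$-computations. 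Your one-line justification --- that a valid history respecting $\mathcal{G}_i$ is exactly the condition under which $s_i$'s partial transition function stays defined --- silently identifies $\comp(s'_i,\gamma)$ with $\comp(s_i,\gamma)$ and thereby skips the core of the argument; note in particular that the simulation hypothesis must do real work inside the satisfaction obligation, not only in the simulation obligation as your plan suggests.
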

\begin{proof}
	Let $\myVec{\varphi_1,\dots,\varphi_n}$ be the decomposition of $\varphi$. Assume that $(\mathcal{S},\mathcal{G})$ is a solution of certifying synthesis with local strategies for $\varphi$.
	Then, for all $p_i \in \sysProc$, we have $s_i \simresp{\guarOutputs{i}} g_i$ and, for all $\gamma \in (2^\inputs{i})^\omega$, $\gamma' \in (2^{V\setminus\variables{i}})^\omega$, the run of $\mathcal{A}_i$ induced by $\comp(s_i,\gamma)\cup\gamma'$ contains only finitely many visits to rejecting states, where $\mathcal{A}_i$ is a universal co-Büchi automaton with $\mathcal{L}(\mathcal{A}_i) = \mathcal{L}(\varphi_i)$.
	To prove that $(\mathcal{S}',\mathcal{G})$ is a solution of certifying synthesis with local satisfaction for $\varphi$, we need to show that, for all $p_i \in \sysProc$, $s'_i \simresp{\guarOutputs{i}} g_i$ and $s'_i \satresp{\mathcal{G}_i} \varphi_i$ holds. Let $p_i \in \sysProc$.
	
	First, we show that $s'_i \simresp{\guarOutputs{i}} g_i$ holds: Since $s_i \simresp{\guarOutputs{i}} g_i$ holds by assumption, every sequence of variables in $\guarOutputs{i}$ produced by $s_i$ is also produced by $g_i$. Hence, for all $\gamma \in (2^\inputs{i})^\omega$ and for all points in time $t$ with $1 \leq t \leq |\comp(s_i,\gamma)|$, we have $\pref{\comp(s_i,\gamma)}{t} \cap \guarOutputs{i} = \pref{\comp(g_i,\gamma)}{t} \cap \guarOutputs{i}$ and hence, since both $\guarVariables{i} = \inputs{i} \cup \guarOutputs{i}$ and $\comp(s_i,\gamma) \cap \inputs{i} = \gamma = \comp(g_i,\gamma) \cap \inputs{i}$ hold by definition, we have \[\pref{\comp(s_i,\gamma)}{t} \cap \guarVariables{i} = \pref{\comp(g_i,\gamma)}{t} \cap \guarVariables{i}.\]
	Hence, the strategy extension $s'_i$ of $s_i$ is indeed an extension in the sense that it behaves as $s_i$ until $s_i$ gets stuck and then switches to behave like $g_i$. Let $\gamma \in (2^\inputs{i})^\omega$. We distinguish two cases:
	
	\begin{enumerate}
		\item Let $\comp(s_i,\gamma)$ be infinite. Then, $\comp(s_i,\gamma) \cap \guarVariables{i} = \comp(g_i,\gamma) \cap \guarVariables{i}$ holds since then $\pref{\comp(s_i,\gamma)}{t} \cap \guarVariables{i} = \pref{\comp(g_i,\gamma)}{t} \cap \guarVariables{i}$ holds for all points in time $t$ as shown above. By the definition of strategy extension, we have $\comp(s'_i,\gamma) = \comp(s_i,\gamma)$ since $\comp(s_i,\gamma)$ is infinite by assumption and hence $\comp(s'_i,\gamma) \cap \guarVariables{i} = \comp(g_i,\gamma) \cap \guarVariables{i}$ follows.
		\item Let $\comp(s_i,\gamma)$ be finite. Then, by definition of strategy extension and thus by construction of $s'_i$, we have $\comp(s'_i,\gamma) = \comp(g_i,\gamma) \cup \sigma'$ for some sequence $\sigma' \in (2^{\outputs{i} \setminus \guarOutputs{i}})^\omega$. Thus, in fact, $\comp(s'_i,\gamma) \cap \guarVariables{i} = \comp(g_i,\gamma) \cap \guarVariables{i}$.
	\end{enumerate}
	Thus, combining the results for both the finite and the infinite case, we have $\comp(s'_i,\gamma) \cap \guarVariables{i} = \comp(g_i,\gamma) \cap \guarVariables{i}$ for all $\gamma \in (2^\inputs{i})^\omega$. Hence, by definition of simulation, $s'_i \simresp{\guarOutputs{i}} g_i$ holds.
	
	It remains to show that $s'_i \satresp{\mathcal{G}_i} \varphi_i$ holds, \ie, that $\comp(s_i,\gamma)\cup\gamma' \models \varphi_i$ holds for all $\gamma \in (2^\inputs{i})^\omega$, $\gamma' \in (2^{V\setminus\variables{i}})^\omega$ with $\pref{\comp(s_i,\gamma)}{t} \cup \pref{\gamma'}{t} \in \validHistory{t}{\mathcal{G}_i}$ for all~$t$.
	By assumption, for all $\gamma \in (2^\inputs{i})^\omega$ and $\gamma' \in (2^{V\setminus\variables{i}})^\omega$, the run of $\mathcal{A}_i$ induced by $\comp(s_i,\gamma)\cup\gamma'$ contains only finitely many visits to rejecting states.
	Let $\gamma \in (2^\inputs{i})^\omega$, $\gamma' \in (2^{V\setminus\variables{i}})^\omega$. We distinguish two cases:
	
	\begin{enumerate}
		\item Let $\comp(s_i,\gamma)$ be infinite. Then the run of $\mathcal{A}_i$ induced by $\comp(s_i,\gamma)\cup\gamma'$ is infinite. Hence, by the definition of the co-Büchi acceptance condition, $\comp(s_i,\gamma)\cup\gamma' \models \varphi_i$. Since $\comp(s_i,\gamma)$ is infinite by assumption, we have $\comp(s_i,\gamma) = \comp(s'_i,\gamma)$ by definition of strategy extension and therefore $\comp(s'_i,\gamma)\cup\gamma' \models \varphi_i$ follows.
		\item Let $\comp(s_i,\gamma)$ be finite. Then, by the definition of local strategies, for all $\gamma'' \in (2^{V\setminus\variables{i}})^\omega$, there is a point in time $t$ such that $\pref{\comp(s_i,\gamma)}{t} \cup \pref{\gamma''}{t} \not\in \validHistory{t}{\mathcal{G}_i}$ holds. Hence, in particular, this holds for $\gamma'$, \ie, there is a point in time $t$ such that we have $\pref{\comp(s_i,\gamma)}{t} \cup \pref{\gamma'}{t} \not\in \validHistory{t}{\mathcal{G}_i}$. For the sake of readability, let $\sigma := \pref{\comp(s_i,\gamma)}{t} \cup \pref{\gamma'}{t}$. Then, there is a process $p_j$ with $g_j \in \mathcal{G}_i$ such that $\sigma_k \cap \guarOutputs{j} \neq \comp(g_j,\hat{\sigma} \cap \inputs{j})_k \cap \guarOutputs{j}$ holds for some infinite extension $\hat{\sigma}$ of $\sigma$ and some $k$ with $1 \leq k \leq t$. Since $\comp(s_i,\gamma)$ is finite by assumption, we have $\comp(s'_i,\gamma) = \comp(g_i,\gamma) \cup \sigma'$ for some $\sigma'\in(2^{\outputs{i}\setminus\guarOutputs{i}})^\omega$ by the definition of strategy extension. Furthermore, we have $t < |\comp(s_i,\gamma)|$ and hence, as shown above, $\pref{\comp(s_i,\gamma)}{t} \cap \guarVariables{i} = \pref{\comp(g_i,\gamma)}{t} \cap \guarVariables{i}$ holds. Thus, in particular $\pref{\comp(s'_i,\gamma)}{t} \cap \guarVariables{i} = \pref{\comp(s_i,\gamma)}{t} \cap \guarVariables{i}$ holds and hence $\sigma$ and $\pref{\comp(s'_i,\gamma)}{t} \cup \gamma'$ can only differ on output variables of $p_i$ outside of $\guarOutputs{i}$, \ie, on variables in $\outputs{i}\setminus\guarOutputs{i}$. Hence, by disjointness of the sets of output variables, we have $\sigma_k \cap \guarOutputs{j} = \sigma'_k \cap \guarOutputs{j}$ for all $k$ with $1\leq k \leq t$, where $\sigma' = \pref{\comp(s'_i,\gamma)}{t} \cup \pref{\gamma'}{t}$. Furthermore, the guarantee output variables are defined by $\guarOutputs{i} = \outputs{i} \cap \allInputs$. Hence, even if $\sigma$ and $\sigma'$ differ on variables in $\outputs{i}\setminus\guarOutputs{i}$, we have $\sigma \cap \inputs{j} = \sigma' \cap \inputs{j}$ and therefore, since guarantee transition systems are deterministic, \[\comp(g_j,\hat{\sigma} \cap \inputs{j})_k \cap \guarOutputs{j} = \comp(g_j,\hat{\sigma}' \cap \inputs{j})_k \cap \guarOutputs{j}\] holds for all $1 \leq k \leq t$, where $\hat{\sigma}'$ is an infinite extension of $\sigma'$. Thus, \[\sigma'_k \cap \guarOutputs{j} \neq \comp(g_j,\hat{\sigma}') \cap \inputs{j}_k \cap \guarOutputs{j}\] holds for some infinite extension $\hat{\sigma}'$ of $\sigma'$ and for some $k$ with $1 \leq k \leq t$. Therefore, $\pref{\comp(s'_i,\gamma)}{t} \cup \pref{\gamma'}{t} \not \in \validHistory{t}{\mathcal{G}_i}$ holds.
	\end{enumerate}
	Thus, combining the results for both the finite and the infinite case, we directly obtain $s'_i \satresp{\mathcal{G}_i} \varphi_i$ with the definition of local satisfaction.
	
	Therefore, we have both $s'_i \simresp{\guarOutputs{i}} g_i$ and $s'_i \satresp{\mathcal{G}_i} \varphi_i$ for all processes $p_i \in \sysProc$ and hence $(\mathcal{S}',\mathcal{G})$ is indeed a solution of certifying synthesis with complete strategies and local satisfaction for $\varphi_i$.\qed
\end{proof}

Not every solution of certifying synthesis with complete strategies and local satisfaction is one of certifying synthesis with local strategies. However, if the satisfaction of each subspecification $\varphi_i$ only depends on the variables that the corresponding process $p_i$ can observe, then we can use strategy restriction to derive a solution with local strategies from one with complete strategies:

\begin{lemma}\label{lem:complete_to_local}
	Let $\varphi$ be an LTL formula with decomposition $\myVec{\varphi_1,\dots,\varphi_n}$. Let~$\mathcal{S}$ and~$\mathcal{G}$ be vectors of strategies and GTS, respectively, for the system processes. If $(\mathcal{S},\mathcal{G})$ is a solution of certifying synthesis with local satisfaction for $\varphi$ and if $\propositions{\varphi_i}\subseteq \variables{i}$ holds for all $p_i\in\sysProc$, then $(\mathcal{S}',\mathcal{G})$ is a solution of certifying synthesis with local strategies for $\varphi$, where $\mathcal{S}' = \myVec{s'_1, \dots, s'_n}$, where $s'_i = \restrict{s_i}{\mathcal{G}}$.
\end{lemma}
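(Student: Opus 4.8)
The plan is to verify, for every process $p_i \in \sysProc$, the two defining conditions of certifying synthesis with local strategies: the simulation $s'_i \simresp{\guarOutputs{i}} g_i$, and the requirement that every run of $\mathcal{A}_i$ induced by $\comp(s'_i,\gamma)\cup\gamma'$ visits rejecting states only finitely often, for all $\gamma \in (2^\inputs{i})^\omega$ and $\gamma' \in (2^{V\setminus\variables{i}})^\omega$. The guiding observation is that the restriction $s'_i = \restrict{s_i}{\mathcal{G}}$ is a copy of $s_i$ with transitions only \emph{deleted}: it has the same states, same labels, same initial state, and a transition relation that is a subset of that of $s_i$. Consequently, whenever $\comp(s'_i,\gamma)$ is infinite, $s'_i$ never gets stuck and hence follows $s_i$ throughout, so $\comp(s'_i,\gamma) = \comp(s_i,\gamma)$.

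For the simulation, I would reuse the relation $R$ witnessing $s_i \simresp{\guarOutputs{i}} g_i$. Since $s'_i$ agrees with $s_i$ on states, labels, and initial state, the output-matching condition of simulation is unchanged, the initial pair remains in $R$, and the transition-matching condition only becomes easier because it now ranges over the \emph{smaller} set of transitions of $s'_i$. Hence the same $R$ establishes $s'_i \simresp{\guarOutputs{i}} g_i$.

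For the acceptance condition, I would fix $\gamma$ and $\gamma'$ and split on whether $\comp(s'_i,\gamma)$ is finite. If it is finite, the induced run of $\mathcal{A}_i$ is finite and thus trivially contains only finitely many visits to rejecting states. If it is infinite, then $\comp(s'_i,\gamma) = \comp(s_i,\gamma)$ as noted above, and by the definition of strategy restriction there exists a witness $\gamma'' \in (2^{V\setminus\variables{i}})^\omega$ with $\pref{\comp(s'_i,\gamma)}{t} \cup \pref{\gamma''}{t} \in \validHistory{t}{\mathcal{G}_i}$ for all points in time $t$. Applying the local satisfaction hypothesis $s_i \satresp{\mathcal{G}_i} \varphi_i$ to this valid history yields $\comp(s_i,\gamma)\cup\gamma'' \models \varphi_i$.

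The crux is passing from this particular extension $\gamma''$ to the \emph{arbitrary} $\gamma'$, and this is precisely where the hypothesis $\propositions{\varphi_i}\subseteq\variables{i}$ enters. The words $\comp(s_i,\gamma)\cup\gamma'$ and $\comp(s_i,\gamma)\cup\gamma''$ agree on all variables in $\variables{i}$ and differ only on $V\setminus\variables{i}$; since $\varphi_i$ mentions only atomic propositions in $\variables{i}$, the two words are satisfaction-equivalent for $\varphi_i$, so $\comp(s'_i,\gamma)\cup\gamma' = \comp(s_i,\gamma)\cup\gamma' \models \varphi_i$. As $\mathcal{A}_i$ is a universal co-Büchi automaton with $\mathcal{L}(\mathcal{A}_i)=\mathcal{L}(\varphi_i)$, membership in $\mathcal{L}(\varphi_i)$ means the word is accepted, i.e.\ every induced run contains only finitely many rejecting visits. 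Combining both cases gives the acceptance condition for all $\gamma,\gamma'$, and together with the simulation this shows that $(\mathcal{S}',\mathcal{G})$ is a solution of certifying synthesis with local strategies. I expect this observability step to be the main obstacle: it is the only place the assumption $\propositions{\varphi_i}\subseteq\variables{i}$ is used, and it is exactly what can fail in general, which accounts for why this direction is only conditionally available and why certifying synthesis with local strategies is complete only under this restriction.
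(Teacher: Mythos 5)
Your proof is correct and follows essentially the same route as the paper's: preservation of the simulation under mere deletion of transitions, finite computations of $s'_i$ yielding trivially accepting finite runs, and, for infinite computations (where $\comp(s'_i,\gamma)=\comp(s_i,\gamma)$), transferring satisfaction of $\varphi_i$ from the witness extension $\gamma''$ supplied by the definition of strategy restriction to an arbitrary $\gamma'$ via $\propositions{\varphi_i}\subseteq\variables{i}$. Your case split on the finiteness of $\comp(s'_i,\gamma)$ merely reorganizes the paper's split on whether $\gamma'$ itself yields a valid history (the paper handles that case without the observability hypothesis, while you route it through the same observability step), which is a cosmetic difference.
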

\begin{proof}
	Let $\mathcal{S} = \myVec{s_1,\dots,s_n}$ and $\mathcal{G} = \myVec{g_1,\dots,g_n}$, and, for all $p_i \in \sysProc$, define $\mathcal{G}_i := \{ g_j \mid p_j \in \sysProc \setminus \{p_i\}\}$. Assume that $(\mathcal{S},\mathcal{G})$ is a solution of certifying synthesis with local satisfaction for $\varphi$.
	Then, for all $p_i \in \sysProc$, we have $s_i \simresp{\guarOutputs{i}} g_i$ and, for all $\gamma \in (2^\inputs{i})^\omega$, $\gamma' \in (2^{V\setminus\variables{i}})^\omega$ with $\pref{\comp(s_i,\gamma)}{t} \cup \pref{\gamma'}{t} \in \validHistory{t}{\mathcal{G}_i}$ for all points in time~$t$, $\comp(s_i,\gamma)\cup \gamma' \models\varphi_i$.
	Clearly, by the definition of strategy restriction, for all $p_i \in \sysProc$, $s'_i$ is a local strategy with respect to $\mathcal{G}_i$.
	To prove that $(\mathcal{S}',\mathcal{G})$ is a solution of certifying synthesis with local satisfaction for $\varphi$, we need to show that, for all $p_i \in \sysProc$, $s'_i \simresp{\guarOutputs{i}} g_i$ holds and that for all $\gamma \in (2^\inputs{i})^\omega$, $\gamma' \in (2^{V\setminus\variables{i}})^\omega$, the run of $\mathcal{A}_i$ induced by $\comp(s'_i,\gamma)\cup\gamma'$ contains only finitely many visits to rejecting states, where $\mathcal{A}_i$ is a universal co-Büchi with $\mathcal{L}(\mathcal{A}_i) = \mathcal{L}(\varphi_i)$. Let $p_i$ be a system process.

	First, we show that $s'_i \simresp{\guarOutputs{i}} g_i$ holds: Since we have $s_i \simresp{\guarOutputs{i}} g_i$, every sequence of variables in $\guarOutputs{i}$ produced by $s_i$ is also produced by $g_i$, \ie, for all $\gamma \in (2^\inputs{i})^\omega$, $\comp(s_i,\gamma)\cap\guarOutputs{i} = \comp(g_i,\gamma)\cap\guarOutputs{i}$. By the definition of strategy restriction, every sequence of variables in $\outputs{i}$ produced by $s'_i$ is also produced by $s_i$. Hence, as $s'_i$ may produce finite computations, $\pref{\comp(s'_i,\gamma)}{t} = \pref{\comp(s_i,\gamma)}{t}$ holds for all points in time $t$ with $1 \leq t \leq |\comp(s'_i,\gamma)|$. Since $\guarOutputs{i} \subseteq \outputs{i} \subseteq \variables{i}$ holds by definition, we thus have $\pref{\comp(s'_i,\gamma)}{t} \cap \guarOutputs{i} = \pref{\comp(g_i,\gamma)}{t} \cap \guarOutputs{i}$ for all $\gamma \in (2^\inputs{i})^\omega$ and for all $t$ with $1 \leq t \leq |\comp(s'_i,\gamma)|$. Thus, $s'_i \simresp{\guarOutputs{i}} g_i$ follows.
	
	Second, we show that for all $\gamma \in (2^\inputs{i})^\omega$, $\gamma' \in (2^{V\setminus\variables{i}})^\omega$, the run of $\mathcal{A}_i$ induced by $\comp(s'_i,\gamma)\cup\gamma'$ contains only finitely many visits to rejecting states: Let $\gamma \in (2^\inputs{i})^\omega$, $\gamma' \in (2^{V\setminus\variables{i}})^\omega$. We distinguish two cases:
	
	\begin{enumerate}
		\item Let $\pref{\comp(s_i,\gamma)}{t}\cup\pref{\gamma'}{t} \in \validHistory{t}{\mathcal{G}_i}$ hold for all points in time $t$. Then, since $s_i \satresp{\mathcal{G}_i} \varphi_i$ holds, we have $\comp(s_i,\gamma)\cup\gamma'\models\varphi_i$. Thus, by definition of the co-Büchi acceptance condition and since $\mathcal{L}(\mathcal{A}_i) = \mathcal{L}(\varphi_i)$ holds, the run of~$\mathcal{A}_i$ induced by $\comp(s_i,\gamma)\cup\gamma'$ contains only finitely many visits to rejecting states. Since $\pref{\comp(s_i,\gamma)}{t}\cup\pref{\gamma'}{t} \in \validHistory{t}{\mathcal{G}_i}$ holds for all points in time $t$ by assumption, $s'_i$ is a copy of $s_i$ by the definition of strategy restriction. Thus, $\comp(s'_i,\gamma)\cup\gamma' = \comp(s_i,\gamma)\cup\gamma'$ holds and hence the run of~$\mathcal{A}_i$ induced by $\comp(s'_i,\gamma)\cup\gamma'$ contains only finitely many visits to rejecting states.
		\item Let there be a point in time $t$ such that $\pref{\comp(s_i,\gamma)}{t}\cup\pref{\gamma'}{t} \not\in \validHistory{t}{\mathcal{G}_i}$ holds. If for all $\gamma'' \in (2^{V\setminus\variables{i}})^\omega$, there is a $t$ such that $\pref{\comp(s_i,\gamma)}{t}\cup\pref{\gamma''}{t} \not\in \validHistory{t}{\mathcal{G}_i}$ holds, then, by the definition of strategy restriction, $\comp(s'_i,\gamma)$ is finite. Hence, the run of $\mathcal{A}_i$ induced by $\comp(s'_i,\gamma)\cup\gamma'$ is finite as well and therefore it contains only finitely many visits to rejecting states. Otherwise, there is a $\gamma'' \in (2^{V\setminus\variables{i}})^\omega$ such that $\pref{\comp(s_i,\gamma)}{t}\cup\pref{\gamma''}{t} \in \validHistory{t}{\mathcal{G}_i}$ holds for all points in time~$t$. Then, since $s_i \satresp{\mathcal{G}_i} \varphi_i$ holds, we have $\comp(s_i,\gamma)\cup\gamma''\models\varphi_i$. By assumption, we have $\propositions{\varphi_i} \subseteq \variables{i}$ and hence the satisfaction of $\varphi_i$ is independent of the valuations of the variables in $V\setminus\variables{i}$. Hence, $\comp(s_i,\gamma)\cup\gamma' \models\varphi_i$ follows as well. Therefore, by the co-Büchi condition and since  $\mathcal{L}(\mathcal{A}_i) = \mathcal{L}(\varphi_i)$ holds, the run of~$\mathcal{A}_i$ induced by $\comp(s_i,\gamma)\cup\gamma'$ contains only finitely many visits to rejecting states. Since $\pref{\comp(s_i,\gamma)}{t}\cup\pref{\gamma''}{t} \in \validHistory{t}{\mathcal{G}_i}$ holds for all points in time~$t$, $s'_i$ is a copy of $s_i$ by the definition of strategy restriction. Thus, $\comp(s'_i,\gamma)\cup\gamma' = \comp(s_i,\gamma)\cup\gamma'$ holds and hence the run of~$\mathcal{A}_i$ induced by $\comp(s'_i,\gamma)\cup\gamma'$ contains only finitely many visits to rejecting states.
	\end{enumerate}
	
	Hence, for all $p_i\in\sysProc$, we have $s'_i \simresp{\guarOutputs{i}} g_i$ and, for all $\gamma \in (2^\inputs{i})^\omega$, $\gamma' \in (2^{V\setminus\variables{i}})^\omega$, the run of $\mathcal{A}_i$ induced by $\comp(s'_i,\gamma)\cup\gamma'$ contains only finitely many visits to rejecting states. Thus, $(\mathcal{S}',\mathcal{G})$ is indeed a solution of certifying synthesis with local strategies for $\varphi$.\qed
\end{proof}

With these results as well as the soundness and completeness of certifying synthesis with complete strategies, guarantee transition systems, and local satisfaction, we thus obtain the following theorem:

\begin{theorem}\label{thm:corectness_local_strategies}
	Let $\varphi$ be an LTL formula.
	\begin{enumerate}
		\item Let $\mathcal{G} = \myVec{g_1,\dots,g_n}$ be a vector of GTS and let $\mathcal{G}_i = \{ g_j \mid p_j \in \sysProc \}$. Let $\mathcal{S} = \myVec{s_1,\dots,s_n}$ be a vector of local strategies such that for all $p_i \in \sysProc$, $s_i$ is a local strategy with respect to~$\mathcal{G}_i$. If $(\mathcal{S},\mathcal{G})$ is a solution of certifying synthesis with local strategies for $\varphi$, then $s_1 \pc \dots \pc s_n \models \varphi$ holds.
		\item Let $\mathcal{S} = \myVec{s_1,\dots,s_n}$ be a vector of strategies and let $\myVec{\varphi_1,\dots,\varphi_n}$ be the decomposition of $\varphi$. If $s_1 \pc \dots \pc s_n \models \varphi$ and if $\propositions{\varphi_i} \subseteq \variables{i}$ holds for all $p_i\in\sysProc$, then there exists a vector $\mathcal{G}$ of guarantee transition systems, such that $(\mathcal{S}',\mathcal{G})$ is a solution of certifying synthesis with local strategies for $\varphi$, where $\mathcal{S'} = \myVec{s'_1,\dots,s'_n}$ with $s'_i  = \restrict{s_i}{\mathcal{G}}$.
	\end{enumerate}
\end{theorem}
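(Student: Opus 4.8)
The plan is to reduce both parts of the statement to the already-established soundness and completeness of certifying synthesis with complete strategies, guarantee transition systems, and \emph{local satisfaction} (\Cref{thm:correctness_assumption_lts}), using the two bridge lemmas \Cref{lem:local_to_complete} and \Cref{lem:complete_to_local} that translate between local strategies and complete strategies via \extend{\cdot}{\cdot} and \restrict{\cdot}{\cdot}. Neither part requires fresh combinatorial work; the content lies in chaining the right implications and, for the soundness direction, in a careful argument that composing partial local strategies is well-defined.

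For Part~1, I would start from a solution $(\mathcal{S},\mathcal{G})$ of certifying synthesis with local strategies and apply \Cref{lem:local_to_complete} to obtain that the extensions $\mathcal{S}' = \myVec{s'_1,\dots,s'_n}$ with $s'_i = \extend{s_i}{\mathcal{G}}$ form a solution of certifying synthesis with local satisfaction. The forward direction of \Cref{thm:correctness_assumption_lts} then yields $s'_1 \pc \dots \pc s'_n \models \varphi$. The remaining and crucial step is to transfer this to the local strategies themselves. Here I would argue that $s_1 \pc \dots \pc s_n$ produces exactly the same computation as $s'_1 \pc \dots \pc s'_n$ on every environment input: since $s_i \simresp{\guarOutputs{i}} g_i$ holds for every $p_i \in \sysProc$, in the composition each process feeds the others precisely the guarantee outputs prescribed by $g_i$, so every prefix seen by each local strategy is a valid history with respect to the guarantees of the other processes. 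Consequently no local strategy ever leaves its domain of definition, the composition stays infinite, and because an extension agrees with its local strategy exactly where the latter does not get stuck, the two compositions coincide. Hence $s_1 \pc \dots \pc s_n \models \varphi$ follows. This transfer step—establishing that the internal feedback in the composition never drives any partial $s_i$ out of its domain—is the main obstacle, and it is precisely the self-simulation $s_i \simresp{\guarOutputs{i}} g_i$ that makes it go through.

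For Part~2, the argument is a direct composition of cited results. From $s_1 \pc \dots \pc s_n \models \varphi$, the backward direction of \Cref{thm:correctness_assumption_lts} furnishes a vector $\mathcal{G}$ of guarantee transition systems such that $(\mathcal{S},\mathcal{G})$ is a solution of certifying synthesis with local satisfaction. The additional hypothesis $\propositions{\varphi_i} \subseteq \variables{i}$ for all $p_i \in \sysProc$ is exactly the side condition required by \Cref{lem:complete_to_local}, so applying that lemma shows that the restrictions $s'_i = \restrict{s_i}{\mathcal{G}}$ together with $\mathcal{G}$ constitute a solution of certifying synthesis with local strategies for $\varphi$, which is the claim. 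Combining the two parts completes the proof; I expect Part~2 to be essentially immediate once the machinery of \Cref{lem:complete_to_local} is available, with all the genuine difficulty concentrated in the well-definedness argument of Part~1.
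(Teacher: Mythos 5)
Your proposal is correct and takes essentially the same route as the paper: both parts reduce to \Cref{thm:correctness_assumption_lts} via the bridge lemmas \Cref{lem:local_to_complete} and \Cref{lem:complete_to_local}, with Part~2 being the same two-step chain. Your key transfer step in Part~1 --- using $s_i \simresp{\guarOutputs{i}} g_i$ to argue that the internal feedback keeps every prefix a valid history, so no local strategy leaves its domain and $s_1 \pc \dots \pc s_n$ coincides with the composition of the extensions --- is exactly the paper's argument, which it merely phrases as a contradiction at the earliest point of divergence (using the Moore semantics there) instead of your direct invariant.
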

\begin{proof}
	First, let $\mathcal{G} = \myVec{g_1,\dots,g_n}$ be a vector of GTS, let $\mathcal{S} = \myVec{s_1,\dots,s_n}$ be a vector of local strategies such that for all $p_i \in \sysProc$, $s_i$ is a local strategy with respect to~$\mathcal{G}_i$, and assume that $(\mathcal{S},\mathcal{G})$ is a solution of certifying synthesis with local strategies for $\varphi$. Then, by \Cref{lem:local_to_complete}, $(\mathcal{S}',\mathcal{G})$ is a solution of certifying synthesis with local satisfaction and guarantee transition systems for~$\varphi$, where $\mathcal{S}' = \myVec{s'_1,\dots,s'_n}$ with $s'_i = \extend{s_i}{\mathcal{G}}$. Thus, $s'_1 \pc \dots \pc s'_n \models \varphi$ follows with \Cref{thm:correctness_assumption_lts}. 
	It remains to show that $\comp(s'_1 \pc \dots \pc s'_n,\gamma) = \comp(s_1 \pc \dots \pc s_n,\gamma)$ holds for all $\gamma \in (2^\envOutputs)^\omega$:
	
	Towards a contradiction, suppose that there is a $\gamma \in (2^\envOutputs)^\omega$ such that $\comp(s'_1 \pc \dots \pc s'_n,\gamma) \neq \comp(s_1 \pc \dots \pc s_n,\gamma)$ holds. For the sake of readability, let $\sigma = \comp(s_1 \pc \dots \pc s_n,\gamma)$ and $\sigma' = \comp(s'_1 \pc \dots \pc s'_n,\gamma)$. Let $t$ be the earliest point in time at which $\sigma$ and $\sigma'$ differ. Then, there is a process $p_i \in \sysProc$ such that $\pref{\comp(s'_i,\sigma'\cap\inputs{i})}{t} \cap \outputs{i} \neq \pref{\comp(s_i,\sigma\cap\inputs{i})}{t} \cap \outputs{i}$ holds while we have $\sigma_k = \sigma'_k$ for all $k$ with $1 \leq k < t$. Since strategies are represented by Moore transition systems, they cannot react to an input directly. Hence, since $t$ is the earliest point in time at which $\sigma$ and $\sigma'$ differ by assumption, we have \[\pref{\comp(s'_i,\sigma\cap\inputs{i})}{t} \cap \outputs{i} \neq \pref{\comp(s_i,\sigma\cap\inputs{i})}{t} \cap \outputs{i}\] as well since $\pref{\comp(s'_i,\sigma'\cap\inputs{i})}{t} \cap \outputs{i} = \pref{\comp(s'_i,\sigma\cap\inputs{i})}{t} \cap \outputs{i}$ holds. By definition of strategy extension, this is only possible if the transition system representing $s_i$ gets stuck at point in time $t$ on input $\sigma \cap \inputs{i}$. Hence, $\comp(s_i,\sigma \cap \inputs{i})$ is finite.
	Therefore, by definition of local strategies, for all $\gamma' \in (2^{V\setminus\variables{i}})^\omega$, there is a point in time $k$ such that $\pref{\comp(s_i,\sigma\cap\inputs{i})}{k} \cup \pref{\gamma'}{k} \not\in\validHistory{k}{\mathcal{G}_i}$, where $\mathcal{G}_i = \mathcal{G}\setminus\{g_i\}$. Thus, in particular, $\pref{\comp(s_i,\sigma\cap\inputs{i})}{k} \cup (\pref{\sigma}{k} \cap (V\setminus\variables{i})) \not\in\validHistory{k}{\mathcal{G}_i}$ holds. Note that we have $\pref{\comp(s_i,\sigma\cap\inputs{i})}{k} \cup (\pref{\sigma}{k} \cap (V\setminus\variables{i})) = \sigma$ and hence $\sigma \not\in\validHistory{k}{\mathcal{G}_i}$ holds. Thus, by the definition of valid histories, there is a system process $p_j \in \sysProc\setminus\{p_i\}$ and a point in time $\ell$ with $1 \leq \ell \leq k$ such that $\sigma_\ell \cap \guarOutputs{j} \neq \comp(g_j,\hat{\sigma}\cap\inputs{j})_\ell \cap \guarOutputs{j}$ holds for an infinite extension $\hat{\sigma}$ of $\pref{\sigma}{k}$. Note that this indeed holds for all infinite extensions of $\pref{\sigma}{k}$ since strategies cannot look into the future and $\ell \leq k$. Hence, this holds in particular for $\hat{\sigma} := \sigma$. By construction of $\sigma$ and the definition of parallel composition, however, $\pref{\sigma}{\ell} \cap \outputs{j} = \pref{\comp(s_j,\sigma\cap\inputs{j})}{\ell} \cap \outputs{j}$ holds. Hence, since $\guarOutputs{j} \subseteq \outputs{j}$, we obtain $\pref{\comp(s_j,\sigma\cap\inputs{j})}{\ell} \cap \guarOutputs{j} \neq \comp(g_j,\sigma\cap\inputs{j})_\ell \cap \guarOutputs{j}$.
	Since $(\mathcal{S},\mathcal{G})$ is a solution of certifying synthesis by assumption, we have $s_i \simresp{\guarOutputs{i}} g_i$. Hence, every sequence of valuations of variables in $\guarOutputs{i}$ produced by $s_i$ is also produced by $g_i$. Therefore, in particular, $\pref{\comp(s_j,\sigma\cap\inputs{j})}{\ell} \cap \guarVariables{j} = \comp(g_j,\sigma\cap\inputs{j})_\ell \cap \guarVariables{j}$ holds and thus $\pref{\comp(s_j,\sigma\cap\inputs{j})}{\ell} \cap \guarOutputs{j} = \comp(g_j,\sigma\cap\inputs{j})_\ell \cap \guarOutputs{j}$ follows since $\guarOutputs{j} \subseteq \guarVariables{j}$ holds, yielding a contradiction. Thus, we have \[\comp(s'_1 \pc \dots \pc s'_n,\gamma) = \comp(s_1 \pc \dots \pc s_n,\gamma)\] for all $\gamma \in (2^\envOutputs)^\omega$ and hence, since $s'_1 \pc \dots \pc s'_n \models \varphi$ holds as shown above, $s_1 \pc \dots \pc s_n \models \varphi$ follows.
	
	Second, let $\mathcal{S} = \myVec{s_1,\dots,s_n}$ be a cevtor of strategies and assume that both $s_1 \pc \dots \pc s_n \models \varphi$ and $\propositions{\varphi_i} \subseteq \variables{i}$ hold for all $p_i\in\sysProc$. Then, by \Cref{thm:correctness_assumption_lts}, there exists a vector $\mathcal{G}$ of guarantee transition systems such that $(\mathcal{S},\mathcal{G})$ is a solution of certifying synthesis for $\varphi$. Hence, since $\propositions{\varphi_i} \subseteq \variables{i}$ holds for all $p_i\in\sysProc$ by assumption, it directly follows with \Cref{lem:complete_to_local} that $(\mathcal{S}',\mathcal{G})$ is a solution of certifying synthesis, proving the claim.\qed
\end{proof}


\subsection*{Constraint System}

Next, we present the SAT constraint system $\mathcal{C}_{A,\varphi}$ that, given an architecture~$A$ and an LTL formula $\varphi$, encodes the search for \emph{local strategies} and \emph{guarantee transition systems} satisfying the requirements of (approximative) certifying synthesis with local strategies for $\varphi$. That is, we present the desired SAT constraint system of \Cref{thm:constraint_system}.
Intuitively, our constraint system $\constraintSystem{A}{\varphi}{\mathcal{B}}$ consists of $n$ copies of the constraint system for monolithic systems~\cite{FaymonvilleFRT17}, one for each process, to search for the strategies~$s_j$, and some further constraints for searching for the certificates and for ensuring the correct relation between certificates and strategies.

Let $\varphi$ be an LTL specification. Recall that in bounded synthesis~\cite{FinkbeinerS13}, $\varphi$ is translated into a universal co-Büchi automaton $\mathcal{A}$ that accepts $\mathcal{L}(\varphi)$, \ie, with $\mathcal{L}(\mathcal{A}) = \mathcal{L}(\varphi)$. 
A transition system $\mathcal{T}$ is accepted by $\mathcal{A}$ if for every input sequence, all runs of $\mathcal{A}$ induced by the path of $\mathcal{T}$ on the input are accepting, \ie, if they only visit finitely many rejecting states.
To consider the runs of $\mathcal{A}$ induced by the paths of $\mathcal{T}$ on all inputs, we build the (unique) \emph{run graph} of $\mathcal{A}$ and $\mathcal{T}$ and check whether all of its paths have only finitely many visits to rejecting states.

\begin{definition}[Run Graph]
	Let $\mathcal{T}=(T,t_o,\tau,o)$ be a Moore transition system and let $\mathcal{A}=(Q,q_0,\delta,F)$ be a universal co-Büchi automaton. The unique \emph{run graph} $\mathcal{G} = (V,E)$ of $\mathcal{T}$ and $\mathcal{A}$ is defined by $V = T \times Q$ and $((t,q),(t',q')) \in E$ if, and only if, there is a valuation $\inp{i}\in 2^I$ of the inputs and a valuation $\out{i} \in 2^O$ of the outputs such that $\tau(t,\inp{i}) = t'$, $o(t)=\out{o}$, and $(q,\inp{i}\cup\out{o},q')\in\delta$ hold.
\end{definition}

To check whether all of the run graph's paths have only finitely many visits to rejecting states, we \emph{annotate} it. An \emph{annotation} $\lambda: T \times Q \rightarrow \mathbb{N} \cup \{\bot\}$ maps nodes of the run graph to either unreachable ($\bot$), or to a natural number $k$. Intuitively, the annotation counts the number of visits to rejecting states that already can have occurred when reaching a state. A path with infinitely many visits to rejecting states would require a state to be annotated with $\infty$.

\begin{definition}[Valid Annotation]
	Let $\mathcal{T}=(T,t_0,\tau,o)$ be a Moore transition system and let $\mathcal{A}=(Q,q_0,\delta,F)$ be a universal co-Büchi automaton. An annotation $\lambda: T \times Q \rightarrow \mathbb{N} \cup \{\bot\}$ is called \emph{valid} if
	\begin{enumerate}
		\item The pair of initial states $(t_0,q_0)$ is annotated with a natural number, \ie we have $\lambda(t_0,q_0) = k \neq \bot$.
		\item If a pair of states $(t,q)$ is annotated with a natural number, then every successor pair $(t',q')$ in the run graph is labeled with a greater number. The number has to be strictly greater if $q'$ is rejecting, \ie, we require $\lambda(t',q') \greaterBound{q'} \lambda(t,q)$, where $\greaterBound{q'} := >$ if $q' \in F$ and $\greaterBound{q'} := \geq$ if $q' \not\in F$.
	\end{enumerate}
\end{definition}

The existence of a valid annotation indeed corresponds to the satisfaction of the specification by the transition system:

\begin{theorem}[\cite{FinkbeinerS13}]\label{thm:valid_annotation}
	A transition system $\mathcal{T}$ is accepted by a universal co-Büchi automaton $\mathcal{A}$ if, and only if, it has a valid annotation.
\end{theorem}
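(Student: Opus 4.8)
The plan is to prove the two directions separately, exploiting throughout that the run graph $\mathcal{G} = (V,E)$ is \emph{finite}: since both $T$ and $Q$ are finite, so is $V = T \times Q$. Consequently, acceptance of $\mathcal{T}$ by $\mathcal{A}$ --- i.e.\ that every path of $\mathcal{G}$ starting in $(t_0,q_0)$ visits rejecting states only finitely often --- is equivalent to the purely combinatorial statement that no rejecting node reachable from $(t_0,q_0)$ lies on a cycle. I would first record this equivalence (an infinite path in a finite graph must eventually cycle, and it visits $F$ infinitely often iff some rejecting node recurs, i.e.\ sits on a reachable cycle), since both directions are cleanest when phrased in terms of it.

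For the ``accepted $\Rightarrow$ valid annotation'' direction, I would construct the annotation explicitly. For a reachable node $v$, set $\lambda(v)$ to be the maximum, over all finite paths $\pi$ from $(t_0,q_0)$ to $v$, of the number of rejecting nodes occurring on $\pi$; for unreachable nodes put $\lambda(v) = \bot$. The key point is \emph{well-definedness}: this maximum must be finite. This is exactly where the no-rejecting-node-on-a-cycle property is used --- if no rejecting node recurs, then each rejecting node appears at most once on any path, so the count is bounded by the finite number of reachable rejecting nodes. Validity is then routine: $\lambda(t_0,q_0)$ is a natural number because the initial node is reachable, and for any edge $((t,q),(t',q'))$ with $\lambda(t,q)$ a natural number, extending a maximizing path to $(t,q)$ by this edge yields $\lambda(t',q') \geq \lambda(t,q) + [q' \in F]$, which is precisely the required $\lambda(t',q') \greaterBound{q'} \lambda(t,q)$.

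For the ``valid annotation $\Rightarrow$ accepted'' direction, I would argue by contradiction: suppose some path $\pi = v_0 v_1 v_2 \dots$ from $(t_0,q_0)$ visits $F$ infinitely often. Condition~1 gives $\lambda(v_0) \neq \bot$, and an induction using condition~2 shows every $v_i$ carries a natural-number annotation, since a successor of a number-annotated node must itself be number-annotated. Along $\pi$ the annotation is non-decreasing and strictly increases at each visit to a rejecting state, so infinitely many visits to $F$ force $\lambda(v_i) \to \infty$. But $V$ is finite, so $\lambda$ attains only finitely many values and is therefore bounded --- a contradiction. Hence every path visits $F$ finitely often, i.e.\ $\mathcal{T}$ is accepted.

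The main obstacle is the well-definedness of $\lambda$ in the forward direction: proving that the defining maximum is finite is the only genuinely nontrivial step, and it is precisely what forces the reduction to the combinatorial ``no rejecting node on a reachable cycle'' characterization established at the outset. Everything else --- validity of the constructed annotation and the contradiction in the backward direction --- reduces to the monotonicity bookkeeping of conditions~1 and~2 together with finiteness of the run graph.
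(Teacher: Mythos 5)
Your proof is correct and takes the same route as the original one: the paper itself does not prove this theorem but imports it from~\cite{FinkbeinerS13}, where the argument is exactly yours --- acceptance is reduced to the absence of a reachable rejecting vertex on a cycle in the finite run graph, the forward direction annotates each reachable vertex with the maximal number of rejecting vertices on paths from the initial vertex (finite precisely because no rejecting vertex can repeat on a path), and the backward direction derives a contradiction from an annotation that must strictly increase infinitely often while ranging over finitely many values. Your identification of well-definedness of the maximum as the only nontrivial step is also accurate, so there is nothing to add.
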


Using valid annotations, we define the constraint system.
First, we define the encodings of the transition systems representing the strategies and certificates as well as of the universal co-Büchi automata representing the subspecifications and the annotations of the resulting run graphs.
Let $\mathcal{T}_j = (T_j,t^j_0,\tau_j,o_j)$ be the TS representing strategy~$s_j$ of a process $p_j\in\sysProc$. Let $\guarTrans{j} = (G_j,u^j_0,\tau^G_j,o^G_j)$ be the GTS representing certificate $g_j$ for $p_j \in \sysProc$. Let $\mathcal{A}_j = (Q_j,q^j_0,\delta_j,F_j)$ be the universal co-Büchi automaton representing the subspecification $\varphi_j$ for $p_j \in \sysProc$.
We encode them as well as the annotations $\lambda^\mathbb{B}_j: T_j \times Q_j \rightarrow \mathbb{B}$ encoding reachability and $\lambda^{\#}_j: T_j \times Q_j \rightarrow \mathbb{N}$ encoding the natural number as follows:

\begin{itemize}
	\item Universal co-Büchi Automaton $\mathcal{A}_j = (Q_j,q^j_0,\delta_j,F_j)$:
		\begin{itemize}
			\item[-] $\transSpec{j}{q}{\inp{i}}{q'}$ is true iff $(q,\inp{i},q') \in \delta_j$ 
		\end{itemize}
	\item Strategy Transition System $\mathcal{T}_j = (T_j, t^j_0, \tau_j, o_j)$: 
		\begin{itemize}
			\item[-] $\transStrat{j}{t}{\inp{i}}{t'}$ is true iff $\tau_j(t,\inp{i}) = t'$
		 	\item[-] $\outputStrat{j}{t}{\inp{i}}{v}$ is true iff $o_j(t) = L$ and $v \in L$
		\end{itemize}
	\item Guarantee Transition System $\guarTrans{j} = (G_j, u^j_0, \tau^G_j, o^G_j)$:
		\begin{itemize}
			\item[-] $\transGuar{j}{u}{\inp{i}}{u'}$ is true iff $\tau^G_j(u,\inp{i}) = u'$
		 	\item[-] $\outputGuar{j}{u}{\inp{i}}{v}$ is true iff $o^G_j(u) = L$ and $v \in L$
		\end{itemize}
	\item Annotation $\lambda$:
		\begin{itemize}
			\item[-] $\reach{j}{t}{q}$ is true iff $(t,q)$ is reachable from the initial state of the run graph
			\item[-] $\bound{j}{t}{q}$ is the bit vector of length $\mathcal{O}(\log(|T|\cdot|Q|))$ representing the binary encoding of the value $\lambda(t,q)$ 
		\end{itemize}
\end{itemize}

We now define the constraint system $\constraintSystem{A}{\varphi}{\mathcal{B}}$ for an architecture $A$, an LTL formula $\varphi$ represented by a universal co-Büchi automaton $\mathcal{A}$, and size bounds~$\mathcal{B}$. We give the constraints for a process $p_j \in \sysProc$, all in all, we obtain:

\[ \bigwedge_{p_j \in \sysProc} \left( (a) \land (b) \land \bigwedge_{p_k \in \relevantProcesses{j}} (c) \land (d) \land (e) \right). \]

\noindent
(a) Guarantee transition systems are required to be complete. That is, they have exactly one outgoing transition for every state and every input:

\noindent\scalebox{0.98}{\parbox{1.020408\linewidth}{%
\begin{equation}\tag{a}
	\begin{split}
	\bigwedge_{u\in\statesGuar{j}} \bigwedge_{\inp{i} \in \inputs{j}} \left( \bigvee_{u'\in\statesGuar{j}} \transGuar{j}{u}{\inp{i}}{u'} \right) \land \bigwedge_{u'\in\statesGuar{j}} \bigwedge_{\substack{u''\in\statesGuar{j} \\ u' \neq u''}} \left(\neg \transGuar{j}{u}{\inp{i}}{u'} \lor \neg \transGuar{j}{u}{\inp{i}}{u''}  \right)
	\end{split}
\end{equation}
}}

\noindent
(b) A local strategy needs to satisfy its own certificate. In fact, certifying synthesis ensures that $s_j \simresp{\guarOutputs{j}} g_j$ holds. Thus, we encode the existence of a simulation relation $\simRelStG{j}: T_j \times G_j$ from $s_j$ to $g_j$. To do so, we introduce a variable $\simStratToGuar{j}{t}{u}$ that is true iff $(t,u) \in \simRelStG{j}$ and obtain:

\noindent\scalebox{0.98}{\parbox{1.020408\linewidth}{%
\begin{equation}\tag{b}
	\begin{split}
	\simStratToGuar{j}{t^j_0}{u^j_0} &\land \bigwedge_{t\in\statesStrat{j}} \bigwedge_{u\in\statesGuar{j}} \left( \simStratToGuar{j}{t}{u} \rightarrow  \left( \bigwedge_{v\in\guarOutputs{j}} \left( \outputStrat{j}{t}{\inp{i}}{v} \leftrightarrow \outputGuar{j}{u}{\inp{i}}{v} \right) \land \right.\right. \\
	& \left.\left. \bigwedge_{\inp{i}\subseteq\inputs{j}}\bigwedge_{t'\in\statesStrat{j}} \left( \transStrat{j}{t}{\inp{i}}{t'} \rightarrow \bigvee_{u'\in\statesGuar{j}} \left( \transGuar{j}{u}{\inp{i}}{u'} \land \simStratToGuar{j}{t'}{u'} \right) \right)\!\!\right)\!\!\right)
	\end{split}
\end{equation}
}}

\noindent
(c) A local strategy only needs to satisfy the specification if the other processes stick to their guarantees. In certifying synthesis with local strategies, we ensure this by representing strategies by incomplete transition systems whose computations are infinite if, and only if, the other (relevant) processes do not deviate from their certificates. We did this with the notion of valid histories. To encode this in our constraint system, the main idea is to assign a set of \emph{associated outputs}~$\associatedOutputs{j}$ to each process $p_j \in\sysProc$. The associated outputs are the output variables of $p_j$'s relevant processes, \ie, $\associatedOutputs{j} = \{ \outputs{k} \mid p_k \in \relevantProcesses{j}\}$. The labeling function of the strategy transition system is then defined over $\outputs{j} \cup \associatedOutputs{j}$. We introduce a constraint that ensures that the valuations of the associated outputs indeed match the certificates of the relevant processes. In fact, we require that for all relevant processes $p_k \in \relevantProcesses{j}$, $s_j$ simulates $g_k$ regarding the associated outputs of $s_j$ that are outputs of $g_k$. Note here that we use a slightly more general definition of simulation than in the preliminaries: The initial states need to be contained in the simulation relation $\simRelGtS{k}{j}$, \ie, $(u^k_0,t^j_0) \in \simRelGtS{k}{j}$. Furthermore, the successors $u'$ and $t'$ of states $u$ and $t$ with $(u,t) \in \simRelGtS{k}{j}$ for inputs $\inp{i} \in 2^\inputs{k}$ and $\inp{i}'\in\inputs{j}$ that agree on shared variables and that may occur if the other processes stick to their guarantees need to be contained in $\simRelGtS{k}{j}$. Lastly, if $(u,t) \in \simRelGtS{k}{j}$ holds, then they need to agree on the associated outputs of $p_j$ that are outputs of $p_k$. 
Note that $g_k$ is a complete guarantee transition system, while $s_j$ is an incomplete local strategy (as we will encode later in constraint (d)). Hence, not every transition of $g_k$ can be matched by a transition of $s_j$: If the input sequence does not respect the certificate of a further relevant process~$p_\ell$, then there is no transition in~$s_j$. Therefore, we only consider input sequences that may occur if the other processes stick to their guarantees in the definition of the simulation. For the sake of readability, we encode the check whether an input sequence may occur with $\validInput{j}{t}{\inp{i}'}:=\bigwedge_{v \in \associatedOutputs{j}} (v\in\inp{i}' \leftrightarrow \outputStrat{j}{t}{\inp{i}'}{v})$. We obtain the following constraint:

\noindent\scalebox{0.98}{\parbox{1.020408\linewidth}{%
\begin{align}\tag{c}
	&\simGuarToStrat{k}{j}{u^k_0}{t^j_0} \land \bigwedge_{u\in\statesGuar{k}} \bigwedge_{t\in\statesStrat{j}} \left( \simGuarToStrat{k}{j}{u}{t} \rightarrow \bigwedge_{\inp{i}\subseteq\inputs{k}} \bigwedge_{\substack{\inp{i'}\subseteq\inputs{j} \\ \inp{i}\cap\inputs{j} = \inp{i'}\cap\inputs{k}}} \left( \validInput{j}{t}{\inp{i'}} \rightarrow \right.\right.\\
	&~~~~~~\left.\left.\left(\bigwedge_{u'\in\statesGuar{k}} \left( \transGuar{k}{u}{\inp{i}}{u'} \rightarrow \bigvee_{t'\in\statesStrat{j}} \left( \transStrat{j}{t}{\inp{i'}}{t'} \land \simGuarToStrat{k}{j}{u'}{t'} \right)\!\! \right) \land \!\! \bigwedge_{v\in\associatedOutputs{j}\cap\guarOutputs{k}} \!\! \outputGuar{k}{u}{\inp{i}}{v} \leftrightarrow \outputStrat{j}{t}{\inp{i'}}{v} \right)\!\! \right)\!\! \right)\nonumber
\end{align}
}}

\noindent
To ensure that $s_j$ simulates not only a single certificate $g_k$ but \emph{all} relevant certificates, we use $\bigwedge_{p_k \in \relevantProcesses{j}} (c)$ in the overall constraint system.

\noindent
(d) A local strategy is complete for all inputs that may occur if the other (relevant) processes stick to their certificates and incomplete for all other inputs. That is, for every state, they have exactly one outgoing transition for every input that may occur if the other (relevant) processes stick to their certificates, and no outgoing transitions for every other input. Since by constraint (c), the associated outputs of $p_j$ in $s_j$ exactly capture the outputs of the other (relevant) processes that may occur at this particular point in time, we can again use $\validInput{j}{t}{\inp{i}'}$ to determine whether the existence of an outgoing transition is required:

\noindent\scalebox{0.98}{\parbox{1.020408\linewidth}{%
\begin{equation}\tag{d}
	\begin{split}
	\bigwedge_{t\in\statesStrat{j}} \bigwedge_{\inp{i} \in \inputs{j}} \left( \validInput{j}{t}{\inp{i}} \leftrightarrow \bigvee_{t'\in\statesStrat{j}} \transStrat{j}{t}{\inp{i}}{t'} \right) \land \bigwedge_{t'\in\statesStrat{j}} \bigwedge_{\substack{t''\in\statesStrat{j} \\ t' \neq t''}} \left(\neg \transStrat{j}{t}{\inp{i}}{t'} \lor \neg \transStrat{j}{t}{\inp{i}}{t''}  \right)
	\end{split}
\end{equation}
}}

\noindent
(e) Lastly, the run graph of the strategy transition system and the universal co-Büchi automaton is required to have a valid annotation:

\noindent\scalebox{0.98}{\parbox{1.020408\linewidth}{%
\begin{align}\tag{e}
	\reach{j}{t^j_0}{q^j_0} &\land \bigwedge_{q\in\statesSpec{j}} \bigwedge_{t\in\statesStrat{j}} \left( \reach{j}{t}{q} \rightarrow \right. \\
	&\left.\bigwedge_{q'\in\statesSpec{j}} \bigwedge_{\inp{i}\subseteq\inputs{j}} \left( \bigwedge_{\out{o} \subseteq \outputs{j}} \transSpecSugar{j}{t}{q}{\inp{i} \cup \inp{o}}{q'} \rightarrow \bigwedge_{t'\in\statesStrat{j}} \left( \transStrat{j}{t}{\inp{i}}{t'} \rightarrow \left( \reach{j}{t'}{q'} \land \bound{j}{t'}{q'} \greaterBound{q'} \bound{j}{t}{q} \right) \!\! \right) \!\! \right) \!\! \right)\nonumber
\end{align}
}}

where $\transSpecSugar{j}{t}{q}{\inp{i} \cup \inp{o}}{q'}$ is syntactic sugar for $ \transSpec{j}{q}{\inp{i} \cup \out{o}}{q'} \land \bigwedge_{o\in\outputs{j}} \outputStrat{j}{t}{\inp{i}}{o} \leftrightarrow o \in \out{o}$, \ie, $\transSpecSugar{j}{t}{q}{\inp{i} \cup \inp{o}}{q'}$ is true, iff $o_j(t) = \inp{o}$ and $(q,\inp{i}\cup\inp{o},q')\in\delta_j$ hold.

To \emph{search} for local strategies $s_1, \dots, s_n$ and GTS $g_1, \dots, g_n$, we quantify existentially over the variables to find an assignment. If the constraint system is realizable, then the solution defines a solution of certifying synthesis with local strategies. Otherwise, the specification is unrealizable for the given size bounds.

The correctness of the constraint system follows with \Cref{thm:corectness_local_strategies,thm:valid_annotation}, the correctness of the constraint system for monolithic systems as well as the fact that using associated outputs and ensuring that they match the certificates indeed corresponds to checking for the existence of valid histories.

Note that we represent strategies with Moore transition systems. This ensures that the parallel composition of strategies is again a complete transition system. For Mealy transition systems, \ie, transition systems where the labeling may depend on the state \emph{and the input}, this does not hold in general. However, we can extend certifying synthesis and, in particular, the constraint system presented above: 
First, the variables encoding the labeling function need to depend on the input as well (c.f.\ the constraint system for monolithic systems~\cite{FinkbeinerS13,FaymonvilleFRT17} that also works for Mealy transition systems). 
Second, for Mealy transition systems, it is trivial to satisfy the righthand side of the implication in constraint (c) by violating $\validInput{j}{t}{\inp{i}'}$: The strategy reacts directly to the input and sets an associated output if, and only if, it is contained in the input. Thus, when considering Mealy transition systems, we need to ensure that such a solution is not possible. Therefore, we include a constraint describing that every strategy needs to have at least one transition for every environment output:
\begin{equation*}
	\bigwedge_{t\in\statesStrat{j}} \bigwedge_{\inp{i} \subseteq \envOutputs} \bigvee_{\inp{i'} \in \associatedOutputs{j}} \bigvee_{t'\in\statesStrat{j}} \transStrat{j}{t}{\inp{i} \cup \inp{i'}}{t'}
\end{equation*}
Together with constraint (d), this ensures that violating $\validInput{j}{t}{\inp{i}'}$ for every input sequence is not possible.
Furthermore, this immediately implies that the parallel composition of the strategy transition systems is complete.


\section{Experimental Evaluation}\label{app:benchmarks}

\begin{table}[t]
    \centering
    \caption{Experimental results on scalable benchmarks. Reported is the parameter and the running time in seconds. We used a machine with a 3.1 GHz Dual-Core Intel Core i5 processor and 16 GB of RAM, and a timeout of 60 min. For dist.\ BoSy, we use the SMT encoding and give the average runtime of 10~runs.\\}\label{table:results_extended}
    \begin{tabular}{p{3.2cm}>{\centering}p{1.3cm}||>{\centering}p{2.1cm}|>{\centering}p{2.1cm}|>{\centering\arraybackslash}p{2.1cm}}
	     Benchmark & Param. & Cert. Synth. & Dist. BoSy & Dom. Strat.\\
	     \hline\hline
	     n-ary Latch & 2 & \textbf{0.89} & 41.26 & 4.75\\
	     & 3 & \textbf{0.91} & TO & 6.40\\
	     & 4 & \textbf{0.92} & TO & 8.46\\
	     & 5 & \textbf{0.94} & TO & 10.74\\
	     & 6 & \textbf{12.26} & TO & 13.89\\
	     & 7 & 105.69 & TO & \textbf{15.06}\\
	     \hline
	     Shift & 2 & \textbf{1.10} & 1.99 & 4.76 \\
	     & 3 & \textbf{1.13} & 4.16 & 7.04 \\
	     & 4 & \textbf{1.14} & TO & 11.13 \\
	     & 5 & \textbf{1.29} & TO & 13.68 \\
	     & 6 & \textbf{2.20} & TO & 16.01 \\
	     & 7 & \textbf{9.01} & TO & 16.08 \\
	     & 8 & 71.89 & TO & \textbf{19.38} \\
	     \hline
	     Manufacturing Robots & 2 & \textbf{1.10} & 2.45 & --\\
	     & 4 & \textbf{1.18} & 2.43 & --\\
	     & 6 & \textbf{1.67} & 3.20 & --\\
	     & 8 & \textbf{2.88} & 5.67 & --\\
	     & 10 & \textbf{48.83} & 221.16 & --\\
	     & 12 & \textbf{1.44} & TO & --\\
	     & 14 & \textbf{76.32} & TO & --\\
	     & 18 & \textbf{2716.27} & TO & --\\
	     & 20 & \textbf{9.80} & TO & -- \\
	     & 24 & \textbf{8.82} & TO & -- \\
	     & 30 & \textbf{32.83} & TO & --\\
	     & 36 & \textbf{2911.26} & TO & --\\
	     & 42 & \textbf{373.90} & TO & --\\
	     & 45 & TO & TO & --
    \end{tabular}
\end{table}

In this section, we present more details on our experimental evaluation.
In \Cref{table:results} in \Cref{sec:experiments}, the results for some parameters are omitted. The full table for the affected benchmarks is presented in \Cref{table:results_extended}.

In the remainder of this section, we first give the system architectures of all benchmarks. Second, we present the specifications of the \emph{Ripple-Carry Adder} and the \emph{Manufacturing Robots} benchmark. The other benchmarks stem from the synthesis competition SYNTCOMP~\cite{SYNTCOMP2018}. Hence, we refer to the SYNTCOMP specification description for their specifications. Third, we give a more detailed description of the \emph{Manufacturing Robots} benchmark, including a table with detailed experimental results for all parameters.

\subsection*{System Architectures}

In the following, we present the system architectures of all benchmarks. The environment process $\env$ is depicted in gray, the system process are depicted in white. Incoming edges are labeled with the input variables of the process, outgoing edges with the output variables.

\paragraph{$n$-ary Latch.}

\begin{center}
\scalebox{0.92}{
\begin{tikzpicture}[>=latex,shorten >=0pt,auto,->,node distance=1cm,thin,every edge/.style={draw,font=\small}, initial text = , proc/.style={draw,rectangle,minimum width=0.9cm, minimum height=0.7cm, rounded corners = 5pt}, env/.style={draw,rectangle,minimum width=0.9cm, minimum height=0.7cm,fill=black!20}]
		
	\node[env]		(env)	at (3,1.8)		{$\mathit{env}$};
	\node[proc]		(p1)		at (0,0)		{$p_1$};
	\node[proc]		(p2)		at (2,0)		{$p_2$};
	\node			(pd)		at (4,0)		{$\dots$};
	\node[proc]		(pn)		at (6,0)		{$p_n$};
			
	\path	(env)	edge[bend right=25]	node	[left,align=center]	{$\mathit{inp}_1$ \\ $\mathit{upd}~~~$}	(p1)
					edge[bend right=5]	node[left,align=center]	{$\mathit{inp}_2~$ \\ $\mathit{upd}~~~$}	(p2)
					edge[bend left=25]	node	[right,align=center]	{$\mathit{inp}_n~$ \\ $~~~\mathit{upd}~~~$}	(pn);
					
	\node		(d1)		at (0,-1	.2)	{};
	\node		(d2)		at (2,-1	.2)	{};
	\node		(dn)		at (6,-1	.2)	{};
	
	\path	(p1)	edge 	node	[left]	{$\mathit{out}_1$}	(d1)
			(p2)	edge 	node	[left]	{$\mathit{out}_2$}	(d2)
			(pn)	edge 	node	[right]	{$\mathit{out}_n$}	(dn);
\end{tikzpicture}}
\end{center}

\paragraph{Generalized Buffer.}

\begin{center}
\scalebox{0.92}{
\begin{tikzpicture}[>=latex,shorten >=0pt,auto,->,node distance=1cm,thin,every edge/.style={draw,font=\small}, initial text = , proc/.style={draw,rectangle,minimum width=0.9cm, minimum height=0.7cm, rounded corners = 5pt}, env/.style={draw,rectangle,minimum width=0.9cm, minimum height=0.7cm,fill=black!20}]
		
	\node[env]		(env)	at (3,3)		{$\mathit{env}$};
	\node[proc]		(p1)		at (1,0)		{$p_1$};
	\node[proc]		(p2)		at (5,0)		{$p_2$};
			
	\path	(env)	edge[bend right=25]	node	[left,align=center]	{$~~~~~~~\mathit{stob\_REQ}_0$ \\ $~~~~\mathit{stob\_REQ}_1$ \\ $~\mathit{rtob\_ACK}_0$ \\ $\dots$ \\ $\mathit{rtob\_ACK}_k~~~~$}	(p1)
					edge[bend left=25]	node[right,align=center]	{$\mathit{stob\_REQ}_0~~~~~~$ \\ $\mathit{stob\_REQ}_1~~~~$ \\ $\mathit{rtob\_ACK}_0~$ \\ $\dots$ \\ $~~~~\mathit{rtob\_ACK}_k$}	(p2);
									
	\node		(d1)		at (1,-2	.2)	{};
	\node		(d2)		at (5,-2.2)	{};
	\path	(p1)	edge 	node	[left,align=center]	{$\mathit{btos\_ACK}_0$ \\ $\mathit{btos\_ACK}_1$}	(d1)
			(p2)	edge 	node	[right,align=center]	{$\mathit{btor\_REQ}_0$ \\ $\dots$ \\ $\mathit{btor\_REQ}_k$}	(d2);
\end{tikzpicture}}
\end{center}

\paragraph{Load Balancer.}

\begin{center}
\scalebox{0.92}{
\begin{tikzpicture}[>=latex,shorten >=0pt,auto,->,node distance=1cm,thin,every edge/.style={draw,font=\small}, initial text = , proc/.style={draw,rectangle,minimum width=0.9cm, minimum height=0.7cm, rounded corners = 5pt}, env/.style={draw,rectangle,minimum width=0.9cm, minimum height=0.7cm,fill=black!20}]
		
	\node[env]		(env)	at (4.5,2.5)		{$\mathit{env}$};
	\node[proc]		(p1)		at (0,0)		{$p_1$};
	\node[proc]		(p2)		at (3,0)		{$p_2$};
	\node			(pd)		at (6,0)		{$\dots$};
	\node[proc]		(pn)		at (9,0)		{$p_n$};
			
	\path	(env)	edge[bend right=25]	node	[left,align=center]	{$r_1,\dots,r_n$ \\ $job~~~$}	(p1)
					edge[bend right=5]	node[left,align=center]	{$r_1,\dots,r_n$ \\ $job~~~$}	(p2)
					edge[bend left=25]	node	[right,align=center]	{$r_1,\dots,r_n$ \\ $~~~job$}	(pn);

	\path	(p1)	edge[bend left=20] 	node		{$g_1$}	(p2)
					edge[bend right=40] 	node	[below]	{$g_1$}	(pd)
					edge[bend right=47] 	node	[below]	{$g_1$}	(pn)
			(p2)	edge[bend left=20]	node		{$g_2$}	(p1)
					edge[bend left=20]	node		{$g_2$}	(pd)
					edge[bend right=40] 	node	[below]	{$g_2$}	(pn)
			(pd)	edge[bend left=20]	node		{$g_{n-1}$}	(pn)
					edge[bend left=20]	node		{$g_3$}	(p2)
			(pn)	edge[bend left=20] 	node		{$g_n$}	(pd)
					edge[bend left=60] 	node	[below]	{$g_n$}	(p1);
\end{tikzpicture}}
\end{center}

\paragraph{Shift.}

\begin{center}
\scalebox{0.92}{
\begin{tikzpicture}[>=latex,shorten >=0pt,auto,->,node distance=1cm,thin,every edge/.style={draw,font=\small}, initial text = , proc/.style={draw,rectangle,minimum width=0.9cm, minimum height=0.7cm, rounded corners = 5pt}, env/.style={draw,rectangle,minimum width=0.9cm, minimum height=0.7cm,fill=black!20}]
		
	\node[env]		(env)	at (3.75,1.8)	{$\mathit{env}$};
	\node[proc]		(p1)		at (0,0)		{$p_1$};
	\node[proc]		(p2)		at (2.5,0)		{$p_2$};
	\node			(pd)		at (5,0)		{$\dots$};
	\node[proc]		(pn)		at (7.5,0)		{$p_n$};
			
	\path	(env)	edge[bend right=25]	node	[left]	{$i_1, \dots, i_n~~$}	(p1)
					edge[bend right=5]	node[left]	{$i_1, \dots, i_n~$}		(p2)
					edge[bend left=25]	node	[right]	{$~~i_1, \dots, i_n$}	(pn);
					
	\node		(d1)		at (0,-1	.5)	{};
	\node		(d2)		at (2.5,-1.5)	{};
	\node		(dn)		at (7.5,-1.5)	{};
	
	\path	(p1)	edge 	node	[left]	{$o_1$}	(d1)
			(p2)	edge 	node	[left]	{$o_2$}	(d2)
			(pn)	edge 	node	[right]	{$o_n$}	(dn);
\end{tikzpicture}}
\end{center}

\paragraph{Ripple-Carry Adder.}

\begin{center}
\scalebox{0.92}{
\begin{tikzpicture}[>=latex,shorten >=0pt,auto,->,node distance=1cm,thin,every edge/.style={draw,font=\small}, initial text = , proc/.style={draw,rectangle,minimum width=0.9cm, minimum height=0.7cm, rounded corners = 5pt}, env/.style={draw,rectangle,minimum width=0.9cm, minimum height=0.7cm,fill=black!20}]
		
	\node[env]		(env)	at (3.75,1.8)		{$\mathit{env}$};
	\node[proc]		(p1)		at (0,0)		{$p_1$};
	\node[proc]		(p2)		at (2.5,0)		{$p_2$};
	\node			(pd)		at (5,0)		{$\dots$};
	\node[proc]		(pn)		at (7.5,0)		{$p_n$};
			
	\path	(env)	edge[bend right=20]	node	[left,align=center]	{$x_1, y_1~~$}	(p1)
					edge[bend right=5]	node[left,align=center]	{$x_2,y_2$}	(p2)
					edge[bend left=20]	node	[right,align=center]	{$~~~x_n,y_n$}	(pn);
					
	\node		(d1)		at (0,-1	.5)		{};
	\node		(d2)		at (2.5,-1.5)	{};
	\node		(dd)		at (5,-1.5)	{};
	\node		(dn)		at (7.5,-1.5)	{};

	\path	(p1)	edge		node			{$c_1$}	(p2)
					edge		node[left]	{$s_1$}	(d1)
			(p2)	edge		node			{$c_2$}	(pd)
					edge		node[left]	{$s_2$}	(d2)
			(pd)	edge		node			{$c_{n-1}$}	(pn)
			(pn)	edge		node[left]	{$s_n, c_n$}	(dn);
\end{tikzpicture}}
\end{center}

\paragraph{Manufacturing Robots.}

\begin{center}
\scalebox{0.92}{
\begin{tikzpicture}[>=latex,shorten >=0pt,auto,->,node distance=1cm,thin,every edge/.style={draw,font=\small}, initial text = , proc/.style={draw,rectangle,minimum width=0.9cm, minimum height=0.7cm, rounded corners = 5pt}, env/.style={draw,rectangle,minimum width=0.9cm, minimum height=0.7cm,fill=black!20}]
		
	\node[env]		(env)	at (3,2)		{$\mathit{env}$};
	\node[proc]		(p1)		at (1,0)		{$p_1$};
	\node[proc]		(p2)		at (5,0)		{$p_2$};
			
	\path	(env)	edge[bend right=25]	node	[left,align=center]	{$\mathit{at\_crossing}_1$ \\ $\mathit{at\_crossing}_2~~~$}	(p1)
					edge[bend left=25]	node[right,align=center]	{$\mathit{at\_crossing}_1$ \\ $~~~\mathit{at\_crossing}_2$}	(p2)
			(p1)	edge[bend left=25]		node		{$\mathit{go}_1$}	(p2)
			(p2)	edge[bend left=25]		node		{$\mathit{go}_2$}	(p1);
									
	\node		(d1)		at (1,-1.5)	{};
	\node		(d2)		at (5,-1.5)	{};
	\path	(p1)	edge 	node	[left]	{$m_1$}	(d1)
			(p2)	edge 	node	[right]	{$m_2$}	(d2);
\end{tikzpicture}}
\end{center}

\subsection*{Benchmark Specifications}

Next, we give the specifications of the \emph{Ripple-Carry Adder} and the \emph{Manufacturing Robots} benchmarks. For the specifications of the \emph{$n$-ary Latch}, the \emph{Generalized Buffer}, the \emph{Load Balancer}, and the \emph{Shift} benchmarks, we refer to the benchmark descriptions of the synthesis competition SYNTCOMP~\cite{SYNTCOMP2018}.

\paragraph{Ripple-Carry Adder.}
The \emph{Ripple-Carry Adder} benchmark describes an adder that adds two bit vectors, both with $n$ bits. The inputs are the very first carry bit $c_\mathit{in}$ and the bits of the two bit vectors, $x_0, \dots, x_{n-1}$ and $y_0,\dots,y_{n-1}$. The outputs are the sum bits $s_0,\dots,s_{n-1}$ as well as the carry bits $c_0,\dots,c_{n-1}$ for every bit. The specification $\varphi$ is then given by $\varphi := \varphi_\mathit{init} \land \bigwedge_{0 < i < n} \varphi_i$, where

\begin{align*}
	\varphi_\mathit{init} := &\Globally \left(\Next c_0 \leftrightarrow \left( \left( x_0 \land y_0 \right) \lor \left( c_\mathit{in} \land \left( \left( x_0 \land \neg y_0 \right) \lor \left( \neg x_0 \land y_0 \right) \right) \right) \right) \right) \land \\
		&\Globally \left( \Next s_0 \leftrightarrow \left( \left( x_0 \land \neg y_0 \land \neg c_\mathit{in} \right) \lor \left( \neg x_0 \land y_0 \land \neg c_\mathit{in} \right)\right. \right. \\
		&~~~~~~~~~~~~~~~~ \left.\left.\lor \left( \neg x_0 \land \neg y_0 \land c_\mathit{in} \right) \lor \left( x_0 \land y_0 \land c_\mathit{in} \right) \right) \right)
\end{align*}
\begin{align*}
	\varphi_i := &\Globally \left(\Next c_i \leftrightarrow \left( \left( x_i \land y_i \right) \lor \left( c_{i-1} \land \left( \left( x_i \land \neg y_i \right) \lor \left( \neg x_i \land y_i \right) \right) \right) \right) \right) \land \\
		&\Globally \left( \Next s_i \leftrightarrow \left( \left( x_i \land \neg y_i \land \neg c_{i-1} \right) \lor \left( \neg x_i \land y_i \land \neg c_{i-1} \right)\right. \right. \\
		&~~~~~~~~~~~~~~~~ \left.\left.\lor \left( \neg x_i \land \neg y_i \land c_{i-1} \right) \lor \left( x_i \land y_i \land c_{i-1} \right) \right) \right)
\end{align*}

\paragraph{Manufacturing Robots.}
The \emph{Manufacturing Robots} benchmarks describes the robots from \Cref{sec:motivating_example}. It is parameterized in the additional objectives $\varphi_{\mathit{add}_i}$ of the robots. The specification has two parameters, $n_1$ and $n_2$. The additional objectives of the robots state that $r_i$ needs to visit the machine it is responsible for in every $n_i$-th step. Thus, as described in \Cref{sec:motivating_example}, the inputs are $\mathit{at\_crossing}_1$ and $\mathit{at\_crossing}_2$, describing that the corresponding robot is at the crossing. The outputs are $\mathit{go}_1$ and $\mathit{go}_2$, describing that the corresponding robots moves forward, as well as $m_1$ and $m_2$, describing that the corresponding robot reaches the machine it is responsible for. The full specification $\varphi$ is then given by $\varphi := \varphi_\mathit{safe} \land \bigwedge_{1 \leq i \leq 2} \left(\varphi_{\mathit{cross}_i} \land \varphi_{\mathit{add}_i}\right)$, where
\begin{align*}
	\varphi_\mathit{safe} &:= \Globally \neg \left( (\mathit{at\_crossing}_1 \land \Next \mathit{go}_1) \land (\mathit{at\_crossing}_2 \land \Next \mathit{go}_2) \right)\\
	\varphi_{\mathit{cross}_i} &:= \Globally \left( \mathit{at\_crossing}_i \rightarrow \Next \Eventually \mathit{go}_i \right)\\
	\varphi_{\mathit{add}_i} &:= m_i \land \Globally \left( m_i \rightarrow \left(\Next \neg m_i \land \Next^2 \neg m_i \land \dots \Next^{n_i-1} \neg m_i \land \Next^{n_i} m_i \right) \right)
\end{align*}
and where $\Next^x$ is syntactic sugar for applying $\Next$ $x$-times.

\subsection*{Detailed Results for the Manufacturing Robots}

\begin{table}[t]
    \centering
    \caption{Detailed results for the \emph{Manufacturing Robots} benchmark. Reported are the parameters, the implementation sizes of distributed BoSy and certifying synthesis, and the running time in seconds. We used a machine with a 3.1 GHz Dual-Core Intel Core i5 processor and 16 GB of RAM, and a timeout of 60 minutes. For distributed BoSy, the average runtime of 10 runs is given.\\}\label{table:detailed_results}

    \begin{tabular}{>{\centering}p{1.3cm}|>{\centering}p{1.9cm}>{\centering}p{1.9cm}||>{\centering}p{2.1cm}|>{\centering\arraybackslash}p{2.1cm}}
    	 & \multicolumn{2}{c||}{Strategy Size} & & \\
	     Param. & Cert. Synth. & Dist. BoSy & Cert. Synth. & Dist. BoSy \\
	     \hline\hline
	     0, 0 & 2, 2 & 2 & \textbf{1.10} & 2.45\\
	     \arrayrulecolor{black!40}\hline\arrayrulecolor{black}
	     2, 3 & 2, 6 & 6 & \textbf{1.59} & 2.91\\
	     2, 4 & 2, 4 & 4 & \textbf{1.18} & 2.43\\
	     2, 5 & 2, 10 & 10 & \textbf{3.97} & 299.11\\
	     2, 6 & 2, 6 & 6 & \textbf{1.40} & 3.25\\
	     2, 7 & 2, 14 & 14 & \textbf{76.32} & TO\\
	     2, 8 & 2, 8 & 8 & \textbf{2.47} & 5.28\\
	     2, 9 & 2, 18 & 18 & \textbf{1832.53} & TO\\
	     2, 10 & 2, 10 & 10 & \textbf{7.78} & 106.34\\
	     \arrayrulecolor{black!40}\hline\arrayrulecolor{black}
	     3, 4 & 6, 4 & 12 & \textbf{1.44} & TO\\
	     3, 5 & 6, 10 & 30 & \textbf{32.83} & TO\\
	     3, 6 & 6, 6 & 6 & \textbf{2.04} & 3.43\\
	     3, 7 & 6, 14 & 42 & \textbf{373.90} & TO\\
	     3, 8 & 6, 8 & 24 & \textbf{8.82} & TO\\
	     3, 9 & 6, 18 & 18 & TO & TO\\
	     3, 10 & 6, 10 & 30 & \textbf{30.92} & TO
    \end{tabular}
\end{table}

\begin{table}[t]
    \centering
    \caption{Detailed results for the \emph{Manufacturing Robots} benchmark. Reported are the parameters, the implementation sizes of distributed BoSy and certifying synthesis, and the running time in seconds. We used a machine with a 3.1 GHz Dual-Core Intel Core i5 processor and 16 GB of RAM, and a timeout of 60 minutes. For distributed BoSy, the average runtime of 10 runs is given.\\}\label{table:detailed_results_2}

    \begin{tabular}{>{\centering}p{1.3cm}|>{\centering}p{1.9cm}>{\centering}p{1.9cm}||>{\centering}p{2.1cm}|>{\centering\arraybackslash}p{2.1cm}}
    	 & \multicolumn{2}{c||}{Strategy Size} & & \\
	     Param. & Cert. Synth. & Dist. BoSy & Cert. Synth. & Dist. BoSy \\
	     \hline\hline
	     4, 5 & 4, 10 & 20 & \textbf{11.66} & TO\\
	     4, 6 & 4, 6 & 12 & \textbf{2.04} & TO\\
	     4, 7 & 4, 14 & 28 & \textbf{221.17} & TO\\
	     4, 8 & 4, 8 & 8 & \textbf{3.28} & 6.06\\
	     4, 9 & 4, 18 & 36 & \textbf{2911.26} & TO\\
	     4, 10 & 4, 10 & 20 & \textbf{7,93} & TO\\
	     \arrayrulecolor{black!40}\hline\arrayrulecolor{black}
	     5, 6 & 10, 6 & 30 & \textbf{26.16} & TO\\
	     5, 7 & 10, 14 & 35 & TO & TO\\
	     5, 8 & 10, 8 & 40 & \textbf{26.164} & TO\\
	     5, 9 & 10, 18 & 45 & TO & TO\\
	     5, 10 & 10, 10 & 10 & \textbf{89.87} & 335.98
    \end{tabular}
\end{table}

Lastly, we present the detailed experimental results for the \emph{Manufacturing Robots} benchmark.
The benchmark describes the robots from \Cref{sec:motivating_example}. It is parameterized in the additional objectives $\varphi_{\mathit{add}_i}$ of the robots. Hence, the interface between the processes, \ie, their certificates, stay small while the size of the strategy increases.
The specification has two parameters, $n_1$ and $n_2$. The additional objectives of the robots state the $r_i$ needs to visit the machine it is responsible for in every $n_i$-th step. 
The smallest certificates allowing for satisfying the $\varphi_\mathit{safe}$ and $\varphi_{\mathit{cross}_i}$ are of size two: The robots take turns in who is allowed to enter the crossing.
Together with the scalable additional requirements, this results in different minimal strategy sizes of the robots for certifying synthesis and distributed BoSy: While the size of the solutions of certifying synthesis only depends on the size of the certificate, which is two for all parameters, and the parameter of the respective robot, the size of the solution of distributed BoSy depends on the certificate and the parameters for \emph{both} robots. Therefore, the sizes of the solutions and thus the synthesis times of both approaches do not grow in parallel.

The detailed experimental results for the \emph{Manufacturing Robots} benchmark are given in \Cref{table:detailed_results,table:detailed_results_2}. We report on the parameters $n_1$ and $n_2$ as well as on the strategy sizes (in terms of states of the strategy transition system) and running times in seconds for both certifying synthesis and distributed BoSy. Since there do not exist dominant strategies for this benchmark, we omitted this column.
Note that the strategy size of distributed BoSy serves as the parameter in \Cref{table:results}.
For benchmarks resulting in the same strategy size, the average runtime is depicted in \Cref{table:results}. For instance the benchmarks with parameters (2,3), (2,6), and (3,6) all result in strategy size $6$ for distributed BoSy and hence the average of their running times is given for parameter $6$ for both certifying synthesis and distributed BoSy in \Cref{table:results}.

Clearly, the running times of both certifying synthesis and distributed BoSy depend highly on the (minimal) size of the synthesized strategy.
Although the sizes are similar if $n_1 = 2$, certifying synthesis clearly outperforms distributed BoSy there. For $n_1 > 2$, the advantage of certifying synthesis is even bigger: Since focusing on the certificates instead of on the strategies of the other processes, allows for abstracting from irrelevant behavior. Hence, the sizes of the minimal strategies of solutions of certifying synthesis are significantly smaller than the ones of distributed BoSy. Hence, while distributed BoSy reaches the timeout of 60 minuted for almost all instances with $n_1 > 2$, certifying synthesis yields solutions for all but one instance, for most instances even in less than one minute.

\end{document}